\definecolor{grey}{RGB}{100,100,100}
\newtheorem{theorem}{Theorem}[subsection]
\newtheorem{lemma}[theorem]{Lemma}
\newtheorem{corollary}[theorem]{Corollary}
\newenvironment{proof}{\noindent\textit{Proof.}}{\hfill$\square \\$}
\newcommand{\Efrac}[2]{ 
  \mathchoice
    {\ooalign{%
      $\genfrac{}{}{1.2pt}0{#1}{#2}$\cr%
      $\color{white}\genfrac{}{}{.4pt}0{\phantom{#1}}{\phantom{#2}}$}}%
    {\ooalign{%
      $\genfrac{}{}{1.2pt}1{#1}{#2}$\cr%
      $\color{white}\genfrac{}{}{.4pt}1{\phantom{#1}}{\phantom{#2}}$}}%
    {\ooalign{%
      $\genfrac{}{}{1.2pt}2{#1}{#2}$\cr%
      $\color{white}\genfrac{}{}{.4pt}2{\phantom{#1}}{\phantom{#2}}$}}%
    {\ooalign{%
      $\genfrac{}{}{1.2pt}3{#1}{#2}$\cr%
      $\color{white}\genfrac{}{}{.4pt}3{\phantom{#1}}{\phantom{#2}}$}}%
}
\let\fun\lambda
\newcommand{\refl}{\mathsf{refl}}
\newcommand*\sq{\mathbin{\vcenter{\hbox{\rule{.3ex}{.3ex}}}}}
\newcommand{\htri}{\smaller\triangleright}
\newcommand{\hcom}{\mathsf{hcom}}
\newcommand{\com}{\mathsf{com}}
\newcommand{\coe}{\mathsf{coe}}
\newcommand{\filler}{\mathsf{fill}}
\newcommand{\fst}{\mathsf{fst}}
\newcommand{\snd}{\mathsf{snd}}
\newcommand{\zero}{\mathsf{0}}
\renewcommand{\succ}{\mathsf{succ}}
\let\dim\mathsf
\newcommand{\at}{@\mathsf}
\newcommand{\univ}[1]{\mathcal{#1}}
\renewcommand{\angle}[1]{\left\langle #1 \right\rangle}
\newcommand{\dangle}[1]{\angle{\dim{#1}}}
\newcommand{\dimsub}[2]{\left\langle \mathsf{#1} / \mathsf{#2} \right\rangle}
\newcommand{\lto}[2]{\mathsf{#1} \rightsquigarrow \mathsf{#2}}
\title{\bfseries{Cubical informal type theory: \\ The higher groupoid structure}}
\author[1]{Bruno Bentzen}
\affil[1]{Institute of Logic and Cognition \protect\\ Sun Yat-sen University, China}
\affil[ ]{\texttt{b.bentzen@hotmail.com}}
\date{}
\newcommand\Author{Bruno Bentzen}
\let\Title\@title
\def\ps@mystyle{%
      \let\@oddfoot\@empty\let\@evenfoot\@empty
      \def\@oddhead{%
       \ifodd\value{page}\relax
          \hfill{Cubical informal type theory: the higher groupoid structure}\hfill\makebox[0pt][l]{\thepage}%
      \else
          \makebox[0pt][l]{\thepage}\hfill\Author\hfill%
      \fi%
      }%
      \let\@mkboth\markboth}
\begin{document}

\maketitle


\begin{abstract}
\noindent Following a project of developing conventions and notations for informal type theory carried out in the homotopy type theory book for a framework built out of an augmentation of constructive type theory with axioms governing higher-dimensional constructions via Voevodsky's univalance axiom and higher-inductive types, this paper proposes a way of doing informal type theory with a cubical type theory as the underlying foundation instead. To that end, we adopt a cubical type theory recently proposed by Angiuli, Hou (Favonia) and Harper, a framework with a cumulative hierarchy of univalent Kan universes, full univalence and instances of higher-inductive types. In the present paper we confine ourselves to some elementary theorems concerning the higher groupoid structure of types.
\end{abstract}


\section{Introduction} \label{intro}

Higher-dimensional type-theoretic foundations is gaining wider acceptance in the mathematical community since the emergence of homotopy type theory \cite{hottbook}, a young but promising research field and foundational language for mathematics that moves conventional type theory to higher dimensions by interpreting types as spaces, terms as points, equalities as paths and functions as continuous maps. 
One of the reasons (but certainly not the only one) for this growing interest among mathematicians can be attributed to the collective efforts of the authors of the book \emph{homotopy type theory} \cite{hottbook} to develop an informal but rigorous style of doing mathematics in natural language assuming higher-dimensional type theory as the underlying foundation. 

This `informal type theory' project, originally proposed by Peter Aczel \cite{shu13}, was carried out in the homotopy type theory book \cite{hottbook} for a framework built out of an augmentation of ordinary (one-dimensional) constructive type theory \cite{ml75} with axioms governing higher-dimensional constructions via Voevodsky's univalance axiom and higher-inductive types (henceforth `conventional homotopy type theory').
Unfortunately, however, the use of univalence and higher-inductive types as axioms is quite problematic computationally speaking, since using axioms in a type theory amounts to introducing new canonical terms without saying exactly how to compute with them. Simply put, the presence of the univalence and higher-inductive types as axioms in constructive type theory blocks computation, meaning that conventional homotopy type theory lacks all the desirable computational properties of a type theory such as canonicity \cite{hottbook}.

In response to that, Bezem et al. \cite{bch14} have constructed a model of constructive type theory that validates the univalence axiom using cubical sets (a constructive mathematical concept due to Kan \cite{kan55}) and many cubical type theories (type-theoretic paraphrases of the cubical interpretation) have been developed since then.\footnote{Not all recently developed cubical type theories \cite{cch16,lb14,ah17} are based on the same sort of cubical structure \cite{bmx}. The version of cubical sets which is used in this paper, for example, is not quite the same as Kan \cite{kan55}, since our cubical sets are symmetric and the use of symmetry is essential to ensure that we have a symmetric tensor product.} 
Cohen et al. \cite{cch16} have proposed a cubical type theory which proves univalence and has possible extensions with some higher inductive types. Licata and Brunerie \cite{lb14} have introduced a cubical type theory with instances of higher-inductive types and, very recently, Angiuli et al. \cite{chtt3} have presented a cubical type theory with a cumulative hierarchy of univalent universes, full univalence and instances of higher-inductive types. 

The aim of this paper is to offer a cubical perspective to the informal type theory of the homotopy type theory book by adopting not conventional homotopy type theory but a cubical type theory as the implicit basis of our informal reasoning. Our approach is based on the framework of Angiuli et al. \cite{chtt3}, which we shall refer to as `computational cubical type theory'. Thus, although this paper is intended to be self-contained, the reader may find it helpful to refer to \cite{chtt3,ang17,ahx} for further clarification.

\section{Computational cubical type theory} \label{cctt}

As usual in type theory, the language of computational cubical type theory is composed of ordinary terms from an extended lambda calculus with constants for the constructors and eliminators of type formers, such as

$$\fun x.M, \; M(N), \; \dangle{M,N}, \; \fst (M), \; \snd (M),\; \zero,\; \succ(M),\; ... $$
However, computational cubical type theory features a very unique sort of terms, called \textit{dimension terms}, which can be combined with ordinary terms to provide an explicit higher-dimensional treatment of the terms of the language. 

\subsection{Dimension terms}

What are exactly dimension terms? Syntactically, a dimension term is either $\mathsf{0}$ or $\mathsf{1}$ (which we sometimes abbreviate as $\mathsf{\epsilon}$) or a dimension name: \textsf{x}, \textsf{y}, \textsf{z}, ... (which we always write in \textsf{sans-serif}). Semantically, we can think of a dimension term as an abstract point given in a type-theoretical representation of the unit interval space $\left[ \mathsf{0}, \mathsf{1} \right]$. 
Just like types and their terms may contain free variables, they may also contain \textit{dimension names}. We say that types with no occurrence of dimension names are at dimension zero, which are just types in the traditional sense. We also say that a type with exactly one, two, three, ..., $n$ dimension names are respectively at dimension one, two, three, ..., $n$.
It is often helpful to mention the dimension names contained in a type explicitly. For example, if a type at dimension one contains exactly one dimension name $\dim{x}$ we may call it an $\dim{x}$-type. We may also refer to a type at dimension two containing exactly two dimension names $\dim{x}$ and $\dim{y}$ as an $(\dim{x},\dim{y})$-type, a type at dimension three containing exactly three dimension names $\dim{x}$, $\dim{y}$, $\dim{z}$ as an $(\dim{x},\dim{y},\dim{z})$-type and so on (types at dimension zero can be called $0$-types, or simply types for short, since 0-types are just types in the sense of conventional one-dimensional type theory).

A crucial feature of dimension names is substitution: given any term $M$, any dimension name $\dim x$, and any dimension term $\dim r$, we have a dimension substitution operation $M \dimsub{r}{x}$ which replaces all occurrences of $\dim x$ in $M$ with $\dim r$ (note that $\dim r$ may be either a dimension name or $\mathsf{\epsilon}$).\footnote{Since computational cubical type theory possesses a universe of types (including a cumulative hierarchy of univalent Kan universes \cite{chtt3}), dimension substitution is a well-defined operation for types (regarded as terms in a universe) as well.}

Dimension substitution allows types and terms at arbitrary dimensions to be characterized as follows. 
The trivial case is, of course, that of types and terms at dimension zero. In this context, every type $A$ represents a type point and, if $M$ is a term that belongs to $A$ (in which case we shall write $M \in A$), we say that $M$ is a point in $A$. 
In the one-dimensional case, any $\dim{x}$-type $A$ can be seen as a type line from the (zero-dimensional) type $A \dimsub{0}{x}$ to $A \dimsub{1}{x}$, and, if $M \in A$, then $M$ represents an $\dim{x}$-line in $A$ from $M \dimsub{0}{x} \in A \dimsub{0}{x}$ to $M \dimsub{1}{x} \in A \dimsub{1}{x}$. When drawing Kan composition diagrams (in the sense described in item (\ref{hcom}) of \Cref{hd-op}), lines will be often illustrated as follows:

\vspace{2mm}
\begin{tikzcd}
{} \arrow[r,shorten >= 12pt,-latex,start anchor=center,"\dim x" near start] & M \dimsub{0}{x}
       \arrow[rrrrrrrr,"M"] &&&&&&&& M \dimsub{1}{x}
\end{tikzcd}
\vspace{2mm}

\noindent Two-dimensionally speaking, we can think of types and their inhabitants in terms of squares. 
So an $(\dim{x},\dim{y})$-type $A$ can be seen as a type square with, respectively, the $\dim x$-types $A \dimsub{0}{y}$ and $A \dimsub{1}{y}$ at the top and bottom, and, respectively, the $\dim y$-types $A \dimsub{0}{x}$ and $A \dimsub{1}{x}$ at the left and right. When $M \in A$ we say that $M$ is an $(\dim{x},\dim{y})$-square in $A$ with, respectively, the $\dim x$-lines $M \dimsub{0}{y} \in A \dimsub{0}{y}$ and $M \dimsub{1}{y} \in A \dimsub{1}{y}$ at the top and bottom, and, respectively, the $\dim y$-lines $M \dimsub{0}{x} \in A \dimsub{0}{x}$ and $M \dimsub{1}{x} \in A \dimsub{1}{x}$ at the left and right. 
The following diagram summarizes the above construction:

\vspace{2mm}
\begin{tikzcd}
{} \arrow[r,shorten >= 12pt,-latex,start anchor=center,"\dim x" near start] \arrow[d,shorten >= 9pt,-latex,swap,start anchor=center,"\dim y" near start] & M \dimsub{0}{x}\dimsub{0}{y}
       \arrow[rrrrrr,"M \dimsub{0}{y}"] \arrow[dd,swap,"M \dimsub{0}{x}"] \arrow[ddrrrrrr, phantom, "M"] &&&&&& M \dimsub{1}{x} \dimsub{0}{y} \arrow[dd,"M \dimsub{1}{x}"] \\
 {} &   & \\
&   M \dimsub{0}{x}\dimsub{1}{y} \arrow[rrrrrr, swap,"M \dimsub{1}{y}"] &&&&&& M \dimsub{1}{x}\dimsub{1}{y}
\end{tikzcd}
\vspace{2mm}

\noindent 
Any three-dimensional type such as an $(\dim{x},\dim{y},\dim{z})$-type $A$ represents a type cube with the $(\dim x,\dim z)$-types $A \dimsub{0}{y}$ and $A \dimsub{1}{y}$ respectively at the top and bottom, the $(\dim z,\dim y)$-types $A \dimsub{0}{x}$ and $A \dimsub{1}{x}$ respectively at the left and right and $(\dim x,\dim y)$-types $A \dimsub{0}{z}$ and $A \dimsub{1}{z}$ respectively at the front and back. 
When $M \in A$ we have an $(\dim x,\dim y,\dim z)$-cube $M$ in $A$ with the $(\dim x,\dim z)$-squares $M \dimsub{0}{y}$ and $M \dimsub{1}{y}$ respectively at the top and bottom, the $(\dim z,\dim y)$-squares $M \dimsub{0}{x}$ and $M \dimsub{1}{x}$ respectively at the left and right and $(\dim x,\dim y)$-squares $M \dimsub{0}{z}$ and $M \dimsub{1}{z}$ respectively at the front and back. 
Such cubes will be represented as follows:

\vspace{2mm}

\begin{tikzpicture}[baseline= (a).base]
\node[scale=1.1] (a) at (0,0){

\begin{tikzcd}
{} \arrow[r,shorten >= 12pt,-latex,start anchor=center,"\dim x" near start] \arrow[dr,shorten >= 24pt,-latex,start anchor=center,"\dim z"] \arrow[d,shorten >= 15pt,-latex,swap,start anchor=center,"y" near start] & {}
  \cdot \arrow[rrrrr,"M \dimsub{0}{z}\dimsub{0}{y}"] \arrow[dr,"M \dimsub{0}{x}\dimsub{0}{y}"] \arrow[dd,swap,"M \dimsub{0}{x}\dimsub{0}{z}"] \arrow[drrrrrr, phantom] && &&&
                   \cdot \arrow[dd,swap,"M \dimsub{1}{x}\dimsub{0}{z}" near end] \arrow[dr,"M \dimsub{1}{x}\dimsub{0}{y}"] \\
{} & {} 
& \cdot \arrow[rrrrr,crossing over,"M \dimsub{1}{z}\dimsub{0}{y}" near start] &&    &&& \cdot \arrow[dd,"M \dimsub{1}{x}\dimsub{1}{z}"] \\
& \cdot \arrow[rrrrr,swap,"M \dimsub{0}{z}\dimsub{1}{y}" near end] \arrow[dr,swap,"M \dimsub{0}{x}\dimsub{1}{y}"] &&    &&& \cdot \arrow[dr,swap,"M \dimsub{1}{x}\dimsub{1}{y}"] \\
& & \cdot \arrow[rrrrr,swap,"M \dimsub{1}{z}\dimsub{1}{y}"] \arrow[uu,<-,crossing over,swap,"M \dimsub{0}{x}\dimsub{1}{z}" near end] &&  &&& \cdot
\end{tikzcd}

};
\end{tikzpicture}

\vspace{2mm}

\noindent Observe that the labels of the vertices of the cube displayed in the above diagram have been omitted for simplicity. This information, however, can be easily inferred from the labels of its edges. For example, we know that the top-left-back vertex of this cube must be strictly equal to both the right boundary of $M \dimsub{0}{z}\dimsub{0}{y}$ and the top boundary of $M \dimsub{0}{x}\dimsub{0}{z}$, so this cube's top-left-back vertex must be $M \dimsub{0}{z}\dimsub{0}{y} \dimsub{0}{x} \equiv M \dimsub{0}{x}\dimsub{0}{z} \dimsub{0}{y}$ and so on.

\subsection{Higher-dimensional operations} \label{hd-op}

Before the introduction of the identification type (the type of identifications between two inhabitants of a type), it is important to mention the main three higher-dimensional operations of computational cubical type theory:

\begin{enumerate}
 \item \textbf{Degeneration}. This allows one to trivially regard a construction at any dimension as a higher construction at the next dimension. For instance, any $(\dim x_1,...,\dim x_n)$-cube $M$ can be degenerated into an $(\dim x_1,...,\dim x_n, \dim x)$-cube $M$ with a trivial $\dim x$ face, that is, we have $A \dimsub{0}{x} \equiv A \dimsub{1}{x} \equiv A$.
When drawing Kan composition diagrams, we shall always use double lines ($=\joinrel=\joinrel=$) to indicate degenerate faces.

	\item \textbf{Coercion}. This can be seen as a cubical generalization of the transport operation from the homotopy type theory book \cite[Lem 2.3.1]{hottbook}. Essentially, coercion states that, given any $\dim x$-type $A$ and any term $M \in A \dimsub{r}{x}$, we have a term of the type $A \dimsub{r'}{x}$, called the coercion of $M$ in $A$, and denoted by $\coe^{\lto{r}{ r'}}_{\dim{x}.A} (M)$. \label{coe}
	
	\item \textbf{Homogeneous Kan composition}. Simply put, homogeneous Kan composition ensures that any open box has a lid. The simplest composition scenario can be illustrated as follows: \label{hcom}
	
\end{enumerate}

\vspace{2mm}
\begin{tikzcd}
{} \arrow[r,shorten >= 12pt,-latex,start anchor=center,"\dim x" near start] \arrow[d,shorten >= 9pt,-latex,swap,start anchor=center,"\dim y" near start] & N_0 \dimsub{0}{y} \equiv M \dimsub{0}{x}
       \arrow[rrrrr,"M"] \arrow[dd,swap,"N_0"] \arrow[ddrrrrr, phantom, "{\hcom^{\lto{0}{y}}_{\dim A}(M)({}^\dim{0}_\dim{x} \htri \dim y.N_0, \; {}^\dim{1}_\dim{x} \htri \dim y.N_1)}"] &&&&& M \dimsub{1}{x} \equiv N_1 \dimsub{0}{y} \arrow[dd,"N_1"] \\
 {} &   & \\
&   N_0 \dimsub{1}{y} \arrow[rrrrr,dotted, swap,"{\hcom^{\lto{0}{1}}_{\dim A}(M)({}^\dim{0}_\dim{x} \htri \dim y.N_0, \; {}^\dim{1}_\dim{x} \htri \dim y.N_1)}"] &&&&& N_1 \dimsub{1}{y}
\end{tikzcd}
\vspace{2mm}

\begin{itemize}
	\item[{}] The above diagram states that given any $\dim{x}$-line $M$ in a type $A$ and two $\dim{y}$-lines $N_0$ and $N_0$ in $A$ such that (i) the left boundary of $M$ is strictly equal to the left boundary of $N_0$ and (ii) the right boundary of $M$ is strictly equal to the left boundary of $N_0$, there exists an $\dim x$-line in $A$ from the right boundary of $N_0$ to the right boundary of $N_1$ (the dotted line in the diagram). The resulting new line is called the homogeneous Kan composite of $M$ with $N_0$ and $N_1$ and denoted by $\hcom^{\lto{0}{1}}_{\dim A}(M)({}^\dim{0}_\dim{x} \htri \dim y.N_0, \; {}^\dim{1}_\dim{x} \htri \dim y.N_1)$. 
Crucially, homogeneous Kan composition also asserts the existence of an $(\dim x,\dim y)$-square in $A$ with the $\dim x$-lines $M$ and $\hcom^{\lto{0}{1}}_{\dim A}(M)({}^\dim{0}_\dim{x} \htri \dim y.N_0, \; {}^\dim{1}_\dim{x} \htri \dim y.N_1)$ respectively at the top and bottom, and the $\dim y$-lines $N_0$ and $N_1$ respectively at the left and right (this is the square depicted in the above diagram). We call the resulting square the \emph{filler} of the Kan composition scenario.
	
It is worth mentioning that homogeneous Kan composition need not be limited to two-dimensional open boxes. In fact, its general form is
$$\hcom^{\lto{r}{s}}_{\dim A}(M)({}^\dim{0}_\dim{x_1} \htri \dim y.N_{\dim{x_1}0}, \; {}^\dim{1}_\dim{x_1} \htri \dim y.N_{\dim{x_0}1}, ... , \; {}^\dim{0}_\dim{x_n} \htri \dim y.N_{\dim{x_n}0}, \; {}^\dim{1}_\dim{x_n} \htri \dim y.N_{\dim{x_n}1}).$$
\end{itemize}

The constructs from items \ref{coe} and \ref{hcom} are called the \emph{Kan conditions}, and together they can be seen as a higher-dimensional representation of the elimination rule of the identification type of constructive type theory known in the homotopy type book as path induction \cite[\S 1.12.1]{hottbook} (see \Cref{pathind}).

\subsection{Identification type}

Given any $\dim x$-type $A$ and any two terms $M \in A \dimsub{0}{x}$ and $N \in A \dimsub{1}{x}$ we can construct the type $M =_{\dim x.A} N$ of identifications between the terms $M$ and $N$ in the type $A$ indexed by $\dim x$. It is important to emphasize that, unlike in conventional homotopy type theory \cite{hottbook}, the identification type of computational cubical type theory is always indexed by a particular dimension term.

The introduction, elimination, computation and uniqueness rules of the identification type are the following:

\begin{enumerate}
	\item \textbf{Introduction}. The identification type is inhabited by identifications, which are constructed by dimension abstraction. Given any $\dim x$-line $P$ in $A$ from $M$ to $N$, we write $\dangle{x} P$ to indicate the identification of $M$ and $N$ in $A$ obtained by abstracting $\dim x$ in the $\dim x$-line $P$. Consequently, all occurrences of $\dim x$ in $P$ are binded in the identification $\dangle{x} P$, and, because the resulting identification does not depend on the dimension term $\dim x$, as a general rule, identifications formed by abstracting lines ($n+1$-cubes) in a type can be seen as points ($n$-cubes) in their corresponding identification types.
	
	\item \textbf{Elimination}. Given any identification $P \in M_0 =_{\dim x.A} M_1$ and a dimension term $\dim r$, we can apply the identification $P$ to $\dim r$ to obtain an $\dim r$-line from $M_0$ to $N_1$ in $A \dimsub{r}{x}$, denoted $P \at r$.
	We also require that $P \at \mathsf{\epsilon} \equiv M_{\mathsf{\epsilon}} \in A \dimsub{\mathsf\epsilon}{x}$.
	
	\item \textbf{Computation}. We allow bound dimension names to be used interchangeably and require that all terms obtained by dimension abstraction always lead to certain lines when applied to a certain dimension terms. 
	It is often convenient to express those verbose conditions with the following $\alpha$- and $\beta$-rules:

\begin{itemize}
	\item[($\alpha$)] $\dangle{x} M \equiv \dangle{y} (M \dimsub{y}{x})$;
	\item[($\beta$)] $(\dangle{x} M)@r \equiv M \dimsub{r}{x}$.
\end{itemize}

	\item \textbf{Uniqueness}. We also endorse an extensional view of identifications, which is to say that we require the following $\eta$-rule to hold:

\begin{itemize}
		\item[($\eta$)] $\dangle{x} (M\at x) \equiv M$ (when $\dim x$ does not occur in $M$.)
\end{itemize}

\end{enumerate}
It goes without saying that this type has many similarities with the function type (except that it deals with dimension names and the function type variables).

It worth noting, however, that this type introduces an heterogeneous approach to equality that is fundamentally different from the homogeneous account found in conventional homotopy type theory. 
This is because the terms $M \in A \dimsub{0}{x}$ and $N \in A \dimsub{1}{x}$ from any identification type $M =_{\dim x.A} N$ need not share the same type, since we have $A \dimsub{0}{x} \not\equiv A \dimsub{1}{x}$ in general (but not when $A$ is a degenerate $\dim x$-type). We shall investigate heterogeneous equality in more details in \Cref{heterogroupstruct}.


\section{Higher groupoid structure} \label{hgroup}

We begin our account with the central idea of higher type theory, namely, that types can be regarded as (weak) higher groupoids, a category in which all morphisms are isomorphisms up to a higher morphism. When types are regarded as higher groupoids, identifications can be seen as morphisms. This means that we need to define reflexivity, inversion and composition operators for identifications and show that they are well-behaved in a sense that will be explained in \Cref{hgroupstruct}.
 
\subsection{Homogeneous groupoid operations} \label{hop}

We start with the definition of our identity element, the cubical counterpart of the reflexivity identification from conventional homotopy type theory \cite{hottbook}:


\begin{lemma}[Reflexivity] \label{def:refl}
For every degenerated $\dim x$-type $A$ and every $a \in A \dimsub{0}{x}$, there exists an identification of $a$ and $a$ in $A$
$$a=_{\dim{x}.A}a$$
called the reflexivity identification of $a$ and denoted $\refl_a$.
\end{lemma}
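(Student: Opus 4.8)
The plan is to define $\refl_a$ explicitly and then check that it has the stated type; there is essentially no work beyond unfolding the degeneration operation and the rules of the identification type. First note that, since $A$ is a degenerate $\dim{x}$-type, we have $A \dimsub{0}{x} \equiv A \dimsub{1}{x} \equiv A$, so the hypothesis $a \in A \dimsub{0}{x}$ just says that $a$ is a point in $A$; in particular $\dim{x}$ does not occur in $a$.

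I would then proceed as follows. Using the degeneration operation of item 1 of \Cref{hd-op}, regard the point $a$ as the degenerate $\dim{x}$-line in $A$; being degenerate, this line has both of its boundaries equal to $a$, that is, $a \dimsub{0}{x} \equiv a \dimsub{1}{x} \equiv a$. Applying the introduction rule of the identification type, abstract $\dim{x}$ in this line to obtain the identification $\dangle{x} a$, and define $\refl_a :\equiv \dangle{x} a$. It remains to verify that $\dangle{x} a \in a =_{\dim{x}.A} a$, i.e. that its two endpoints are $a$: by the $\beta$-rule, $(\dangle{x} a)@\mathsf{0} \equiv a \dimsub{0}{x}$ and $(\dangle{x} a)@\mathsf{1} \equiv a \dimsub{1}{x}$, and both are $a$ because $\dim{x}$ does not occur in $a$. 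Hence $\refl_a$ is an identification of $a$ and $a$ in $A$ indexed by $\dim{x}$, as required.

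I do not expect any genuine obstacle here: the whole content of the lemma is the observation that, cubically, reflexivity is not a new construction but merely the degenerate line on $a$, turned into a point of the identification type by dimension abstraction. The only point demanding care is bookkeeping of the degeneracy hypothesis on $A$ — it is precisely this hypothesis that forces $\dim{x}$ out of $a$, makes the endpoints literally equal to $a$, and ensures the type $a =_{\dim{x}.A} a$ is well-formed in the first place.
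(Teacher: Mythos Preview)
Your proposal is correct and follows essentially the same approach as the paper: both observe that degeneracy of $A$ makes the type $a =_{\dim{x}.A} a$ well-formed and lets one regard $a$ as a degenerate $\dim{x}$-line, then define $\refl_a :\equiv \dangle{x} a$. Your additional verification of the endpoints via the $\beta$-rule is a harmless elaboration of what the paper leaves implicit.
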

\begin{proof} By assumption, $A$ is a degenerated $\dim x$-type, so we have $A \dimsub{0}{x} \equiv A \dimsub{1}{x}$. Thus, we have both $a \in A \dimsub{0}{x}$ and $a \in A \dimsub{1}{x}$ and the type $a=_{\dim{x}.A}a$ is well-formed. Because degeneracy allow us to regard $a$ as an $\dim x$-line from $a \in A \dimsub{0}{x}$ to $a \in A \dimsub{1}{x}$ at $A$ (a degenerated line), we simply define $\refl_a :\equiv \dangle{x} a$.
\end{proof}


Now that we have a well-defined notion of our identity element we start our cubical constructions with our preliminary definitions of symmetry (inversion) and transitivity (composition) of identifications. Let us first consider the former:


\begin{lemma}[Inversion] \label{lem:inv}
For every degenerated $\dim x$-type $A$ and every $a \in A \dimsub{0}{x}$ and $b \in A \dimsub{1}{x}$, there is a function
$$(a=_{\dim{x}.A}b) \to (b=_{\dim{x}.A}a)$$
called the inverse function and denoted $p \mapsto p^{-1}$.
\end{lemma}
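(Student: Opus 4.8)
The plan is to construct $p^{-1}$ by a homogeneous Kan composition in $A$. This detour is necessary because computational cubical type theory provides no primitive for reversing a dimension name (dimension terms are only $\dim{0}$, $\dim{1}$, or names, so there is no ``$\dim{1}-\dim x$'' and hence no direct definition $\dangle{x}(p\at(\dim{1}-\dim x))$). First I would record that, since $A$ is a degenerated $\dim x$-type, $A\dimsub{0}{x}\equiv A\dimsub{1}{x}$; hence $b\in A\dimsub{0}{x}$ and $a\in A\dimsub{1}{x}$, so the target type $b=_{\dim x.A}a$ is well-formed. Given $p\in a=_{\dim x.A}b$, the elimination rule for the identification type yields, for a fresh dimension name $\dim y$, a $\dim y$-line $p\at y$ in $A$ (using $A\dimsub{y}{x}\equiv A$) running from $p\at 0\equiv a$ to $p\at 1\equiv b$.

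Next I would assemble a two-dimensional open box in $A$ (in the sense of \Cref{hd-op}) whose base is the degenerate $\dim x$-line $a$ --- a legitimate $\dim x$-line by degeneracy, exactly as in the proof of \Cref{def:refl} --- whose left $\dim y$-face is $p\at y$, and whose right $\dim y$-face is the degenerate $\dim y$-line $a$. The adjacency conditions required by $\hcom$ hold strictly: the left boundary of the base is $a$, which equals $(p\at y)\dimsub{0}{y}\equiv p\at 0$, and the right boundary of the base is $a$, which is the $\dim y=0$ boundary of the degenerate right face. I may therefore put
$$p^{-1}:\equiv\dangle{x}\,\hcom^{\lto{0}{1}}_{\dim A}(a)({}^\dim{0}_\dim{x}\htri\dim y.(p\at y),\;{}^\dim{1}_\dim{x}\htri\dim y.a).$$
By the specification of $\hcom$, the $\dim x$-line being abstracted runs from $(p\at y)\dimsub{1}{y}\equiv b$ to $a\dimsub{1}{y}\equiv a$, so $p^{-1}\in b=_{\dim x.A}a$, and the required function is $\fun p.\,p^{-1}$.

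What remains is routine bookkeeping: verifying the two boundary-compatibility constraints of $\hcom$ (done above), and then reading off the endpoints of $p^{-1}$ via the $\beta$-rule for the identification type together with the endpoint equations $p\at 0\equiv a$ and $p\at 1\equiv b$. The one conceptual step --- and the reason this is not a single substitution --- is that path reversal has to be synthesised from Kan composition; once that is granted I do not anticipate any real obstacle. I would also note in passing that the filler of this composition is an $(\dim x,\dim y)$-square in $A$ with $a$ and $p^{-1}\at x$ at the top and bottom and $p\at y$ and the degenerate line $a$ at the left and right --- a square that will be convenient when establishing the groupoid laws.
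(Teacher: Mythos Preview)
Your proposal is correct and is essentially the paper's own proof: the same open box with degenerate top $a\equiv\refl_a\at x$, left face $p\at y$, and degenerate right face $a\equiv\refl_a\at y$, composed via $\hcom^{\lto{0}{1}}_A$ to obtain the bottom $\dim x$-line from $b$ to $a$. The only difference is notational (you write the degenerate lines as $a$ rather than $\refl_a\at x$ and $\refl_a\at y$, which are $\beta$-equal), and your closing remark about the filler matches exactly what the paper records immediately after the proof.
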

\begin{proof}
As before, we have $A \dimsub{0}{x} \equiv A \dimsub{1}{x}$. Suppose that $\dim y$ is a dimension term. Degeneracy allows us to regard $A$ as an $(\dim x, \dim y )$-type and to infer that $A \dimsub{0}{y} \equiv A \dimsub{1}{y}$ as $\dim x$-types (since $A$ is an $\dim x$-type by assumption). Moreover, the types $A \dimsub{0}{x}\dimsub{0}{y}$, $A \dimsub{0}{x}\dimsub{1}{y}$, $A \dimsub{1}{x}\dimsub{0}{y}$, $A \dimsub{1}{x}\dimsub{1}{y}$ are all strictly equal, so we may use them interchangeably in all contexts.

The idea of the following proof is to observe that, since $p \in a=_{\dim{x}.A}b$, we have that $p \at y$ is a $\dim y$-line from $a \in A \dimsub{0}{x} \dimsub{0}{y}$ to $b \in A \dimsub{0}{x}\dimsub{1}{y}$ in $A$, now trivially regarded as an $(\dim x, \dim y )$-type. Similarly, $\refl_a \in a=_{\dim{x}.A}a$ gives us a (degenerated) $\dim x$-line $\refl_a \at x$ from $a \in A \dimsub{0}{x}$ to $a \in A \dimsub{1}{x}$ in $A$ and a (degenerated) $\dim y$-line $\refl_a \at y$ from $a \in A \dimsub{0}{y}$ to $a \in A \dimsub{1}{y}$ in $A$. 

Now we note that the left boundaries of $p\at{y}$ and $\refl_a \at{x}$ are both strictly equal, and that the right boundary $\refl_a \at x$ and the left boundary of $\refl_a \at  y$ are both strictly equal as well (all those boundaries are $a$). In other words, we have an open square whose faces are formed by the lines $p\at y$ (right), $\refl_a \at x$ (top) and $\refl_a \at y$ (left). By homogeneous Kan composition, this open square must have a lid (bottom), so we have an $\dim x$-line from $b \in A \dimsub{0}{x}\dimsub{1}{y}$ to $a \in A \dimsub{1}{x}\dimsub{1}{y}$ in $A$, as illustrated as the dotted line in the diagram below:

\vspace{2mm}
\begin{tikzcd}
{} \arrow[r,shorten >= 12pt,-latex,start anchor=center,"\dim x" near start] \arrow[d,shorten >= 9pt,-latex,swap,start anchor=center,"\dim y" near start] & a
       \arrow[rrrrrrrr,equal,"\refl_a\at x"] \arrow[dd,swap,"p\at y"] &&&&&&&& a \arrow[dd,equal,"\refl_a\at y"] \\
 {} &   & \\
&   b \arrow[rrrrrrrr,dotted, swap,"{\hcom^{\lto{0}{1}}_{\dim A}(\refl_a\at x)({}^\dim{0}_\dim{x} \htri \dim y.p\at y, \; {}^\dim{1}_\dim{x} \htri \dim y. \refl_a\at y)}"] &&&&&&&& a
\end{tikzcd}
\vspace{2mm}

\noindent Thus, we define $p^{-1} :\equiv \dangle{x} \hcom^{\lto{0}{1}}_{\dim A}(\refl_a\at x)({}^\dim{0}_\dim{x} \htri \dim y.p\at y, \; {}^\dim{1}_\dim{x} \htri \dim y. \refl_a\at y)$. 
\end{proof}


Recall that since $p^{-1}\at x$, that is, $\hcom^{\lto{0}{1}}_{\dim A}(\refl_a\at x)({}^\dim{0}_\dim{x} \htri \dim y.p\at y, \; {}^\dim{1}_\dim{x} \htri \dim y. \refl_a\at y)$, represents the homogeneous Kan composition of the diagram depicted above, the $(\dim x,\dim y)$-square in the diagram above is witnessed by the filler $$\hcom^{\lto{0}{y}}_{\dim A}(\refl_a\at x)({}^\dim{0}_\dim{x} \htri \dim y.p\at y, \; {}^\dim{1}_\dim{x} \htri \dim y. \refl_a\at y).$$
Because not only Kan compositions but also their corresponding fillers will be extremely relevant to our constructions later on, it is useful to have a special symbolism to talk about them in a more convenient way. This motivates the following notation: if $M$ stands for the Kan composition of an open box, then $\filler_{\dim y}(M)$ will stand for the filler of the Kan composition scenario in the dimension $\dim y$. For example, in our above proof of Lemma \ref{lem:inv} where $p^{-1}\at x$ refers to the Kan composition of the open box, the filler that witnesses the above $(\dim x,\dim y)$-square can be denoted by $\filler_{\dim y}(p^{-1}\at x)$.

Just as we constructed our inversion operation using homogeneous Kan composition, we can define our preliminary notion of composition of identifications in a similar way:


\begin{lemma}[Composition] \label{lem:comp}
For every degenerated $\dim x$-type $A$ and every $a \in A \dimsub{0}{x}$, $b \in A \dimsub{1}{x}$ and $c \in A \dimsub{1}{x}$, there is a function
$$ (a=_{\dim x.A}b) \to (b=_{\dim x.A}c) \to (a=_{\dim x.A}c) $$
denoted $ p \mapsto q \mapsto p \sq q$. We call $p \sq q$ the composition of $p$ and $q$.
\end{lemma}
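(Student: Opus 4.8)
\section*{Proof proposal}

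The plan is to reuse the recipe from the proof of \Cref{lem:inv}, but now feeding $p$ into the cap and $q$ into one of the tube faces in place of the reflexivity identifications. First I would dispense with the degeneracy bookkeeping: since $A$ is a degenerated $\dim x$-type we have $A \dimsub{0}{x} \equiv A \dimsub{1}{x}$, so $a$, $b$ and $c$ all inhabit the same type and the type $a=_{\dim x.A}c$ is well-formed; then, choosing a fresh dimension name $\dim y$ and degenerating $A$ further to an $(\dim x,\dim y)$-type, the corner types $A \dimsub{0}{x}\dimsub{0}{y}$, $A \dimsub{0}{x}\dimsub{1}{y}$, $A \dimsub{1}{x}\dimsub{0}{y}$, $A \dimsub{1}{x}\dimsub{1}{y}$ all become strictly equal and may be used interchangeably.

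Next, given $p \in a=_{\dim x.A}b$ and $q \in b=_{\dim x.A}c$, I would read off the lines $p\at x$ (an $\dim x$-line from $a$ to $b$), $q\at y$ (a $\dim y$-line from $b$ to $c$) and $\refl_a\at y$ (the degenerate $\dim y$-line at $a$, by \Cref{def:refl}), and assemble the open square in $A$, viewed as an $(\dim x,\dim y)$-type, having $p\at x$ at the top (cap), $\refl_a\at y$ at the left (the $\dim x=0$ tube face) and $q\at y$ at the right (the $\dim x=1$ tube face). The compatibility conditions required by item (\ref{hcom}) of \Cref{hd-op} hold on the nose: the $\dim y=0$ boundary of $\refl_a\at y$ is $a$, which is the $\dim x=0$ boundary of $p\at x$; and the $\dim y=0$ boundary of $q\at y$ is $b$, which is the $\dim x=1$ boundary of $p\at x$. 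Homogeneous Kan composition then supplies a lid, namely an $\dim x$-line from the $\dim y=1$ boundary of $\refl_a\at y$ (which is $a$) to the $\dim y=1$ boundary of $q\at y$ (which is $c$), and I would define
$$p \sq q :\equiv \dangle{x} \hcom^{\lto{0}{1}}_{\dim A}(p\at x)({}^\dim{0}_\dim{x} \htri \dim y.\refl_a\at y, \; {}^\dim{1}_\dim{x} \htri \dim y.q\at y),$$
which lies in $a=_{\dim x.A}c$ as required. Just as after \Cref{lem:inv}, the $(\dim x,\dim y)$-square exhibited along the way is recorded by the filler $\filler_{\dim y}((p\sq q)\at x)$, and this will be reused when we verify the groupoid laws.

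I do not expect a genuine obstacle here; the only point that takes care is the placement of the faces, so that the four corners of the open box agree up to strict equality with no further coherence needed, and the check that the cap $p\at x$ together with the tube faces $\refl_a\at y$ and $q\at y$ form a legitimate Kan composition scenario. There is also a harmless bit of arbitrariness in the convention---putting $\refl$ on the left and $q$ on the right, as above, versus the mirror-image choice---and I would keep the convention fixed here consistent with the proofs of the groupoid laws to come.
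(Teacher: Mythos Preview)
Your proposal is correct and matches the paper's proof essentially verbatim: the same open square with cap $p\at x$, left tube $\refl_a\at y$, right tube $q\at y$, and the identical definition $p \sq q :\equiv \dangle{x} \hcom^{\lto{0}{1}}_{\dim A}(p\at x)({}^\dim{0}_\dim{x} \htri \dim y.\refl_a\at y, \; {}^\dim{1}_\dim{x} \htri \dim y.q\at y)$. Your remark about the filler and the left/right convention is also in line with how the paper uses this lemma downstream.
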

\begin{proof} Once again, we assume that $\dim y$ is a dimension term so that the types $A \dimsub{0}{x}\dimsub{0}{y}$, $A \dimsub{0}{x}\dimsub{1}{y}$, $A \dimsub{1}{x}\dimsub{0}{y}$, $A \dimsub{1}{x}\dimsub{1}{y}$ are all strictly equal. 

Given the identifications $p \in a=_{\dim x.A}b$ and $q \in b=_{\dim x.A}c$, we can construct three lines in $A$: an $\dim x$-line $p\at x$ from $a \in A \dimsub{0}{x}\dimsub{0}{y}$ to $b \in A \dimsub{1}{x}\dimsub{0}{y}$, a $\dim y$-line $q\at y$ from $b \in A \dimsub{1}{x}\dimsub{0}{y}$ to $c \in A \dimsub{1}{x}\dimsub{1}{y}$ and a (degenerate) $\dim y$-line $\refl_a\at y$. It is easy to see that the left boundaries of $\refl_a\at y$ and $p\at x$ and that the right boundary of $p\at x$ and the left boundary of $q\at y$ are all strictly equal. Thus, again, we have an open square, as indicated in the following diagram:

\vspace{2mm}
\begin{tikzcd}
{} \arrow[r,shorten >= 12pt,-latex,start anchor=center,"\dim x" near start] \arrow[d,shorten >= 9pt,-latex,swap,start anchor=center,"\dim y" near start] &
    a  \arrow[rrrrrrrr,"p\at x"] \arrow[dd,swap,equal,"\refl_a\at y"] \arrow[ddrrrrrrrr, phantom] &&&&&&&& b \arrow[dd,"q\at y"] \\
 {} &   & \\
&   a \arrow[rrrrrrrr,dotted, swap,"{\hcom^{\lto{0}{1}}_{\dim A}(p\at x)({}^\dim{0}_\dim{x} \htri \dim y.\refl_a\at y, \; {}^\dim{1}_\dim{x} \htri \dim y. q\at y)}"] &&&&&&&& c
\end{tikzcd}
\vspace{2mm}

\noindent Since we can construct an $\dim x$-line from $a \in A \dimsub{0}{x}\dimsub{1}{y}$ to $c \in A \dimsub{1}{x}\dimsub{1}{y}$ by homogeneous Kan composition, the construction
$$ p \sq q :\equiv \dangle{x} \hcom^{\lto{0}{1}}_{\dim A}(p\at x)({}^\dim{0}_\dim{x} \htri \dim y.\refl_a\at y, \; {}^\dim{1}_\dim{x} \htri \dim y. q\at y)$$
gives us the required identification of $a$ and $c$ in $A$.
\end{proof}


We shall try to make our propositions as explicit as possible throughout the remainder of this paper, but, for the sake of readability,  we shall often omit labels for degenerated lines when drawing filling diagrams from now on: this information is always irrelevant since the reader should be able to correctly guess the label of any given degenerated line by checking its endpoints. 
We may also often omit assumptions about dimension terms and treat $\alpha$-, $\beta$- and $\eta$-conversions of identifications implicitly (for example, since $\refl_{M} \at r :\equiv (\dangle{x} M) \at r$ always induces a $\beta$-conversion, we shall often use the terms $\refl_{M} \at r$ and $M$ interchangeably without further comment).
When proving a proposition we may also generally refer to previous propositions (writing e.g. `by Lemma X') if we trust that the reader is able to insert the correct instances of it. 

\subsection{Homogeneous groupoid structure} \label{hgroupstruct}

Now that we have a well-defined reflexivity element and inverse and composition operations we need to know if they are well-behaved in the sense that they respect the (weak) higher groupoid structure (up to a higher identification).\footnote{For a detailed account of the homotopy interpretation of type theory see \cite{hottbook,aw09}.} More specifically, we need to make sure that the reflexivity element is a unit for inversion and composition, that inversion indeed provides inverses and that composition is associative. 
For now we shall focus on the first claim (the other ones will be made fully precise later).

What does it mean to say that the reflexivity element is a unit for inversion? The higher groupoid structure only holds up to higher identification, so this means that the reflexivity element should equal its inverse up to higher identification. What is the most general non-trivial higher identification in this case? 
Recall that the reflexivity element can be regarded as a one-dimensional identification, so we can think of this higher identification as a two-dimensional identification that simultaneously identifies two pairs of one-identifications. 
Given these points, it becomes clear that the answer is in a very particular higher identification that not only identifies the reflexivity element with its inverse but also simultaneously identifies two \textit{degenerate lines} (which is merely another word for reflexivity).

In this case, this very special sort of identification (henceforth, `globular identification') can be pictured as an $(\dim x, \dim y)$-square that has the two particular lines (the reflexivity element and its inverse) identified vertically as $\dim x$-lines and two degenerate lines on the other two $\dim y$-sides (intuitively, globular identifications are just generalized lines).
Accordingly, the following lemma can be stated as follows:


\begin{lemma}[Inversion unit] \label{lem:invu}
For every degenerate $\dim x$-type $A$ and every $a \in A \dimsub{0}{x}$, we have an identification
$$\mathsf{iu}_a \in \refl_a =_{\dim y. a =_{\dim x.A} a} {\refl_a}^{-1}.$$
\end{lemma}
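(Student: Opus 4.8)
The plan is to avoid introducing any new Kan operation and instead to read off the required $(\dim x,\dim y)$-square directly from the construction already performed in \Cref{lem:inv}. Recall that, specializing the proof of \Cref{lem:inv} to $p :\equiv \refl_a$ (so that $p\at y \equiv \refl_a\at y$), the line ${\refl_a}^{-1}\at x$ is the homogeneous Kan composite $\hcom^{\lto{0}{1}}_{\dim A}(\refl_a\at x)({}^\dim{0}_\dim{x}\htri\dim y.\refl_a\at y,\;{}^\dim{1}_\dim{x}\htri\dim y.\refl_a\at y)$, and this composite comes equipped with its filler $F :\equiv \filler_{\dim y}({\refl_a}^{-1}\at x)$, an $(\dim x,\dim y)$-square in $A$.

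First I would compute the boundary of $F$. Since $\refl_a\at x \equiv a$ and $\refl_a\at y \equiv a$, the cap and both tube faces of this Kan composition scenario are the degenerate line at $a$; hence $F$ has $F\dimsub{0}{y}\equiv a$ at the top, $F\dimsub{1}{y}\equiv{\refl_a}^{-1}\at x$ at the bottom, and the degenerate $\dim y$-lines $F\dimsub{0}{x}\equiv F\dimsub{1}{x}\equiv a$ at the left and right. This is exactly the shape of the globular identification described above: an $(\dim x,\dim y)$-square whose two vertical $\dim x$-edges are $\refl_a\at x$ and ${\refl_a}^{-1}\at x$ and whose two horizontal $\dim y$-edges are degenerate.

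Then I would assemble $F$ into the desired identification by iterated dimension abstraction. Abstracting $\dim x$, for each value of $\dim y$ the term $F$ is an $\dim x$-line from $a$ to $a$, so $\dangle{x}F \in a =_{\dim x.A} a$; regarded with $\dim y$ free, it is a $\dim y$-line in that type with endpoints $(\dangle{x}F)\dimsub{0}{y}\equiv\dangle{x}(F\dimsub{0}{y})\equiv\dangle{x}a\equiv\refl_a$ and $(\dangle{x}F)\dimsub{1}{y}\equiv\dangle{x}(F\dimsub{1}{y})\equiv\dangle{x}({\refl_a}^{-1}\at x)\equiv{\refl_a}^{-1}$, the last step by the $\eta$-rule. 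Abstracting $\dim y$ as well then gives $\mathsf{iu}_a :\equiv \dangle{y}\dangle{x}\filler_{\dim y}({\refl_a}^{-1}\at x) \in \refl_a =_{\dim y.\, a=_{\dim x.A}a} {\refl_a}^{-1}$, and the ambient type is well-formed because $a=_{\dim x.A}a$ is degenerate in $\dim y$ and contains both $\refl_a$ and, by \Cref{lem:inv} with $b:\equiv a$, the term ${\refl_a}^{-1}$.

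The one point that needs genuine care — and essentially the only obstacle — is orientation bookkeeping: I must check that in the filler square the $\dim x$- and $\dim y$-roles are arranged so that abstracting $\dim x$ first yields a line living in $a=_{\dim x.A}a$ and abstracting $\dim y$ second lands in the identification type indexed by $\dim y$, and that each $\beta$- and $\eta$-conversion is applied to the correct face. No further computation is required; the whole argument reduces to recognizing that the filler already built for \Cref{lem:inv} has precisely the boundary demanded here.
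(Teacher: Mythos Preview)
Your proposal is correct and follows essentially the same approach as the paper: both recognize that the filler $\filler_{\dim y}({\refl_a}^{-1}\at x)$ from the Kan composition in \Cref{lem:inv}, specialized to $p :\equiv \refl_a$, already has exactly the required globular boundary, and both define $\mathsf{iu}_a :\equiv \dangle{y}\dangle{x}\filler_{\dim y}({\refl_a}^{-1}\at x)$. Your write-up is in fact more explicit than the paper's about the boundary computations and the $\eta$-step recovering ${\refl_a}^{-1}$ from $\dangle{x}({\refl_a}^{-1}\at x)$.
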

\begin{proof}
We construct an $(\dim x, \dim y)$-square that simultaneously identify $\refl_a \at x$ with ${\refl_a}^{-1} \at x$ as $\dim x$-lines and $\refl_a \at y$ with $\refl_a \at y$ as $\dim y$-lines. This, however, follows immediately from Lemma \ref{lem:inv}, since the filler of the inverse of ${\refl_a}^{-1}$ witnesses the $(\dim x, \dim y)$-square

\vspace{2mm}
\begin{tikzcd}
{} \arrow[r,shorten >= 12pt,-latex,start anchor=center,"\dim x" near start] \arrow[d,shorten >= 9pt,-latex,swap,start anchor=center,"\dim y" near start] &
    a  \arrow[rrrrrrrr,equal,"\refl_a\at x"] \arrow[dd,equal] \arrow[ddrrrrrrrr, phantom, "\filler_{\dim y}({\refl_a}^{-1}\at x)"] &&&&&&&& a \arrow[dd,equal] \\
 {} &   & \\
&   a \arrow[rrrrrrrr, swap,"(\refl_a)^{-1}\at x"] &&&&&&&& a
\end{tikzcd}
\vspace{2mm}

\noindent We let $ \mathsf{iu}(a) :\equiv \dangle{y}\dangle{x} \filler_{\dim y}({\refl_a}^{-1}\at x)$ be the required identification.
\end{proof}


Similarly, composition has no effect on the reflexivity element either, for the reflexivity elements equals the composition of the reflexivity element with itself up to globular identification.


\begin{lemma}[Composition unit] \label{lem:compu}
For every degenerate $\dim x$-type $A$ and every $a \in A \dimsub{0}{x}$, we have an identification
$$\mathsf{cu}_a \in \refl_a =_{\dim y. a =_{\dim x.A} a} \refl_a \sq \refl_a.$$
\end{lemma}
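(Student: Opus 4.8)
The plan rests on one observation: at $p :\equiv q :\equiv \refl_a$ the term $p \sq q$ constructed in Lemma~\ref{lem:comp} and the term $p^{-1}$ constructed in Lemma~\ref{lem:inv} (with $p :\equiv \refl_a$) are given by one and the same instance of homogeneous Kan composition --- the one whose given $\dim x$-line is $\refl_a\at x$ and whose two $\dim y$-faces are both the degenerate line $\refl_a\at y$. Hence $\refl_a \sq \refl_a$ and ${\refl_a}^{-1}$ are literally the same term, so the type $\refl_a =_{\dim y.\, a =_{\dim x.A} a} \refl_a \sq \refl_a$ coincides with $\refl_a =_{\dim y.\, a =_{\dim x.A} a} {\refl_a}^{-1}$, which is exactly the type inhabited by $\mathsf{iu}_a$ in Lemma~\ref{lem:invu}. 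The quickest proof therefore just sets $\mathsf{cu}_a :\equiv \mathsf{iu}_a$.

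If one prefers a self-contained argument that does not invoke this coincidence, I would replay the proof of Lemma~\ref{lem:invu} verbatim. Specialize Lemma~\ref{lem:comp} to $p :\equiv q :\equiv \refl_a$, which is legitimate since $a \in A \dimsub{0}{x} \equiv A \dimsub{1}{x}$, so that by definition
$$\refl_a \sq \refl_a \equiv \dangle{x} \hcom^{\lto{0}{1}}_{\dim A}(\refl_a\at x)({}^\dim{0}_\dim{x} \htri \dim y.\refl_a\at y, \; {}^\dim{1}_\dim{x} \htri \dim y. \refl_a\at y);$$
after degenerating $A$ to an $(\dim x,\dim y)$-type, the four corner types $A \dimsub{0}{x}\dimsub{0}{y}$, $A \dimsub{0}{x}\dimsub{1}{y}$, $A \dimsub{1}{x}\dimsub{0}{y}$, $A \dimsub{1}{x}\dimsub{1}{y}$ are strictly equal, so this box is well formed. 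By the description of fillers in item~(\ref{hcom}) of \Cref{hd-op}, its filler $\filler_{\dim y}\bigl((\refl_a \sq \refl_a)\at x\bigr)$ carries $\refl_a\at x$ and $(\refl_a \sq \refl_a)\at x$ as its $\dim x$-lines (top and bottom) and the degenerate line $\refl_a\at y$ as each of its $\dim y$-lines (left and right):

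\vspace{2mm}
\begin{tikzcd}
{} \arrow[r,shorten >= 12pt,-latex,start anchor=center,"\dim x" near start] \arrow[d,shorten >= 9pt,-latex,swap,start anchor=center,"\dim y" near start] &
    a  \arrow[rrrrrrrr,equal,"\refl_a\at x"] \arrow[dd,equal] \arrow[ddrrrrrrrr, phantom, "\filler_{\dim y}((\refl_a \sq \refl_a)\at x)"] &&&&&&&& a \arrow[dd,equal] \\
 {} &   & \\
&   a \arrow[rrrrrrrr, swap,"(\refl_a \sq \refl_a)\at x"] &&&&&&&& a
\end{tikzcd}
\vspace{2mm}

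\noindent Since both $\dim x$-endpoints of every $\dim y$-slice of this square are constantly $a$, the square is precisely a globular identification, and I would set $\mathsf{cu}_a :\equiv \dangle{y}\dangle{x} \filler_{\dim y}\bigl((\refl_a \sq \refl_a)\at x\bigr)$, whose $\dim y$-endpoints are $\refl_a$ and $\refl_a \sq \refl_a$ by the $\beta$- and $\eta$-rules.

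Either way there is essentially no obstacle here; the single point demanding a moment's care --- the same routine check that makes Lemma~\ref{lem:invu} go through --- is to confirm that, on the nose, the two $\dim y$-faces of the open box defining $\refl_a \sq \refl_a$ really are both the degenerate line $\refl_a\at y$, so that the resulting filler qualifies as a globular identification. This is immediate from the definition of $\sq$ once $p$ and $q$ are both taken to be $\refl_a$.
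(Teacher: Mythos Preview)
Your proposal is correct, and your second (self-contained) argument is exactly the paper's own proof: the paper simply says the result is ``straightforward using the filler of the composition of reflexivity $\refl_a$ with $\refl_a$ from Lemma~\ref{lem:comp},'' which is precisely your $\mathsf{cu}_a :\equiv \dangle{y}\dangle{x}\,\filler_{\dim y}\bigl((\refl_a \sq \refl_a)\at x\bigr)$. Your first observation --- that $\refl_a \sq \refl_a$ and ${\refl_a}^{-1}$ are definitionally the same Kan composite, so one may take $\mathsf{cu}_a :\equiv \mathsf{iu}_a$ --- is a pleasant extra remark not made in the paper, but since $\mathsf{iu}_a$ was itself defined as the filler of that very composite, it yields the identical term and does not amount to a genuinely different route.
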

\begin{proof}
The proof is straightforward using the filler of the composition of reflexivity $\refl_a$ with $\refl_a$ from Lemma \ref{lem:comp}.
\end{proof}


We now wish to show that the reflexivity element is a right and left unit for composition up to globular identification. Because the proof is simpler for the right unit, we shall consider it first.

\begin{lemma}[Right unit] \label{lem:ru}
For every degenerate $\dim x$-type $A$ and every $a \in A \dimsub{0}{x}$ and $b \in A \dimsub{1}{x}$ we have an identification
$$\mathsf{ru}_p \in p =_{\dim y.a =_{\dim x.A} b} p \sq \refl_b$$
for any $p \in a =_{\dim x.A} b$.
\end{lemma}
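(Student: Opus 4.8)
The plan is to recycle the filler produced by the Kan composition that defines $p \sq \refl_b$, exactly as the unit laws \Cref{lem:invu} and \Cref{lem:compu} were obtained from the fillers of $\refl_a^{-1}$ and $\refl_a \sq \refl_a$. Recall from the proof of \Cref{lem:comp} that
$$p \sq \refl_b :\equiv \dangle{x}\hcom^{\lto{0}{1}}_{\dim A}(p\at x)({}^\dim{0}_\dim{x}\htri\dim y.\refl_a\at y,\;{}^\dim{1}_\dim{x}\htri\dim y.\refl_b\at y),$$
so the open box whose lid is $(p \sq \refl_b)\at x$ carries the $\dim x$-line $p\at x$ on top and the two $\dim y$-lines $\refl_a\at y$ and $\refl_b\at y$ on the left and right. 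The key observation is that, since $A$ is a degenerate $\dim x$-type, both of these $\dim y$-lines are themselves degenerate, namely $\refl_a\at y \equiv a$ and $\refl_b\at y \equiv b$; this holds for \emph{any} $p$, not just for $p \equiv \refl$. Consequently the filler of this composition scenario is already an $(\dim x,\dim y)$-square with the $\dim x$-lines $p\at x$ and $(p \sq \refl_b)\at x$ at the top and bottom and the degenerate $\dim y$-lines at $a$ and at $b$ on the left and right, which is exactly the shape of a globular identification from $p$ to $p \sq \refl_b$.

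Concretely, I would (i) unfold $p \sq \refl_b$ as above and display the filler square $\filler_{\dim y}\!\left((p \sq \refl_b)\at x\right)$ together with its four faces; (ii) note that abstracting $\dim x$ turns this square into a $\dim y$-line in the type $a =_{\dim x.A} b$, since its $\dim x$-faces are the degenerate lines at $a$ and at $b$, and that this $\dim y$-line runs from $\dangle{x}(p\at x) \equiv p$ to $\dangle{x}((p \sq \refl_b)\at x) \equiv p \sq \refl_b$ by the $\eta$-rule; and (iii) set
$$\mathsf{ru}_p :\equiv \dangle{y}\dangle{x}\,\filler_{\dim y}\!\left((p \sq \refl_b)\at x\right).$$

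I expect essentially no obstacle here, which is precisely why the right unit is treated before the left unit: in $p \sq \refl_b$ the nontrivial line $p$ sits on the top face of the composition box (an $\dim x$-face), so the two side $\dim y$-faces come out degenerate automatically and the filler can be used verbatim. By contrast, for the forthcoming left unit the composite $\refl_a \sq p$ places $p\at y$ on the \emph{right} side of the box, so its filler has a non-degenerate $\dim y$-face and one will need an extra Kan composition (or a reindexing argument) to straighten it into a globular identification. The only point that needs a little care in the present proof is to record, via $\alpha$- and $\beta$-conversion, that $\refl_b\at y \equiv b$, so that the right face of the filler is genuinely degenerate.
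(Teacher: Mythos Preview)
Your proposal is correct and follows essentially the same approach as the paper: both observe that the filler $\filler_{\dim y}((p \sq \refl_b)\at x)$ of the composition scenario from \Cref{lem:comp} already has degenerate $\dim y$-sides (since $\refl_a\at y \equiv a$ and $\refl_b\at y \equiv b$), hence is directly a globular square identifying $p$ with $p \sq \refl_b$, and then take the double dimension abstraction. Your write-up is in fact more explicit than the paper's, spelling out the $\beta$/$\eta$ steps and the contrast with the left unit that the paper only discusses later.
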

\begin{proof} We need to construct an identification of $p$ and $p \sq \refl_b$ in the identification type $a =_{\dim x.A}b$, or, in other words, an $(\dim x,\dim y)$-square having $p \at x$ and $(p \sq \refl_b) \at x$ as $\dim x$-lines and $\refl_a \at y$ and $\refl_b \at y$ as degenerate $\dim y$-lines. 
But the existence of this square follows from Lemma \ref{lem:comp}:

\vspace{2mm}
\begin{tikzcd}
{} \arrow[r,shorten >= 12pt,-latex,start anchor=center,"\dim x" near start] \arrow[d,shorten >= 9pt,-latex,swap,start anchor=center,"\dim y" near start] &
    a  \arrow[rrrrrrrr,"p\at x"] \arrow[dd,equal] \arrow[ddrrrrrrrr, phantom, "\filler_{\dim y}((p \sq \refl_b)\at x)"] &&&&&&&& b \arrow[dd,equal] \\
 {} &   & \\
&   a \arrow[rrrrrrrr, swap,"(p \sq \refl_b)\at x"] &&&&&&&& b
\end{tikzcd}
\vspace{2mm}

\noindent The required identification follows by (double) dimension abstraction on the above square.
\end{proof}

We still need a few lemmas to show that the left unit property is true too, so we shall postpone it to the end of this section. 
One very useful proposition that can be proven at this point, however, is that the composition of any identification with its inverse equals the reflexivity element up to globular identification. But the proof is a little more involved than those of the preceding lemmas. So far we have only encountered one-extent Kan composition problems, which means that we have only considered open squares. Even when we were explicitly required to construct a two-dimensional identification (as in Lemmas \ref{lem:invu} to \ref{lem:ru}), we were able to found alternative ways to deal with the constructions without having to appeal to higher-dimensional Kan composition scenarios. 

It is now time to tackle truly higher-dimensional problems. From now on we will often work with more complex (two-extent) Kan composition scenarios. We start with the following lemma:


\begin{lemma}[Right cancellation] \label{lem:rc}
For every degenerate $\dim x$-type $A$ and every $a \in A \dimsub{0}{x}$ and $b \in A \dimsub{1}{x}$ we have an identification
$$\mathsf{rc}_p \in \refl_a =_{\dim y.a =_{\dim x.A} a} p \sq p^{-1}$$
for any $p \in a =_{\dim x.A} b$.
\end{lemma}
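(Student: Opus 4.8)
The plan is to produce an $(\dim x,\dim y)$-square in $A$ whose top and bottom $\dim x$-lines are $\refl_a\at x$ and $(p \sq p^{-1})\at x$ and whose left and right $\dim y$-lines are both the degenerate line at $a$; a double dimension abstraction of that square then yields the globular identification $\mathsf{rc}_p$, and the $\beta$- and $\eta$-rules recover the asserted endpoints $\refl_a$ and $p \sq p^{-1}$. Unlike in Lemmas~\ref{lem:invu}--\ref{lem:ru}, the top and bottom lines of this square are genuinely two-dimensional data coming from two \emph{different} Kan composites, so no single open square will do: it has to be obtained as the lid of a two-extent Kan composition --- an $\hcom$ over a three-dimensional open box in a fresh principal dimension $\dim z$.

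First I would collect the two fillers already on hand. By Lemma~\ref{lem:inv}, the inverse carries its filler $F :\equiv \filler_{\dim y}(p^{-1}\at x)$, an $(\dim x,\dim y)$-square with top the degenerate line at $a$, bottom $p^{-1}\at x$, left $p\at y$ and right the degenerate line at $a$. By Lemma~\ref{lem:comp}, the composite carries its filler $H :\equiv \filler_{\dim y}((p\sq p^{-1})\at x)$, an $(\dim x,\dim y)$-square with top $p\at x$, bottom $(p\sq p^{-1})\at x$, left the degenerate line at $a$ and right $p^{-1}\at y$. I would take $H$ itself as the cap (the $\dim z = 0$ face) of the three-dimensional box and choose its four $\dim z$-walls so that they slide the two non-degenerate edges of $H$ --- its top $p\at x$ and its right $p^{-1}\at y$ --- down to degenerate lines while keeping its bottom $(p \sq p^{-1})\at x$ fixed: on the $\dim y = 0$ face, the square $F$ with its two dimension names interchanged, whose $\dim z = 0$ and $\dim z = 1$ edges are $p\at x$ and $\refl_a\at x$; on the $\dim y = 1$ and $\dim x = 0$ faces, the evident degenerate squares on $(p\sq p^{-1})\at x$ and on $a$; and on the $\dim x = 1$ face, a square whose $\dim z = 0$ and $\dim z = 1$ edges are $p^{-1}\at y$ and the degenerate line at $a$ and two of whose adjacent sides equal $p^{-1}$ while the remaining two are degenerate --- a connection-type square on $p^{-1}$. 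This last square is not primitive in our cubical setting, so it must first be built from an auxiliary one-extent Kan composition on $p^{-1}$. With all six faces in place, the lid of the resulting $\hcom$ is the $(\dim x,\dim y)$-square we want, and we set $\mathsf{rc}_p :\equiv \dangle{y}\dangle{x}(\text{lid})$.

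The hard part is the cubical bookkeeping: verifying that the cap $H$, the four $\dim z$-walls, and hence the lid all agree on every shared edge and corner. This is exactly what pins down which dimension name has to play which role in $F$, in $H$ and in the auxiliary square, and what forces the explicit $\hcom$-construction of the connection-type square on $p^{-1}$. Once those compatibilities are checked, the boundary of the lid reads off as claimed, so, after $\alpha$-, $\beta$- and $\eta$-conversion, $\mathsf{rc}_p$ has the stated type.
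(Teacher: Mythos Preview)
Your open box is coherent and, once all five tube faces are supplied, its lid does have the required boundary; so the strategy is sound. It is, however, a genuinely different orientation of the cube from the paper's. The paper takes the \emph{degenerate} $(\dim x,\dim z)$-square $p\at x$ as the cap and composes in the direction $\dim y$, with tubes
\[
{}^{\dim 0}_{\dim x}\triangleright \dim y.\,a,\quad
{}^{\dim 1}_{\dim x}\triangleright \dim y.\,p^{-1}\at y,\quad
{}^{\dim 0}_{\dim z}\triangleright \dim y.\,\filler_{\dim x}(p^{-1}\at y),\quad
{}^{\dim 1}_{\dim z}\triangleright \dim y.\,\filler_{\dim y}((p\sq p^{-1})\at x).
\]
Every wall is then either degenerate or one of the two fillers already in hand, so no auxiliary square is needed at all. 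Your choice of $H$ as the cap and $\dim z$ as the principal direction forces the $\dim x=1$ wall to be the connection-type square on $p^{-1}$ (two adjacent $p^{-1}$-edges, two adjacent degenerate edges), which is an extra obligation the paper's orientation sidesteps.

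One point to tighten: that connection square is \emph{not} obtainable from a single one-extent $\hcom$ in this cartesian setting. If you take cap $p^{-1}\at y$ at $\dim z=0$ with tubes $\dim y=0\mapsto p^{-1}\at z$ and $\dim y=1\mapsto a$, the filler has the right boundary on three sides, but its $\dim z=1$ edge is the composite $\hcom^{\lto 0 1}_A(p^{-1}\at y)(\ldots)$, a loop at $a$ that is not strictly $a$. You need either a two-extent $\hcom$ or a second correction step to force that fourth edge to be degenerate. This is a known construction, so your argument can be completed, but it is more work than you indicate --- and more work than the paper's route, which is the cleaner of the two.
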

\begin{proof}
We shall construct the required identification by a two-extent homogeneous Kan composition. In the one-dimensional case, it is enough to form an open square to perform a homogeneous Kan composition, while in the two-dimensional case we are required to form an open cube. In other words, we are expected to form an open $(\dim x, \dim y, \dim z)$-cube by finding one $(\dim x, \dim z)$-square (top), two $(\dim z, \dim y)$-squares (left and right) and two $(\dim x, \dim y)$-squares (back and front) whose faces all agree up to strict equality before we can obtain its lid: the composite $(\dim x, \dim z)$-square that forms the bottom face of the cube.

For this particular lemma this means that we must construct an open cube whose composite is an $(\dim x, \dim z)$-square with $\refl_a \at x$ and $p \sq p^{-1} \at x$ as $\dim x$-lines and $a$ in both degenerate $\dim z$-lines. Now consider the following open $(\dim x, \dim y, \dim z)$-cube (its composite is illustrated as the shaded face in the diagram below)

\vspace{2mm}
\begin{tikzcd}
[execute at end picture={
\foreach \Valor/\Nombre in
{
  tikz@f@9-3-2/a,tikz@f@9-4-3/b,tikz@f@9-3-9/c,tikz@f@9-4-10/d%
}
{
\coordinate (\Nombre) at (\Valor);
}
\fill[pattern=north east lines,pattern color=grey,opacity=0.3]
  (b) -- (a) -- (c) -- (d) -- cycle;
  }
]
{} \arrow[r,shorten >= 12pt,-latex,start anchor=center,"\dim x" near start] \arrow[dr,shorten >= 24pt,-latex,start anchor=center,"\dim z"] \arrow[d,shorten >= 15pt,-latex,swap,start anchor=center,"y" near start] & {}
  a \arrow[rrrrrrr,"p \at x"] \arrow[dr,equal] \arrow[dd,equal] \arrow[drrrrrrrr, phantom] && &&&&&
                   b \arrow[dd,swap,"p^{-1} \at y" near end] \arrow[dr,equal] \\
{} & {} 
& a \arrow[rrrrrrr,crossing over,"p \at x" near start] &&    &&&&&    b \arrow[dd,"p^{-1} \at y"] \\
& a \arrow[rrrrrrr,equal] \arrow[dr,equal] &&    &&&&&   a \arrow[dr,equal] \\
& & a \arrow[rrrrrrr,swap,"(p \sq p^{-1}) \at x"] \arrow[uu,<-,crossing over,equal] &&  &&&&& a
\end{tikzcd}
\vspace{2mm}


\noindent whose top face is the $(\dim x, \dim z)$-square,

\vspace{2mm}
\begin{tikzcd}
{} \arrow[r,shorten >= 12pt,-latex,start anchor=center,"\dim x" near start] \arrow[d,shorten >= 9pt,-latex,swap,start anchor=center,"\dim z" near start] &
    a  \arrow[rrrrrrrr,"p\at x"] \arrow[dd,equal] \arrow[ddrrrrrrrr, phantom, "\refl_p \at z \at x \equiv p \at x"] &&&&&&&& b \arrow[dd,equal] \\
 {} &   & \\
&   a \arrow[rrrrrrrr,swap,"p\at x"] &&&&&&&& b
\end{tikzcd}
\vspace{2mm}


\noindent where $\refl_p \in p =_{\dim y. a =_{\dim x.A} b} p$, left and right faces are respectively the $(\dim z, \dim y)$-squares 

\vspace{2mm}
\begin{tikzcd}
{} \arrow[r,shorten >= 12pt,-latex,start anchor=center,"\dim z" near start] \arrow[d,shorten >= 9pt,-latex,swap,start anchor=center,"\dim y" near start] &
    a  \arrow[rrrr,equal] \arrow[dd,equal] \arrow[ddrrrr, phantom, "\refl_{\refl_a} \at z \at y \equiv a"] &&&& a \arrow[dd,equal] & b \arrow[rrrrr,equal] \arrow[dd,swap,"p^{-1} \at y"] \arrow[ddrrrrr, phantom,"\refl_{p^{-1}} \at z \at y \equiv p^{-1} \at y"] &&&&& b \arrow[dd,"p^{-1} \at  y"]\\
 {} &   & &&& {} &   & \\
&   a \arrow[rrrr,equal] &&&& a & a \arrow[rrrrr,equal] &&&&& a
\end{tikzcd}
\vspace{2mm}


\noindent and back and front are respectively the $(\dim x, \dim y)$-squares

\vspace{2mm}
\begin{tikzcd}
{} \arrow[r,shorten >= 12pt,-latex,start anchor=center,"\dim x" near start] \arrow[d,shorten >= 9pt,-latex,swap,start anchor=center,"\dim y" near start] &
    a  \arrow[rrrr,"p \at x"] \arrow[dd,equal] \arrow[ddrrrr, phantom, "\filler_{\dim x}(p^{-1}\at y)"] &&&& b \arrow[dd,"p^{-1} \at  y"] & a \arrow[rrrrr,"p\at x"] \arrow[dd,equal] \arrow[ddrrrrr, phantom,"\filler_{\dim y}(p \sq p^{-1}\at x)"] &&&&& b \arrow[dd,"p^{-1} \at  y"]\\
 {} &   & &&& {} &   & \\
&   a \arrow[rrrr,equal] &&&& a & a \arrow[rrrrr, swap,"p \sq p^{-1}\at x"] &&&&& a
\end{tikzcd}
\vspace{2mm}

\noindent both of which are given by the fillers of the homogeneous Kan composition scenarios from Lemmas \ref{lem:inv} and \ref{lem:comp}, respectively. 


Now we note that the bottom $(\dim x, \dim z)$-square of the open cube described above is the homogeneous Kan composite
$$\hcom^{\lto{0}{1}}_{\dim A}(p\at x)({}^\dim{0}_\dim{x} \htri \dim y. a, \; {}^\dim{1}_\dim{x} \htri \dim y. p^{-1} \at y, \;
{}^\dim{0}_\dim{z} \htri \dim y.\filler_{\dim x}(p^{-1}\at y), \; {}^\dim{1}_\dim{z} \htri \dim y. \filler_{\dim y}((p \sq p^{-1})\at x)).$$

\end{proof}


As we shall see in details next section, our inversion and composition operations have a very limited applicability. Recall that both are only well-defined for degenerate one-dimensional types, so we cannot, in general, invert an arbitrary identification 
$$\alpha \in p =_{\dim y.r \at y =_{\dim x.A} s \at y} q$$
because the $\dim y$-type $r \at y =_{\dim x.A} s \at y$ need not be degenerate (in fact, $r \at y$ and $s \at y$ need not be degenerate lines either). However, it seems natural to expect that $\alpha$ could somehow be `swapped' into an identification inhabiting the type
$$(q=_{\dim{y}.{r}^{-1} \at y =_{\dim x.A} {s}^{-1} \at y}p)$$
as long as $A$ is a degenerate $(\dim x, \dim y)$-type.

This is indeed the case, but requires another proof.


\begin{lemma}[Square swap] \label{lem:swap}
For every degenerate $(\dim x, \dim y)$-type $A$ and every $p \in a =_{\dim x.A \dimsub{0}{y}} b$, $q \in c =_{\dim x.A \dimsub{1}{y}} d$, $r \in a =_{\dim y.A \dimsub{0}{x}} c$, $s \in b =_{\dim y.A \dimsub{1}{x}} d$, there exists an operation 
$$\mathsf{swap}_\alpha \in (p=_{\dim{y}.r \at y =_{\dim x.A} s \at y}q) \to (q=_{\dim{y}.{r}^{-1} \at y =_{\dim x.A} {s}^{-1} \at y}p)$$
where $a \in A\dimsub{0}{x}\dimsub{0}{y}$, $b \in A\dimsub{1}{x}\dimsub{0}{y}$, $c \in A\dimsub{0}{x}\dimsub{1}{y}$, $d \in A\dimsub{1}{x}\dimsub{1}{y}$.
\end{lemma}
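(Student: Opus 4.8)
The plan is to build a three-dimensional open box and take its homogeneous Kan composite, in direct analogy with the way $\inv$ was defined in Lemma \ref{lem:inv} but with an extra dimension carried along. Recall that $\alpha \in p =_{\dim y. r\at y =_{\dim x.A} s\at y} q$ may be viewed, after double dimension abstraction, as an $(\dim x, \dim y, \dim z)$-square: applying $\alpha$ at a fresh $\dim z$ gives $\alpha \at z \in p =_{\dim y. r\at y =_{\dim x.A} s\at y} q$, whose application at $\dim x$ is an $(\dim x,\dim y,\dim z)$-cube $\alpha\at z\at x$ with $p\at x$ on the $\dim z = 0$ face, $q\at x$ on the $\dim z = 1$ face, $r\at y$ on the $\dim x = 0$ face and $s\at y$ on the $\dim x = 1$ face; because $A$ is a degenerate $(\dim x,\dim y)$-type, all of $A\dimsub{\epsilon}{x}\dimsub{\epsilon'}{y}$ are strictly equal and every coherence we need among corner points will hold on the nose. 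The target identification $\mathsf{swap}_\alpha(\alpha)$ must be, again after double abstraction, an $(\dim x, \dim y, \dim z)$-square with $q\at x$ at $\dim z = 0$, $p\at x$ at $\dim z = 1$, ${r}^{-1}\at y$ at $\dim x = 0$ and ${s}^{-1}\at y$ at $\dim x = 1$. So the job is to deform $\alpha$ along $\dim z$ so that its $\dim x = 0$ and $\dim x = 1$ faces become ${r}^{-1}\at y$ and ${s}^{-1}\at y$, while swapping the $\dim z$-endpoints from $(p,q)$ to $(q,p)$.

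First I would set up the $\dim z$-directed composition: take the composite over $\dim z$ from $0$ to $1$ with $\alpha\at z\at x$ (or rather its $\dim w$-extrusion, using a fresh extent name $\dim w$) as the bottom/tube input. The four tube faces (in the $\dim x$ and $\dim y$ directions) should be chosen as follows. On the $\dim x = 0$ tube face I would place the $(\dim y,\dim w)$-square witnessing $\refl_a \equiv$ the reflexivity identification swapped onto $r^{-1}$ — concretely, the filler $\filler_{\dim w}({r}^{-1}\at y)$ coming from the Kan composition scenario that defines $r^{-1}$ in Lemma \ref{lem:inv}, which runs between $r\at y$ and ${r}^{-1}\at y$. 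On the $\dim x = 1$ tube face I would place $\filler_{\dim w}({s}^{-1}\at y)$ analogously. The remaining two tube faces, in the $\dim y$ direction, I would take to be appropriate fillers of the inversion Kan squares for the $\dim z$-endpoint lines $p$ and $q$ (that is, $\filler$ of the diagrams witnessing $p^{-1}$ and $q^{-1}$), arranged so that the six faces of the resulting open cube agree up to strict equality where they meet. The lid of this open cube is then the required $(\dim x,\dim z)$- (really $(\dim x, \dim y)$- after carrying $\dim y$ along) square, and I would define $\mathsf{swap}_\alpha(\alpha) :\equiv \dangle{y}\dangle{x}$ of the resulting homogeneous Kan composite.

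The main obstacle I anticipate is purely bookkeeping: getting all six faces of the open $(\dim x,\dim y,\dim z,\dim w)$-cube — i.e. the tube plus cap of the $\dim w$-composition — to match up on the nose along their shared edges, since each inversion filler has its own distinguished degenerate boundary, and the endpoints $p$, $q$, $r$, $s$ interact at the corners. Because $A$ is assumed degenerate in both $\dim x$ and $\dim y$, every corner point collapses to a single term (one of $a,b,c,d$), and the $\alpha$-, $\beta$-, $\eta$-conversions let us silently identify $\refl_{M}\at \mathsf{r}$ with $M$; this is what makes the matching go through, but it still requires care to state which filler goes on which face and with which orientation. Once the open cube is assembled correctly, the existence of the lid is immediate from homogeneous Kan composition, and the boundary computation — using the $\beta$-rule to read off the $\dim x = \epsilon$ and $\dim z = \epsilon$ faces of the composite — shows it has exactly the boundary ${r}^{-1}\at y$, ${s}^{-1}\at y$, $q\at x$, $p\at x$ demanded by the statement. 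I would record the explicit $\hcom$ term, in the style of the displayed composite at the end of the proof of Lemma \ref{lem:rc}, as the definition of $\mathsf{swap}_\alpha$.
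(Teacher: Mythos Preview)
Your cube does not assemble. The key mismatch is between your cap and your $\dim x$-tube faces. You take $\alpha$ (an $(\dim x,\dim y)$-square, not a three-dimensional object as you write) as the cap; then the cap at $\dim x=0$ is $r\at y$. But the inversion filler $\filler(r^{-1})$ from Lemma~\ref{lem:inv} is the square with $\refl_a$ on its cap side, $r$ on its left side, $\refl_a$ on its right side, and $r^{-1}$ on its lid side. So the edge of that filler meeting the cap is $\refl_a$, not $r$, and the adjacency condition fails. The same problem occurs at $\dim x=1$ with $s$. Your proposed tubes in the remaining direction, ``inversion fillers for $p$ and $q$,'' are also off: those fillers relate $p$ to $p^{-1}$ and $q$ to $q^{-1}$, but nowhere in the target does $p^{-1}$ or $q^{-1}$ appear; the $\dim x$-lines $p$ and $q$ simply exchange positions, they are not inverted.

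The paper's construction avoids both problems by \emph{not} using $\alpha$ as the cap. Instead the cap is the degenerate $(\dim x,\dim z)$-square $p\at z$ (constant in $\dim x$), and the composition direction is $\dim y$. Because the cap is degenerate along $\dim x$ and equals $a$ at $\dim z=0$ and $b$ at $\dim z=1$, it matches the $\refl$ tops of $\filler_{\dim y}(r^{-1}\at x)$ and $\filler_{\dim y}(s^{-1}\at x)$, which are placed on the $\dim z=0$ and $\dim z=1$ tubes. The square $\alpha\at y\at z$ goes on the $\dim x=0$ tube (its $\dim y=0$ face is $p\at z$, matching the cap), and the $\dim x=1$ tube is just the degenerate $p\at z$ again. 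All adjacencies then hold on the nose, and the lid at $\dim y=1$ has exactly the boundary $q\at z$, $p\at z$, $r^{-1}\at x$, $s^{-1}\at x$; abstracting $\dangle{x}\dangle{z}$ gives $\mathsf{swap}_\alpha$. The moral: place the inversion fillers so that their degenerate (cap) side meets a degenerate cap, and use $\alpha$ as a tube face rather than as the cap.
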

\begin{proof}
The idea of the proof is to define a function that maps any $(\dim x, \dim y)$-square

\vspace{2mm}
\begin{tikzcd}
{} \arrow[r,shorten >= 12pt,-latex,start anchor=center,"\dim x" near start] \arrow[d,shorten >= 9pt,-latex,swap,start anchor=center,"\dim y" near start] &
    a  \arrow[rrrrrrrr,"p\at x"] \arrow[dd,"r \at y"] \arrow[ddrrrrrrrr, phantom, "\alpha \at y \at x"] &&&&&&&& b \arrow[dd,"s\at y"] \\
 {} &   & \\
&   c \arrow[rrrrrrrr, swap,"q\at x"] &&&&&&&& d
\end{tikzcd}
\vspace{2mm}

\noindent to a `swapped' $(\dim x, \dim y)$-square

\vspace{2mm}
\begin{tikzcd}
{} \arrow[r,shorten >= 12pt,-latex,start anchor=center,"\dim x" near start] \arrow[d,shorten >= 9pt,-latex,swap,start anchor=center,"\dim y" near start] &
    c  \arrow[rrrrrrrr,"q\at x"] \arrow[dd,"r^{-1} \at y"] \arrow[ddrrrrrrrr, phantom, "\mathsf{swap}_\alpha \at y \at x"] &&&&&&&& d \arrow[dd,"s^{-1} \at y"] \\
 {} &   & \\
&   a \arrow[rrrrrrrr, swap,"p\at x"] &&&&&&&& b
\end{tikzcd}
\vspace{2mm}

In order to obtain $\mathsf{swap}_\alpha \at y \at x$ we perform a two-extent homogeneous Kan composition on the open $(\dim x, \dim y, \dim z)$-cube formed by $p \at z$ at the top, $\alpha\at y \at z$ at the left, $p \at z$ at the right, $\filler_{\dim y}({r}^{-1}\at x)$ at the back and $\filler_{\dim y}({s}^{-1}\at x)$ at the front.

\vspace{2mm}
\begin{tikzcd}
[execute at end picture={
\foreach \Valor/\Nombre in
{
  tikz@f@15-3-2/a,tikz@f@15-4-3/b,tikz@f@15-3-9/c,tikz@f@15-4-10/d%
}
{
\coordinate (\Nombre) at (\Valor);
}
\fill[pattern=north east lines,pattern color=grey,opacity=0.3]
  (b) -- (a) -- (c) -- (d) -- cycle;
  }
]
{} \arrow[r,shorten >= 12pt,-latex,start anchor=center,"\dim x" near start] \arrow[dr,shorten >= 24pt,-latex,start anchor=center,"\dim z"] \arrow[d,shorten >= 15pt,-latex,swap,start anchor=center,"y" near start] & {}
  a \arrow[rrrrrrr,equal] \arrow[dr,"p \at z"] \arrow[dd,swap,"r \at y"] \arrow[drrrrrrrr, phantom] && &&&&&
                   a \arrow[dd,equal] \arrow[dr,"p \at z"] \\
{} & {} 
& b \arrow[rrrrrrr,crossing over,equal] &&    &&&&&    b \arrow[dd,equal] \\
& c \arrow[rrrrrrr,"{r}^{-1} \at x" near end] \arrow[dr,swap,"q \at z"] && &&&&& a \arrow[dr,"p \at z" near start] \\
& & d \arrow[rrrrrrr,swap,"{s}^{-1} \at x"] \arrow[uu,<-,crossing over,swap,"s \at y" near end] &&  &&&&& b
\end{tikzcd}
\vspace{2mm}

We thus define $\mathsf{swap}_\alpha$ by
$$\dangle{x}\dangle{z} \hcom^{\lto{0}{1}}_{\dim A}(p\at z)({}^\dim{0}_\dim{x} \htri \dim y. \alpha\at y \at z, \; {}^\dim{1}_\dim{x} \htri \dim y. p \at z, \;
{}^\dim{0}_\dim{z} \htri \dim y.\filler_{\dim y}({r}^{-1}\at x), \; {}^\dim{1}_\dim{z} \htri \dim y. \filler_{\dim y}({s}^{-1}\at x)).$$

\end{proof}


One application of the square swap lemma is in the following proof that a double inverted identification equals the original identification up to globular identification (to put it another way, double inversion is essentially redundant).


\begin{lemma}[Inversability] \label{lem:inversab}
For every degenerate $\dim x$-type $A$ and every $a \in A \dimsub{0}{x}$ and $b \in A \dimsub{1}{x}$, we have an identification
$$\mathsf{inv}_p \in p =_{\dim y. a =_{\dim x.A} b} (p^{-1})^{-1}$$
for any $p \in a =_{\dim x.A} b$.
\end{lemma}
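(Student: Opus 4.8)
The plan is to construct $\mathsf{inv}_p$ by a two-extent homogeneous Kan composition, just as in the proof of Lemma~\ref{lem:rc}: I would assemble an open $(\dim x, \dim y, \dim z)$-cube whose composite is an $(\dim x, \dim z)$-square carrying $p\at x$ and $(p^{-1})^{-1}\at x$ as its two $\dim x$-lines and degenerate $\dim z$-lines $a$ and $b$ on the remaining sides, and then take $\mathsf{inv}_p$ to be that composite, with its two boundary dimensions abstracted.

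The piece of data that forces $(p^{-1})^{-1}\at x$ onto the composite is the filler $\filler_{\dim y}((p^{-1})^{-1}\at x)$ furnished by Lemma~\ref{lem:inv} when it is applied to $p^{-1} \in b =_{\dim x.A} a$; recall this is an $(\dim x, \dim y)$-square whose $\dim x$-lines are $\refl_b\at x$ and $(p^{-1})^{-1}\at x$, with one degenerate side and $p^{-1}\at y$ as the other. Using it as one face of the cube, the lone non-degenerate boundary it contributes is $p^{-1}\at y$, and cancelling a boundary of exactly this shape is what the square swap lemma is for; this is where Lemma~\ref{lem:swap} enters the argument, being applied to the filler $\filler_{\dim y}(p^{-1}\at x)$ of Lemma~\ref{lem:inv} so as to produce the adjacent face. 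The remaining faces of the cube record the (degenerate) slide of $p\at x$ and are, up to the obvious degeneracies, forced by the ones already chosen.

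The real work — and the main obstacle — is not any new idea but the cubical bookkeeping: one has to fix the orientations and every boundary line of all six faces of the open cube so that any two adjacent faces agree strictly along their shared edge, and in particular set up the square swap application so that the resulting face has precisely the boundary that the filler face and the base demand. Once the open cube has been assembled consistently its composite is determined, and $\mathsf{inv}_p$ is obtained from it by double dimension abstraction, the routine $\alpha$-, $\beta$- and $\eta$-conversions (for instance $\refl_b\at x \equiv b$) being used silently throughout.
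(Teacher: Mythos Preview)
Your overall strategy---a two-extent homogeneous Kan composition producing the desired $(\dim x,\dim z)$-square---is right, and it is also the paper's strategy. The specific faces you propose, however, do not assemble into a cube whose composite carries $p\at x$. The filler $\filler_{\dim y}((p^{-1})^{-1}\at x)$ does contribute $(p^{-1})^{-1}\at x$ and has $p^{-1}\at y$ as its only non-degenerate remaining edge, as you say. But the single swap of $\filler_{\dim y}(p^{-1}\at x)$ has edges $p^{-1}\at x$, $\refl_a\at x$, $p^{-1}\at y$ and $(\refl_a)^{-1}\at y$: nowhere does $p\at x$ appear. With no face carrying $p\at x$, the composite cannot have $p\at x$ on its $\dim z\!=\!\dim 0$ side. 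If instead you make the back face the square $p\at x$ degenerate in~$\dim y$, then the top and left faces become genuinely non-degenerate (the top, for instance, must interpolate between $p\at x$ and the constant $b$), so the claim that the remaining faces are ``degenerate/forced'' fails either way.

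The paper's assembly is different and more elaborate. The back face $X$ is the $\dim x$-filler of the \emph{composition} $(p^{-1}\sq p)\at y$; this square has $p\at x$ as its bottom edge, which is how $p\at x$ reaches the composite. The front face $Y$ is obtained from $X$ by a \emph{double} application of the swap lemma: two swaps restore the very same side edges $p^{-1}\at y$ and $(p^{-1}\sq p)\at y$ that $X$ had, while replacing the bottom $p\at x$ by $(p^{-1})^{-1}\at x$. Because $X$ and $Y$ then share their left and right edges exactly, those two side faces of the cube really are degenerate; but the top face $R$ is not, and the paper builds it by a separate Kan composition using the inversion-unit square of Lemma~\ref{lem:invu} together with the filler of $(\refl^{-1})^{-1}$. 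So the ``bookkeeping'' you flag as the real work in fact hides a substantive idea---use a composition filler and a double swap so that the two opposing $(\dim x,\dim y)$-faces share their side boundaries exactly---that your sketch does not yet contain.
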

\begin{proof}
By homogeneous Kan composition.
It suffices to find an $(\dim x, \dim z)$-square (which, for future reference, we shall call $R$) for the top face of the open cube

\vspace{2mm}
\begin{tikzcd}
{} \arrow[r,shorten >= 12pt,-latex,start anchor=center,"\dim x" near start] \arrow[d,shorten >= 9pt,-latex,swap,start anchor=center,"\dim z" near start] &
    a  \arrow[rrrrrrrr,equal,"\refl_a \at x"] \arrow[dd,equal] \arrow[ddrrrrrrrr, phantom, "R"] &&&&&&&& a \arrow[dd,equal] \\
 {} &   & \\
&   a \arrow[rrrrrrrr,swap,"(\refl_a^{-1})^{-1}\at x"] &&&&&&&& a
\end{tikzcd}
\vspace{2mm}

\noindent and two $(\dim x, \dim y)$-squares (which we shall call $X$ and $Y$) for the back and front faces of the open cube

\vspace{2mm}
\begin{tikzcd}
{} \arrow[r,shorten >= 12pt,-latex,start anchor=center,"\dim x" near start] \arrow[d,shorten >= 9pt,-latex,swap,start anchor=center,"\dim y" near start] &
    b  \arrow[rrrr,equal,"\refl_b \at x"] \arrow[dd,swap,"p^{-1} \at  y"] \arrow[ddrrrr, phantom, "X"] &&&& b \arrow[dd,swap,"p^{-1} \sq p \at  y"] & b \arrow[rrrrr,"(\refl_b^{-1})^{-1} \at x"] \arrow[dd,"p^{-1} \at  y"] \arrow[ddrrrrr, phantom,"Y"] &&&&& b \arrow[dd,swap,"p^{-1} \sq p \at  y"]\\
 {} &   & &&& {} &   & \\
&   a \arrow[rrrr, swap,"p\at x"] &&&& b & a \arrow[rrrrr, swap,"(p^{-1})^{-1}\at x"] &&&&& a
\end{tikzcd}
\vspace{2mm}

\noindent (the remaining sides of the open cube will be composed of degenerate squares.)

Note that $R$ basically states that the lemma is true when $p$ is $\refl_a$, that is to say, it represents the identification
$$\refl_a =_{\dim y.a =_{\dim x.A} a} (\refl_a^{-1})^{-1}.$$
We construct $R$ by homogeneous Kan composition on an the open cube formed by the inversion unit square from Lemma \ref{lem:inv} at the right, the filler of the Kan composite $(\refl_a^{-1})^{-1}$ from Lemma \ref{lem:inv} at the front (and degenerate squares at the remaining faces), as can be seen in the diagram

\vspace{2mm}
\begin{tikzcd}
[execute at end picture={
\foreach \Valor/\Nombre in
{
  tikz@f@18-3-2/a,tikz@f@18-4-3/b,tikz@f@18-3-9/c,tikz@f@18-4-10/d%
}
{
\coordinate (\Nombre) at (\Valor);
}
\fill[pattern=north east lines,pattern color=grey,opacity=0.3]
  (b) -- (a) -- (c) -- (d) -- cycle;
  }
]
{} \arrow[r,shorten >= 12pt,-latex,start anchor=center,"\dim x" near start] \arrow[dr,shorten >= 24pt,-latex,start anchor=center,"\dim z"] \arrow[d,shorten >= 15pt,-latex,swap,start anchor=center,"y" near start] & {}
  a \arrow[rrrrrrr,equal] \arrow[dr,equal] \arrow[dd,swap,equal,"\refl_a\at y"]  && &&&&&
                   a \arrow[dd,equal] \arrow[dr,equal] \\
{} & {} 
& a \arrow[rrrrrrr,crossing over,equal] &&    &&&&&    a \arrow[dd,equal] \\
& a \arrow[rrrrrrr,equal] \arrow[dr,equal] &&    &&&&&   a \arrow[dr,equal] \\
& & a \arrow[rrrrrrr,swap,"({\refl_a}^{-1})^{-1} \at x"] \arrow[uu,<-,crossing over,swap,"{\refl_a}^{-1} \at y" near end] &&  &&&&& a
\end{tikzcd}
\vspace{2mm}

\noindent More precisely, we let

$$R :\equiv \hcom^{\lto{0}{1}}_{\dim A}(a)({}^\dim{0}_\dim{x} \htri \dim y. \mathsf{iu}(a)\at z \at y, \; {}^\dim{1}_\dim{x} \htri \dim y. a, \;
{}^\dim{0}_\dim{z} \htri \dim y.a, \; {}^\dim{1}_\dim{z} \htri \dim y. \filler_{\dim y}((\refl_a^{-1})^{-1}\at x)).$$

The construction of $X$ is an immediate consequence of Lemma \ref{lem:comp}: we define it as the $\dim x$-filler of the Kan composite $(p^{-1} \sq p) \at y$ (regarded as a $\dim y$-line)

\vspace{2mm}
\begin{tikzcd}
{} \arrow[r,shorten >= 12pt,-latex,start anchor=center,"\dim x" near start] \arrow[d,shorten >= 9pt,-latex,swap,start anchor=center,"\dim y" near start] &
    b  \arrow[rrrrrrrr,equal,"\refl_b\at x"] \arrow[dd,"p^{-1}\at y"] \arrow[ddrrrrrrrr, phantom, "X :\equiv \filler_{\dim x}(p^{-1} \sq p \at y)"] &&&&&&&& b \arrow[dd,"p^{-1} \sq p \at y"] \\
 {} &   & \\
&   a \arrow[rrrrrrrr, swap,"p\at x"] &&&&&&&& b
\end{tikzcd}
\vspace{2mm}

\noindent The key to the construction of $Y$ is to observe that it is very similar to $X$, except that the $\dim y$-lines that forms the left and right faces of $X$ are not double inverted like the left and right faces of $Y$ are. Square swapping (Lemma \ref{lem:swap}) provides a method of double inverting the left and right faces of $X$ without altering its top and bottom faces. This can be done in two simple steps. 

First we obtain the $(\dim x, \dim y)$-square

\vspace{2mm}
\begin{tikzcd}
{} \arrow[r,shorten >= 12pt,-latex,start anchor=center,"\dim x" near start] \arrow[d,shorten >= 9pt,-latex,swap,start anchor=center,"\dim y" near start] &
    b  \arrow[rrrrrrrr,"\refl_b^{-1}\at x"] \arrow[dd,"p^{-1} \sq p \at y"] \arrow[ddrrrrrrrr, phantom, "\mathsf{swap}(\dangle{x}\dangle{y} X) \at x \at y"] &&&&&&&& b \arrow[dd,"p^{-1} \at y"] \\
 {} &   & \\
&   b \arrow[rrrrrrrr, swap,"p^{-1}\at x"] &&&&&&&& a
\end{tikzcd}
\vspace{2mm}

\noindent and then swap it again into

\vspace{2mm}
\begin{tikzcd}
{} \arrow[r,shorten >= 12pt,-latex,start anchor=center,"\dim x" near start] \arrow[d,shorten >= 9pt,-latex,swap,start anchor=center,"\dim y" near start] &
    b  \arrow[rrrrrrrr,"(\refl_b^{-1})^{-1}\at x"] \arrow[dd,"p^{-1}\at y"] \arrow[ddrrrrrrrr, phantom, "\mathsf{swap}(\mathsf{swap}(\dangle{x}\dangle{y} X)) \at x \at y"] &&&&&&&& b \arrow[dd,"p^{-1} \sq p \at y"] \\
 {} &   & \\
&   a \arrow[rrrrrrrr, swap,"(p^{-1})^{-1}\at x"] &&&&&&&& b
\end{tikzcd}
\vspace{2mm}

\noindent to obtain the required $(\dim x, \dim y)$-square $Y$.

Now that we have $R$, $X$ and $Y$, we can define $\mathsf{inv}_p$ by
$$\dangle{z}\dangle{x} \hcom^{\lto{0}{1}}_{\dim A}(R)({}^\dim{0}_\dim{x} \htri \dim y. p^{-1} \at y, \; {}^\dim{1}_\dim{x} \htri \dim y. (p^{-1} \sq p) \at y, \;
{}^\dim{0}_\dim{z} \htri \dim y.X, \; {}^\dim{1}_\dim{z} \htri \dim y.Y)$$
which basically represents the homogeneous Kan composite of the open cube

\vspace{2mm}
\begin{tikzcd}
[execute at end picture={
\foreach \Valor/\Nombre in
{
  tikz@f@22-3-2/a,tikz@f@22-4-3/b,tikz@f@22-3-9/c,tikz@f@22-4-10/d%
}
{
\coordinate (\Nombre) at (\Valor);
}
\fill[pattern=north east lines,pattern color=grey,opacity=0.3]
  (b) -- (a) -- (c) -- (d) -- cycle;
  }
]
{} \arrow[r,shorten >= 12pt,-latex,start anchor=center,"\dim x" near start] \arrow[dr,shorten >= 24pt,-latex,start anchor=center,"\dim z"] \arrow[d,shorten >= 15pt,-latex,swap,start anchor=center,"y" near start] & {}
  b \arrow[rrrrrrr,equal] \arrow[dr,equal] \arrow[dd,swap,"p^{-1} \at y"] \arrow[drrrrrrrr, phantom] && &&&&&
                   b \arrow[dd,swap,"p^{-1} \sq p \at y" near start] \arrow[dr,equal] \\
{} & {} 
& b \arrow[rrrrrrr,crossing over,"{{\refl_b}^{-1}}^{-1} \at x" near start] &&    &&&&&    b \arrow[dd,"p^{-1} \sq p \at y"] \\
& a \arrow[rrrrrrr,"p \at x" near end] \arrow[dr,swap,equal] &&    &&&&&   b \arrow[dr,swap,equal] \\
& & a \arrow[rrrrrrr,swap,"(p^{-1})^{-1} \at x"] \arrow[uu,<-,crossing over,swap,"p^{-1} \at y" near end] &&  &&&&& b
\end{tikzcd}
\vspace{2mm}

\noindent formed by $R$ at the top, $p^{-1} \at y$ and $(p^{-1} \sq p) \at y$ respectively at the left and right and $X$ and $Y$ respectively at the back and front.

\end{proof}


We hope that the reader is starting to get a feel for proofs by Kan composition and the interplay between two-dimensional identifications and squares at this point. 
Next we want to show that that left cancellation property holds as well (whose right counterpart is Lemma \ref{lem:rc}), but for this we will need the following lemma.

\begin{lemma}[Opposite identification] \label{lem:op}
For every degenerate $\dim x$-type $A$ and every $a \in A \dimsub{0}{x}$ and $b \in A \dimsub{1}{x}$, we have identifications
\begin{enumerate}[(i)]
	\item $\mathsf{op1}_{p} \in p =_{\dim y.a =_{\dim x.A} p^{-1} \at y} a$
	\item $\mathsf{op2}_{p} \in p^{-1} =_{\dim y.b =_{\dim x.A} p \at y} b$
\end{enumerate}
for any $p \in a =_{\dim x.A} b$.
\end{lemma}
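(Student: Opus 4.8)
The plan is to obtain (i) essentially for free from the inversion lemma, and to reduce (ii) to (i) by a single correction step.

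\emph{Part (i).} Recall that Lemma~\ref{lem:inv} builds $p^{-1}$ as the composite of a homogeneous Kan composition whose filler is precisely the $(\dim{x},\dim{y})$-square $\filler_{\dim{x}}(p^{-1}\at y)$ that already appears as the back face in the proof of Lemma~\ref{lem:rc}: it has $\dim{x}$-lines $p\at x$ (top) and $\refl_a\at x$ (bottom), and $\dim{y}$-lines $\refl_a\at y$ (left) and $p^{-1}\at y$ (right). Reading this square as a $\dim{y}$-line of $\dim{x}$-identifications, its $\dim{y}$-endpoints are $p$ and $\refl_a$, and the ambient $\dim{y}$-type $a=_{\dim{x}.A}p^{-1}\at y$ restricts to $a=_{\dim{x}.A}b$ at $\dim{y}=\dim{0}$ and to $a=_{\dim{x}.A}a$ at $\dim{y}=\dim{1}$, so a double dimension abstraction already yields the required identification:
$$\mathsf{op1}_p :\equiv \dangle{y}\dangle{x}\filler_{\dim{x}}(p^{-1}\at y).$$
No new Kan composition is needed for this part.

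\emph{Part (ii).} Running the construction of (i) with $p^{-1}$ in place of $p$ gives $\mathsf{op1}_{p^{-1}} \in p^{-1} =_{\dim{y}.\,b=_{\dim{x}.A}(p^{-1})^{-1}\at y}\refl_b$, which matches the target $\mathsf{op2}_p \in p^{-1} =_{\dim{y}.\,b=_{\dim{x}.A}p\at y}\refl_b$ except that the moving endpoint of the ambient family is $(p^{-1})^{-1}\at y$ rather than $p\at y$. This is exactly the gap closed by Lemma~\ref{lem:inversab}: the identification $\mathsf{inv}_p \in p =_{\dim{z}.\,a=_{\dim{x}.A}b}(p^{-1})^{-1}$ supplies a $\dim{z}$-line from $p$ to $(p^{-1})^{-1}$, and hence a $\dim{z}$-line $(\mathsf{inv}_p\at z)\at y$ from $p\at y$ to $(p^{-1})^{-1}\at y$ in $A$. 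I would then transport $\mathsf{op1}_{p^{-1}}$ back along this data, for instance by
$$\mathsf{op2}_p :\equiv \coe^{\lto{1}{0}}_{\dim{z}.\,\left(p^{-1} =_{\dim{y}.\,b=_{\dim{x}.A}(\mathsf{inv}_p\at z)\at y}\refl_b\right)}(\mathsf{op1}_{p^{-1}}),$$
or, staying within Kan composition as elsewhere in this section, by a two-extent homogeneous Kan composition in a fresh dimension $\dim{z}$ whose cap is the square underlying $\mathsf{op1}_{p^{-1}}$, whose only non-degenerate tube is the $(\dim{y},\dim{z})$-square $(\mathsf{inv}_p\at z)\at y$, and whose remaining faces are degenerate.

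The main obstacle is this family mismatch in (ii); everything else is bookkeeping — checking that the chosen cube (or the coerced family) has strictly agreeing faces, using $\refl_a\at y\equiv a$, $\refl_b\at y\equiv b$ and the $\beta$-rule throughout, and verifying that the double abstractions have the claimed endpoints. Part (i), by contrast, should cost nothing beyond recognising that its witnessing square was already produced in the proof of Lemma~\ref{lem:inv}.
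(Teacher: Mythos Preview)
Your proposal is correct, and for part (i) it is genuinely simpler than the paper's argument. The paper constructs $\mathsf{op1}_p$ by a two-extent Kan composition: the top is the filler $\filler_{\dim z}((p\sq p^{-1})\at x)$ from Lemma~\ref{lem:comp}, the front is the inverted right-cancellation square $(\mathsf{rc}_p)^{-1}\at y\at x$, and the remaining faces are degenerate or constant in one direction. You instead observe that the desired $(\dim x,\dim y)$-square---with $p\at x$ on top, $\refl_a\at x$ on the bottom, $\refl_a\at y$ on the left and $p^{-1}\at y$ on the right---is already the filler $\filler_{\dim x}(p^{-1}\at y)$ produced in Lemma~\ref{lem:inv} (and reused verbatim as the back face of the cube in Lemma~\ref{lem:rc}), so a double abstraction suffices. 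This is a real shortcut: you avoid invoking Lemma~\ref{lem:rc} altogether.

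For part (ii) the paper only says ``by a similar argument as above'', i.e.\ another direct two-extent cube. Your reduction to $\mathsf{op1}_{p^{-1}}$ followed by a coercion along $\mathsf{inv}_p$ is also valid (Lemma~\ref{lem:inversab} precedes Lemma~\ref{lem:op} and does not depend on it, so there is no circularity), and the family you coerce along has the correct endpoints. The trade-off is that you import the rather heavy inversability lemma where the paper would presumably use a single bespoke cube; your route is conceptually cleaner but not lighter in total work.
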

\begin{proof} Both proofs use homogeneous Kan composition.
\begin{enumerate}[(i)]
	\item By Lemma \ref{lem:comp} we have an $(\dim x, \dim z)$-square $\filler_{\dim z}((p \sq p^{-1}) \at x)$ formed from the filler of the composition of $p$ and $p^{-1}$. By applying Lemma \ref{lem:rc} to $p$ we obtain an $(\dim x, \dim y)$-square $\mathsf{rc}_{p} \at y \at x$. We now construct an open $(\dim x, \dim y, \dim z)$ cube as follows: we put $\filler_{\dim z}((p \sq p^{-1}) \at x)$ at the top $(\dim x, \dim z)$-square, $a$ and $p^{-1} \at z$ at the left and right $(\dim z, \dim y)$-squares, respectively, and $p \at x$ and the square formed by inversion of $\mathsf{rc}_{p}$, that is, $(\mathsf{rc}_{p})^{-1} \at y \at x$ at the back and front $(\dim x, \dim y)$-squares, respectively.

\vspace{2mm}
\begin{tikzcd}
[execute at end picture={
\foreach \Valor/\Nombre in
{
  tikz@f@23-3-2/a,tikz@f@23-4-3/b,tikz@f@23-3-9/c,tikz@f@23-4-10/d%
}
{
\coordinate (\Nombre) at (\Valor);
}
\fill[pattern=north east lines,pattern color=grey,opacity=0.3]
  (b) -- (a) -- (c) -- (d) -- cycle;
  }
]
{} \arrow[r,shorten >= 12pt,-latex,start anchor=center,"\dim x" near start] \arrow[dr,shorten >= 24pt,-latex,start anchor=center,"\dim z"] \arrow[d,shorten >= 15pt,-latex,swap,start anchor=center,"y" near start] & {}
  a \arrow[rrrrrrr,"p \at x"] \arrow[dr,equal] \arrow[dd,equal] \arrow[drrrrrrrr, phantom] && &&&&&
                   b \arrow[dd,equal] \arrow[dr,"p^{-1} \at z"] \\
{} & {} 
& a \arrow[rrrrrrr,crossing over,"p \sq p^{-1} \at x" near start] &&    &&&&&    a \arrow[dd,equal] \\
& a \arrow[rrrrrrr,"p \at x" near end] \arrow[dr,equal] &&    &&&&&   b \arrow[dr,swap,"p^{-1} \at z" near start] \\
& & a \arrow[rrrrrrr,equal] \arrow[uu,<-,crossing over,equal] &&  &&&&& a
\end{tikzcd}
\vspace{2mm}

Thus, we define $\mathsf{op1}_p$ by

$$\dangle{z}\dangle{x} \hcom^{\lto{0}{1}}_{\dim A}(\filler_{\dim z}((p \sq p^{-1}) \at x))({}^\dim{0}_\dim{x} \htri \dim y. a, \; {}^\dim{1}_\dim{x} \htri \dim y. p^{-1} \at z, \;
{}^\dim{0}_\dim{z} \htri \dim y. p \at x, \; {}^\dim{1}_\dim{z} \htri \dim y.(\mathsf{rc}_{p})^{-1} \at y \at x).$$

	\item This may be verified by a similar argument as above.
\end{enumerate}
\end{proof}

We are now able to prove the left cancellation property, which states that the composition of an identification with its inverse equals the reflexivity element up to globular identification.


\begin{lemma}[Left cancellation] \label{lem:lc}
For every degenerate $\dim x$-type $A$ and every $a \in A \dimsub{0}{x}$ and $b \in A \dimsub{1}{x}$ we have an identification
$$\mathsf{lc}_p \in \refl_b =_{\dim y. b =_{\dim x.A} b} p^{-1} \sq p$$
for any $p \in a =_{\dim x.A} b$.
\end{lemma}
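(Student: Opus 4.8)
The plan is to reduce left cancellation for $p$ to \emph{right} cancellation for $p^{-1}$ (Lemma \ref{lem:rc}) and then correct the result using inversability (Lemma \ref{lem:inversab}). Since $p^{-1} \in b =_{\dim x.A} a$ and $A$ is degenerate (so $a$ and $b$ lie in both $A\dimsub{0}{x}$ and $A\dimsub{1}{x}$), instantiating Lemma \ref{lem:rc} at $p^{-1}$ yields
$$\mathsf{rc}_{p^{-1}} \in \refl_b =_{\dim y.\, b =_{\dim x.A} b} p^{-1} \sq (p^{-1})^{-1},$$
which is the desired identification except that its right endpoint is $p^{-1} \sq (p^{-1})^{-1}$ rather than $p^{-1} \sq p$. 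Lemma \ref{lem:inversab} supplies $\mathsf{inv}_p \in p =_{\dim y.\, a =_{\dim x.A} b} (p^{-1})^{-1}$, hence, by inversion in the $0$-type $a =_{\dim x.A} b$, an identification $\mathsf{inv}_p^{-1}$ from $(p^{-1})^{-1}$ to $p$; so it remains only to push this identification along $p^{-1} \sq (-)$ and compose.

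Concretely, I would feed the line $\mathsf{inv}_p^{-1}\at z$ directly into the definition of composition from Lemma \ref{lem:comp}: the term $p^{-1} \sq (\mathsf{inv}_p^{-1}\at z) = \dangle{x}\hcom^{\lto{0}{1}}_{\dim A}(p^{-1}\at x)({}^\dim{0}_\dim{x} \htri \dim y.\refl_b\at y,\ {}^\dim{1}_\dim{x} \htri \dim y.(\mathsf{inv}_p^{-1}\at z)\at y)$ has a free dimension name $\dim z$, and since $\dim z$ occurs neither in $p^{-1}$ nor in $A$, substituting $\dim z := \dim{0}$ and $\dim z := \dim{1}$ gives $p^{-1} \sq (p^{-1})^{-1}$ and $p^{-1} \sq p$ respectively; thus $\dangle{z}(p^{-1} \sq (\mathsf{inv}_p^{-1}\at z)) \in (p^{-1} \sq (p^{-1})^{-1}) =_{\dim z.\, b =_{\dim x.A} b} (p^{-1} \sq p)$. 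Composing this with $\mathsf{rc}_{p^{-1}}$ (as identifications in the $0$-type $b =_{\dim x.A} b$, viewed as a degenerate line type) I would set
$$\mathsf{lc}_p :\equiv \mathsf{rc}_{p^{-1}} \sq \dangle{z}(p^{-1} \sq (\mathsf{inv}_p^{-1}\at z)).$$
An alternative, presumably closer to the author's intent given that Lemma \ref{lem:op} is announced as a prerequisite, is to build $\mathsf{lc}_p$ by a single two-extent homogeneous Kan composition on an open $(\dim x,\dim y,\dim z)$-cube whose top face is the composition filler $\filler_{\dim z}((p^{-1}\sq p)\at x)$ from Lemma \ref{lem:comp}, one side face is the square underlying $\mathsf{op2}_p$ from Lemma \ref{lem:op}, and the remaining faces are degenerate squares together with one auxiliary square contracting the $p\at z$-edge (itself obtained from Lemma \ref{lem:op} or a short nested composition); this runs closely parallel to the proof of Lemma \ref{lem:inversab}.

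The main obstacle is purely bookkeeping. On the first route one must check that $p^{-1}\sq(\mathsf{inv}_p^{-1}\at z)$ behaves as claimed at the faces $\dim z = \dim{0}$ and $\dim z = \dim{1}$ (which is where $\beta$-conversion enters) and that the outer composition of $\mathsf{rc}_{p^{-1}}$ with $\dangle{z}(p^{-1}\sq(\mathsf{inv}_p^{-1}\at z))$ is type-correct; both are routine. On the Kan-composition route the obstacle is the familiar one of exhibiting all six faces of the open cube and verifying that they agree on shared boundaries up to strict equality, the delicate point being the auxiliary edge-contracting square. Either way, the degenerate case $p \equiv \refl_b$, should it be needed explicitly as a base case, is a short composition built from Lemma \ref{lem:compu} and Lemma \ref{lem:invu} and poses no real difficulty.
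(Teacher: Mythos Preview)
Your first route is correct and is a genuinely different argument from the paper's. You reduce left cancellation for $p$ to right cancellation for $p^{-1}$ and then repair the mismatch $(p^{-1})^{-1}$ versus $p$ by feeding $\mathsf{inv}_p^{-1}\at z$ into the second slot of composition; this works exactly as you describe, and the face checks at $\dim z=\dim 0,\dim 1$ are immediate $\beta$-reductions. In particular you never need Lemma~\ref{lem:op} at all, only Lemmas~\ref{lem:rc} and~\ref{lem:inversab} (plus inversion and composition one level up, in the degenerate type $b=_{\dim x.A}b$).

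The paper instead builds $\mathsf{lc}_p$ by a single two-extent Kan composition: the top face is the degenerate square on $p^{-1}\at x$, the front face is the composition filler $\filler_{\dim y}((p^{-1}\sq p)\at x)$, the back face is $\mathsf{op1}_{p^{-1}}\at y\at x$ from Lemma~\ref{lem:op}(i), and the right $(\dim z,\dim y)$-face is $(\mathsf{inv}_p)^{-1}\at z\at y$. So your second sketch is in the right spirit but not quite the paper's cube: the paper places the composition filler at the \emph{front} rather than the top, uses $\mathsf{op1}_{p^{-1}}$ rather than $\mathsf{op2}_p$, and threads inversability through the right face rather than an auxiliary contraction. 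Your modular route trades one bespoke cube for two compositions of already-proved identifications; the paper's route keeps the construction to a single $\hcom$ at the cost of invoking Lemma~\ref{lem:op}. Both are fine; yours is arguably cleaner.
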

\begin{proof} 
By homogeneous Kan composition on the following open cube

\vspace{2mm}
\begin{tikzcd}
[execute at end picture={
\foreach \Valor/\Nombre in
{
  tikz@f@24-3-2/a,tikz@f@24-4-3/b,tikz@f@24-3-9/c,tikz@f@24-4-10/d%
}
{
\coordinate (\Nombre) at (\Valor);
}
\fill[pattern=north east lines,pattern color=grey,opacity=0.3]
  (b) -- (a) -- (c) -- (d) -- cycle;
  }
]
{} \arrow[r,shorten >= 12pt,-latex,start anchor=center,"\dim x" near start] \arrow[dr,shorten >= 24pt,-latex,start anchor=center,"\dim z"] \arrow[d,shorten >= 15pt,-latex,swap,start anchor=center,"y" near start] & {}
  b \arrow[rrrrrrr,"p^{-1} \at x"] \arrow[dr,equal] \arrow[dd,equal] \arrow[drrrrrrrr, phantom] && &&&&&
                   a \arrow[dd,swap,"(p^{-1})^{-1} \at y" near start] \arrow[dr,equal] \\
{} & {} 
& b \arrow[rrrrrrr,crossing over,"p^{-1} \at x" near start] &&    &&&&&    a \arrow[dd,"p \at y"] \\
& b \arrow[rrrrrrr,equal,"\refl_b \at x" near end] \arrow[dr,equal] &&    &&&&&   b \arrow[dr,equal] \\
& & b \arrow[rrrrrrr,swap,"p^{-1} \sq p \at x"] \arrow[uu,<-,crossing over,equal] &&  &&&&& b
\end{tikzcd}
\vspace{2mm}

\noindent So our desired identification is 
\begin{multline*}
\mathsf{lc}_p :\equiv \dangle{z}\dangle{x} \hcom^{\lto{0}{1}}_{\dim A}(p^{-1} \at x)({}^\dim{0}_\dim{x} \htri \dim y. b, \; {}^\dim{1}_\dim{x} \htri \dim y. (\mathsf{inv}_p)^{-1}\at z \at y, \; \\ 
{}^\dim{0}_\dim{z} \htri \dim y. \mathsf{op1}_{p^{-1}} \at y \at x, \; {}^\dim{1}_\dim{z} \htri \dim y.\filler_{\dim y}(p^{-1} \sq p\at x)).
\end{multline*}
where the $(\dim x, \dim y)$-square $\mathsf{op1}_{p^{-1}} \at y \at x$ is obtained by Lemma \ref{lem:op} (i).

\end{proof}

We are now finally ready to show that the left counterpart of the unit property from Lemma \ref{lem:ru} is the case, or, put differently, that the reflexivity element is a left unit for composition.


\begin{lemma}[Left unit] \label{lem:lu}
For every degenerate $\dim x$-type $A$ and every $a \in A \dimsub{0}{x}$ and $b \in A \dimsub{1}{x}$ we have an identification
$$\mathsf{lu}_p \in p =_{\dim y. a =_{\dim x.A} b} \refl_a \sq p$$
for any $p \in a =_{\dim x.A} b$.
\end{lemma}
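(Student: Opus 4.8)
The plan is to follow the pattern of the preceding lemmas and argue by a two-extent homogeneous Kan composition. What has to be produced is an $(\dim x,\dim y)$-square whose $\dim x$-lines are $p\at x$ (top) and $(\refl_a\sq p)\at x$ (bottom) and whose $\dim y$-lines are the degenerate lines $\refl_a\at y$ and $\refl_b\at y$ (left and right); the required identification is then $\mathsf{lu}_p :\equiv \dangle{y}\dangle{x}$ of that square, just as $\mathsf{ru}_p$ was obtained in Lemma~\ref{lem:ru}. (The algebraic route through the cancellation laws is not available here, associativity not having been established yet.) The reason the proof cannot stop there, as the remark before Lemma~\ref{lem:ru} anticipates, is that the filler Lemma~\ref{lem:comp} provides directly is not this square: unfolding the definition, $(\refl_a\sq p)\at x$ is the homogeneous Kan composite of the open square with cap $\refl_a\at x$, left side $\refl_a\at y$ and right side $p\at y$, so its filler $F :\equiv \filler_{\dim y}((\refl_a\sq p)\at x)$ has the bottom we want but carries the degenerate line $a$ on top and the non-degenerate line $p\at y$ on the right, neither of which is admissible as a face of the target square.

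I would therefore record $F$ together with the filler $\mathsf{ru}_p\at y\at x = \filler_{\dim y}((p\sq\refl_b)\at x)$ coming from Lemma~\ref{lem:ru}, which is an $(\dim x,\dim y)$-square with $\dim x$-lines $p\at x$ and $(p\sq\refl_b)\at x$ and degenerate $\dim y$-sides, and then assemble an open $(\dim x,\dim y,\dim z)$-cube whose composite is the target square. Concretely, the idea is: take $F$, with its scratch name renamed to $\dim z$, as one face of the cube; use degenerate squares for as many of the remaining faces as allowed, in particular $(\refl_a\sq p)\at x$ can sit degenerately on the $\dim z$-face opposite the composite and one of the two faces perpendicular to $\dim x$ can be taken degenerate; and fill the two faces that cannot be degenerate, one of which has to carry the line $p\at z$ appearing on $F$ down to the degenerate line $b$, the other of which has to carry the degenerate line $a$ up to $p\at x$. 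With these in hand one sets $\mathsf{lu}_p :\equiv \dangle{z}\dangle{x}$ of the resulting homogeneous Kan composite, written out in the notational style of Lemmas~\ref{lem:inversab} and \ref{lem:lc}.

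The hard part will be constructing those two non-degenerate side faces and discharging the bookkeeping that forces every pair of adjacent faces of the open cube to agree up to strict equality along their common edge. Because $F$ already carries $p\at z$ on one of its sides, each of these two faces is forced to be a connection-type square---carrying $p$ along two adjacent sides and a degenerate line along each of the other two---and the real content of the argument is to produce such squares, as fillers of suitable homogeneous Kan composites in the spirit of Lemmas~\ref{lem:inversab}--\ref{lem:lc}, with boundaries dovetailing both with $F$ and with the surrounding degenerate faces. No technique beyond what those lemmas already illustrate is required; the difficulty is purely combinatorial, a matter of getting the open cube to close up.
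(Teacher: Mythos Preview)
Your plan is sound and matches the paper's overall strategy: produce the target $(\dim x,\dim z)$-square as the composite of an open cube, with the composition filler $F$ sitting on one face and degenerate squares filling in wherever possible. The paper, however, arranges the cube differently and fills in your acknowledged ``hard part'' concretely.

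In the paper's cube, $F$ is the \emph{front} face rather than the top; the back face is $p\at x$ degenerate in $\dim y$, the left face is degenerate $a$, and the top is the filler of the inversion $p^{-1}$ (suitably reoriented). The one genuinely non-trivial side face---your ``connection-type square''---is supplied by $\mathsf{op2}_p$ from Lemma~\ref{lem:op}(ii), which as a square has $p^{-1}\at z$ along one edge, $p\at y$ along an adjacent edge, and degenerate $b$ on the remaining two. Thus the paper routes through $p^{-1}$: the inverse filler and $\mathsf{op2}_p$ dovetail along their shared $p^{-1}\at z$ edge, and the cube closes.

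Your arrangement instead calls for \emph{pure} connection squares (with $p$ on two adjacent sides, degenerate on the other two). These do exist here and can be built by $\hcom$, but not as the filler of a single one-extent composition: a naive $\hcom$ with cap $p\at x$ and tubes $p\at y$, $b$ produces a square whose fourth side is some loop at $b$, not $\refl_b$. Correcting that requires a further two-extent composition---essentially the content of Lemma~\ref{lem:op}. So your assertion that ``no technique beyond what those lemmas already illustrate is required'' is correct, but the shortest way to discharge your hard part is precisely the paper's detour through $\mathsf{op2}_p$ and $p^{-1}$. (The filler $\mathsf{ru}_p$ you record is not used in either construction; it is a red herring here.)
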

\begin{proof}
As before, the proof follows by Kan composition,

\vspace{2mm}
\begin{tikzcd}
[execute at end picture={
\foreach \Valor/\Nombre in
{
  tikz@f@25-3-2/a,tikz@f@25-4-3/b,tikz@f@25-3-9/c,tikz@f@25-4-10/d%
}
{
\coordinate (\Nombre) at (\Valor);
}
\fill[pattern=north east lines,pattern color=grey,opacity=0.3]
  (b) -- (a) -- (c) -- (d) -- cycle;
  }
]
{} \arrow[r,shorten >= 12pt,-latex,start anchor=center,"\dim x" near start] \arrow[dr,shorten >= 24pt,-latex,start anchor=center,"\dim z"] \arrow[d,shorten >= 15pt,-latex,swap,start anchor=center,"y" near start] & {}
  a \arrow[rrrrrrr,"p \at x"] \arrow[dr,equal] \arrow[dd,swap,equal] \arrow[drrrrrrrr, phantom] && &&&&&
                   b \arrow[dd,equal,near end] \arrow[dr,"p^{-1} \at z"] \\
{} & {} 
& a \arrow[rrrrrrr,crossing over,equal,near start] &&    &&&&&    a \arrow[dd,"p \at y"] \\
& a \arrow[rrrrrrr,"p \at x" near end] \arrow[dr,swap,equal] &&    &&&&&   b \arrow[dr,equal] \\
& & a \arrow[rrrrrrr,swap,"\refl_a \sq p \at x"] \arrow[uu,<-,crossing over,equal,near end] &&  &&&&& b
\end{tikzcd}
\vspace{2mm}

and here we have
\begin{multline*}
\mathsf{lu}_p :\equiv \dangle{z}\dangle{x} \hcom^{\lto{0}{1}}_{\dim A}(\filler_{\dim x}(p^{-1} \at z))({}^\dim{0}_\dim{x} \htri \dim y. a, \; {}^\dim{1}_\dim{x} \htri \dim y. \mathsf{op2}_p \at y \at z, \; \\ 
{}^\dim{0}_\dim{z} \htri \dim y. p \at x, \; {}^\dim{1}_\dim{z} \htri \dim y.\filler_{\dim y}((\refl_a \sq p)\at x)),
\end{multline*}
where the $(\dim x, \dim z)$-square $\mathsf{op2}_p \at y \at z$ is obtained by Lemma \ref{lem:op} (ii).

\end{proof}

The curious reader may wonder why our proof of the left unit property from Lemma \ref{lem:lu} is significantly harder than the right unit one (Lemma \ref{lem:ru}). Why the property is so much simpler to demonstrate in the right? 
If we look attentively at the filler of, say, $p \sq q$ from Lemma \ref{lem:comp}, 

\vspace{2mm}
\begin{tikzcd}
{} \arrow[r,shorten >= 12pt,-latex,start anchor=center,"\dim x" near start] \arrow[d,shorten >= 9pt,-latex,swap,start anchor=center,"\dim y" near start] & a
       \arrow[rrrrrrrr,"p\at x"] \arrow[dd,equal] \arrow[ddrrrrrrrr, phantom, "\filler_{\dim y}(p \sq q \at x)"] &&&&&&&& b \arrow[dd,"q \at y"] \\
 {} &   & \\
&   a \arrow[rrrrrrrr, swap,"p \sq q \at x"] &&&&&&&& c
\end{tikzcd}
\vspace{2mm}

\noindent we can see that it forms a simultaneous identification: an identification of $p \at x$ and $p \sq q \at x$ as $\dim x$-lines modulo an identification of $q \at y$ and $\refl_a \at y$ as $\dim y$-lines. Consequently, if we set $q :\equiv \refl_a$, then we immediately have a globular identification of $p$ and $p \sq q$. We can thus compare our composition operation with one defined in conventional homotopy type theory \cite{hottbook} by \emph{path induction} on the second argument $q$ \cite{hottbook}[\S 1.12.1], since we let $p \sq q$ be $p$ just in case $q$ is $\refl_a$. The same remark applies to inversion as well: in this case $p^{-1}$ is related to an inversion operation defined by path induction on $p$ by letting $p^{-1}$ be $\refl_a$ just in case $p$ is $\refl_a$ (in fact, we will see in \Cref{moreabout} that path induction is just a particular case of the Kan conditions.)

Last but not least, we want to show that composition of identifications is associative up to globular identification. For this we use the following lemma, which basically states that any two squares with strictly equal top, right and left faces must have the same bottom up to globular identification.


\begin{lemma}[The three-out-of-four bottom identification] \label{lem:3o4}
For every degenerate $(\dim x, \dim y)$-type $A$ and every $\alpha \in p=_{\dim{y}.r \at y =_{\dim x.A} s \at y}q$, $\beta \in p=_{\dim{y}.r \at y =_{\dim x.A} s \at y}q'$ we have an identification
$$\mathsf{bi}_{\alpha,\beta} \in q =_{\dim y. c =_{\dim x.A} d} q'$$
where $p \in a =_{\dim x.A \dimsub{0}{y}} b$, $q, q' \in c =_{\dim x.A \dimsub{1}{y}} d$, $r \in a =_{\dim y.A \dimsub{0}{x}} c$, $s \in b =_{\dim y.A \dimsub{1}{x}} d$ and $a \in A\dimsub{0}{x}\dimsub{0}{y}$, $b \in A\dimsub{1}{x}\dimsub{0}{y}$, $c \in A\dimsub{0}{x}\dimsub{1}{y}$, $d \in A\dimsub{1}{x}\dimsub{1}{y}$.
\end{lemma}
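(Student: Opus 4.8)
The plan is to build the identification $\mathsf{bi}_{\alpha,\beta}$ by a two-extent homogeneous Kan composition on an open $(\dim x,\dim y,\dim z)$-cube whose lid is the required $(\dim x,\dim y)$-square $q =_{\dim y. c =_{\dim x.A} d} q'$; that is, an $(\dim x,\dim y)$-square with $q \at x$ and $q' \at x$ as its top and bottom $\dim x$-lines and the degenerate $\dim y$-lines at $c$ and $d$ on the sides. The governing idea is that $\alpha \at y \at x$ and $\beta \at y \at x$ are two $(\dim x,\dim y)$-squares sharing the same top face $p \at x$ and the same two side faces $r \at y$ and $s \at y$, so gluing one to the reverse of the other along these three shared faces should leave only the discrepancy between their bottom faces $q \at x$ and $q' \at x$.

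Concretely, I would take the open cube to have: at the top $(\dim x,\dim z)$-square the filler witnessing $\alpha$ — more precisely $\alpha \at z \at x$ regarded so that its $\dim z = \dim 0$ face is $p \at x$; at the $\dim z = \dim 1$ face the square $\beta \at z \at x$; and at the two $(\dim z,\dim y)$-squares forming the left and right, suitable squares built from $r$ and $s$ that connect the $\dim y$-faces of $\alpha$ and $\beta$. Since $\alpha$ and $\beta$ both have left face $r \at y$ and right face $s \at y$, the natural choice is the degenerate squares $\refl_{r \at y}$ and $\refl_{s \at y}$ (i.e. $r$ and $s$ made degenerate in $\dim z$), which indeed have the matching boundaries; this forces the remaining two $(\dim x,\dim y)$ faces (back and front) to be the degenerate squares $\refl_{c}$ and $\refl_{d}$ in the appropriate dimensions. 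The lid of this open box is then an $(\dim x,\dim y)$-square whose top is $q \at x$, whose bottom is $q' \at x$, and whose $\dim y$-sides are the degenerate lines at $c$ and $d$ — exactly a globular identification of $q$ and $q'$ — and we set $\mathsf{bi}_{\alpha,\beta}$ to be its double dimension abstraction. The resulting formula will have the shape
$$\mathsf{bi}_{\alpha,\beta} :\equiv \dangle{z}\dangle{x}\, \hcom^{\lto{0}{1}}_{\dim A}(\alpha \at z \at x)({}^\dim{0}_\dim{x} \htri \dim y. c, \; {}^\dim{1}_\dim{x} \htri \dim y. d, \; {}^\dim{0}_\dim{z} \htri \dim y. r \at y, \; {}^\dim{1}_\dim{z} \htri \dim y. s \at y)$$
with $\beta$ supplying the $\dim z = \dim 1$ tube face (I would display the open cube as a \texttt{tikzcd} diagram as in the earlier lemmas and then read off the composite).

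The main obstacle I expect is purely bookkeeping: checking that all six faces of the proposed open $(\dim x,\dim y,\dim z)$-cube agree up to strict equality on their common edges, using the $\beta$-rule together with the endpoint conditions $\alpha \at \mathsf\epsilon$, $\beta \at \mathsf\epsilon$ and $p \at \mathsf\epsilon$, $q \at \mathsf\epsilon$, $q' \at \mathsf\epsilon$, $r \at \mathsf\epsilon$, $s \at \mathsf\epsilon$ — in particular verifying that the $\dim y = \dim 0$ and $\dim y = \dim 1$ faces of the top square and of the two side squares glue correctly to the degenerate back and front squares, and that the $\dim x = \mathsf\epsilon$ faces reduce to $r$ and $s$ and hence to $c$ and $d$ at the corners. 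Since $A$ is a degenerate $(\dim x,\dim y)$-type, all the relevant corner types coincide strictly, so no type mismatches arise; the work is just confirming the boundary compatibilities. Once those are in hand, the Kan condition delivers the lid and the $\beta$-rule identifies its $\dim x$-faces with $q$ and $q'$, completing the construction.
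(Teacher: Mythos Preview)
Your intuition is right --- the square is produced by a two-extent homogeneous Kan composition involving $\alpha$ and $\beta$ --- but the assignment of faces in your open cube is miswired, and as written the displayed $\hcom$ term is not well-formed and does not mention $\beta$ at all.

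Concretely: with composition in the $\dim y$ direction, the cap is an $(\dim x,\dim z)$-square sitting at $\dim y = \dim 0$, and each of its four edges must agree with the $\dim y = \dim 0$ face of the corresponding tube. If the cap is $\alpha \at z \at x$, then its $\dim x = \dim 0$ edge is $r \at z$, not the constant $c$, so the tube ${}^\dim{0}_\dim{x} \htri \dim y.c$ violates the adjacency condition; and its $\dim z = \dim 1$ edge is $q \at x$, whereas the $\dim y = \dim 0$ face of $\beta$ is $p \at x$, so $\beta$ cannot serve as the $\dim z = \dim 1$ tube either. Your prose then places degenerate squares at $c$ and $d$ on the back and front, but with that choice nothing in the cube carries $q'$, so the lid cannot possibly have $q' \at x$ as a face.

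The fix --- and this is exactly what the paper does --- is to put the \emph{shared} data of $\alpha$ and $\beta$ on the cap and the $\dim x$-tubes, and let $\alpha$ and $\beta$ themselves be the two $\dim z$-tubes. Take the cap to be $p \at x$ (degenerate in $\dim z$), the tubes ${}^\dim{0}_\dim{x}$ and ${}^\dim{1}_\dim{x}$ to be $r \at y$ and $s \at y$ (each degenerate in $\dim z$), and the tubes ${}^\dim{0}_\dim{z}$ and ${}^\dim{1}_\dim{z}$ to be $\alpha \at y \at x$ and $\beta \at y \at x$. All adjacency conditions are then immediate from the hypotheses, and the lid at $\dim y = \dim 1$ is an $(\dim x,\dim z)$-square with $q \at x$ at $\dim z = \dim 0$, $q' \at x$ at $\dim z = \dim 1$, and degenerate $\dim x$-edges at $c$ and $d$: precisely the globular identification of $q$ with $q'$ after double abstraction.
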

\begin{proof}
By assumption, we have two $(\dim x, \dim y)$-squares $\alpha \at y \at x$ and $\beta \at y \at x$ with strictly equal top faces $p \at x$, left faces $r \at y$ and right faces $s \at y$. We want to show that the bottom faces of $\alpha \at y \at x$ and $\beta \at y \at x$, which are respectively $q \at x$ and $q' \at x$, are equal up to globular identification.

We do this by finding an $(\dim x, \dim y)$-square 

\vspace{2mm}
\begin{tikzcd}
{} \arrow[r,shorten >= 12pt,-latex,start anchor=center,"\dim x" near start] \arrow[d,shorten >= 9pt,-latex,swap,start anchor=center,"\dim y" near start] & c
       \arrow[rrrrrrrr,"q\at x"] \arrow[dd,equal] \arrow[ddrrrrrrrr, phantom, "\mathsf{bi}_{\alpha,\beta} \at y \at x"] &&&&&&&& d \arrow[dd,equal] \\
 {} &   & \\
&   c \arrow[rrrrrrrr, swap,"q' \at x"] &&&&&&&& d
\end{tikzcd}
\vspace{2mm}

\noindent by homogeneous Kan composition on the following open $(\dim x, \dim y, \dim z)$-cube 

\vspace{2mm}
\begin{tikzcd}
[execute at end picture={
\foreach \Valor/\Nombre in
{
  tikz@f@28-3-2/a,tikz@f@28-4-3/b,tikz@f@28-3-9/c,tikz@f@28-4-10/d%
}
{
\coordinate (\Nombre) at (\Valor);
}
\fill[pattern=north east lines,pattern color=grey,opacity=0.3]
  (b) -- (a) -- (c) -- (d) -- cycle;
  }
]
{} \arrow[r,shorten >= 12pt,-latex,start anchor=center,"\dim x" near start] \arrow[dr,shorten >= 24pt,-latex,start anchor=center,"\dim z"] \arrow[d,shorten >= 15pt,-latex,swap,start anchor=center,"y" near start] & {}
  a \arrow[rrrrrrr,"p \at x"] \arrow[dr,equal] \arrow[dd,swap,"r \at y"] \arrow[drrrrrrrr, phantom] && &&&&&
                   b \arrow[dd,swap,"s \at y" near end] \arrow[dr,equal] \\
{} & {} 
& a \arrow[rrrrrrr,crossing over,"p \at x" near start] &&    &&&&&    b \arrow[dd,"s \at y"] \\
& c \arrow[rrrrrrr,swap,"q \at x" near end] \arrow[dr,equal] &&    &&&&&  d \arrow[dr,equal] \\
& & c \arrow[rrrrrrr,swap,"q' \at x"] \arrow[uu,<-,crossing over,swap,"r \at y" near end] &&  &&&&& d
\end{tikzcd}
\vspace{2mm}

\noindent This open cube is formed by $p \at x$ as the top $(\dim x, \dim z)$-square, $r$ and $s$ as respectively the left and right $(\dim z, \dim y)$-squares and $\alpha \at y \at x$ and $\beta \at y \at x$ as respectively the back and front $(\dim x, \dim y)$-squares.

\end{proof}


Now we can show that associativity holds up to globular identification:


\begin{lemma}[Associativity] \label{lem:assoc}
For every degenerate $\dim x$-type $A$ and every $a \in A\dimsub{0}{x}$, $b \in A\dimsub{1}{x}$, $c \in A\dimsub{1}{x}$, $d \in A\dimsub{1}{x}$, we have an identification
$$\mathsf{assoc}_{p,q,r} \in (p \sq q) \sq r =_{\dim y. a =_{\dim x.A} d} p \sq (q \sq r) $$
for any $p \in a =_{\dim x.A} b$, $q \in b =_{\dim x.A} c$, $r \in c =_{\dim y.A} d$.
\end{lemma}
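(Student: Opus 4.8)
The plan is to reduce the statement to the three-out-of-four bottom identification lemma (Lemma \ref{lem:3o4}). By degeneracy we may regard $A$ as a degenerate $(\dim x,\dim y,\dim z)$-type, so that all the boundary faces below are well formed and may be used interchangeably. It then suffices to produce two $(\dim x,\dim y)$-squares with strictly equal top face $p\at x$, left face $\refl_a\at y$ and right face $(q\sq r)\at y$, the first with bottom face $((p\sq q)\sq r)\at x$ and the second with bottom face $(p\sq(q\sq r))\at x$: feeding these two squares into Lemma \ref{lem:3o4} yields an identification $\mathsf{bi}_{\alpha,\beta}\in (p\sq q)\sq r =_{\dim y. a =_{\dim x.A} d} p\sq(q\sq r)$, which is the desired $\mathsf{assoc}_{p,q,r}$. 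The second square is immediate from Lemma \ref{lem:comp} applied to $p$ and $q\sq r$: it is the filler of that composition, so we may take $\beta :\equiv \dangle{y}\dangle{x}\filler_{\dim y}((p\sq(q\sq r))\at x)$.

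All the work therefore goes into the first square, call it $\alpha$, which must be an identification in $p =_{\dim y.\refl_a\at y =_{\dim x.A}(q\sq r)\at y}(p\sq q)\sq r$. I would obtain $\alpha$ by a two-extent homogeneous Kan composition on the open $(\dim x,\dim y,\dim z)$-cube whose cap is the filler $\filler_{\dim z}((p\sq q)\at x)$ of Lemma \ref{lem:comp} applied to $p$ and $q$ (the filler taken with its filling dimension renamed to $\dim z$), whose $\dim z = 1$ face is the filler $\filler_{\dim y}(((p\sq q)\sq r)\at x)$ of Lemma \ref{lem:comp} applied to $p\sq q$ and $r$, whose $\dim x = 1$ face is the filler $\filler_{\dim y}((q\sq r)\at z)$ of Lemma \ref{lem:comp} applied to $q$ and $r$ (with its composition line read along $\dim z$), and whose $\dim x = 0$ and $\dim z = 0$ faces are the degenerate squares at $a$ and at $p\at x$, respectively. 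That is, I would set
\begin{multline*}
\alpha :\equiv \dangle{z}\dangle{x} \hcom^{\lto{0}{1}}_{\dim A}(\filler_{\dim z}((p\sq q)\at x))({}^\dim{0}_\dim{x} \htri \dim y. a, \; {}^\dim{1}_\dim{x} \htri \dim y. \filler_{\dim y}((q\sq r)\at z), \; \\
{}^\dim{0}_\dim{z} \htri \dim y. p\at x, \; {}^\dim{1}_\dim{z} \htri \dim y. \filler_{\dim y}(((p\sq q)\sq r)\at x)).
\end{multline*}
Intuitively this cube merely reassociates the broken path $p$-then-$q$-then-$r$: stacking the filler of $p\sq q$ above the filler of $(p\sq q)\sq r$ along their shared line $(p\sq q)\at x$ yields a square with broken right side $q$-then-$r$ and bottom $(p\sq q)\sq r$, and the filler of $q\sq r$ straightens that side into the single line $q\sq r$ while leaving $p$ on top.

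The main obstacle is the combinatorial bookkeeping for $\alpha$: one must check that the six faces of the cube agree on the nose along their shared edges — in particular that the right edge of $\filler_{\dim z}((p\sq q)\at x)$ really is $q\at z$ and so glues to the top edge of the reparametrized filler of $q\sq r$, that the top edge of $\filler_{\dim y}(((p\sq q)\sq r)\at x)$ is $(p\sq q)\at x$ and so glues to the $\dim z = 1$ edge of the cap, and that the remaining vertices reduce, via the $\beta$- and $\eta$-rules, to $a$, $b$, $c$, $d$ as required — and then that the lid, after abstracting $\dim z$ and $\dim x$, genuinely has $p$ as its top line, $(p\sq q)\sq r$ as its bottom line, $\refl_a$ on the left and $q\sq r$ on the right, so that it is a legitimate input for Lemma \ref{lem:3o4}. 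Once $\alpha$ is in hand the conclusion follows by an immediate appeal to that lemma. As with the contrast between Lemmas \ref{lem:ru} and \ref{lem:lu}, this detour collapses when one of the three identifications is reflexivity, since then one of the composition fillers already supplies the globular identification directly.
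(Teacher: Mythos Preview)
Your proposal is correct and follows essentially the same argument as the paper: the paper builds exactly the same open $(\dim x,\dim y,\dim z)$-cube (cap $\filler_{\dim z}((p\sq q)\at x)$, tubes the degenerate squares at $a$ and at $p\at x$, the filler $\filler_{\dim y}((q\sq r)\at z)$, and the filler $\filler_{\dim y}(((p\sq q)\sq r)\at x)$), takes its lid, and then compares that lid with $\filler_{\dim z}((p\sq(q\sq r))\at x)$ via Lemma~\ref{lem:3o4}. Your identification of the edge-matching bookkeeping as the only nontrivial obligation is also on target.
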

\begin{proof}
By routine diagram chasing. Homogeneous Kan composition ensures the existence of the $(\dim x, \dim y, \dim z)$-cube

\vspace{2mm}
\begin{tikzcd}
[execute at end picture={
\foreach \Valor/\Nombre in
{
  tikz@f@29-3-2/a,tikz@f@29-4-3/b,tikz@f@29-3-9/c,tikz@f@29-4-10/d%
}
{
\coordinate (\Nombre) at (\Valor);
}
\fill[pattern=north east lines,pattern color=grey,opacity=0.3]
  (b) -- (a) -- (c) -- (d) -- cycle;
  }
]
{} \arrow[r,shorten >= 12pt,-latex,start anchor=center,"\dim x" near start] \arrow[dr,shorten >= 24pt,-latex,start anchor=center,"\dim z"] \arrow[d,shorten >= 15pt,-latex,swap,start anchor=center,"y" near start] & {}
  a \arrow[rrrrrrr,"p \at x"] \arrow[dr,equal] \arrow[dd,equal] \arrow[drrrrrrrr, phantom] && &&&&&
                   b \arrow[dd,equal] \arrow[dr,"q \at z"] \\
{} & {} 
& a \arrow[rrrrrrr,crossing over,"(p \sq q) \at x" near start] &&    &&&&&    c \arrow[dd,"r \at y"] \\
& a \arrow[rrrrrrr,swap,"p \at x" near end] \arrow[dr,equal] &&    &&&&&   b \arrow[dr,swap,"(q \sq r) \at z"] \\
& & a \arrow[rrrrrrr,swap,"(p \sq q) \sq r \at x"] \arrow[uu,<-,crossing over,equal] &&  &&&&& d
\end{tikzcd}
\vspace{2mm}

\noindent but then we have two $(x,z)$-squares with strictly equal top, right and left faces 

\vspace{2mm}
\begin{tikzcd}
{} \arrow[r,shorten >= 12pt,-latex,start anchor=center,"\dim x" near start] \arrow[d,shorten >= 9pt,-latex,swap,start anchor=center,"\dim z" near start] &
    a  \arrow[rrrr,"p \at x"] \arrow[dd,equal] \arrow[ddrrrr, phantom] &&&& b \arrow[dd,swap,"(q \sq r) \at  z"] & a \arrow[rrrr,"p \at x"] \arrow[dd,equal] \arrow[ddrrrr, phantom,"\filler_{\dim z} ((p \sq (q \sq r)) \at x)"] &&&& b \arrow[dd,"(q \sq r) \at  z"]\\
 {} &   & &&& {} &   & \\
&   a \arrow[rrrr, swap,"(p \sq q) \sq r \at x"] &&&& d & a \arrow[rrrr, swap,"p \sq (q \sq r) \at x"] &&&& d
\end{tikzcd}
\vspace{2mm}

\noindent and, by Lemma \ref{lem:3o4}, they must have identical bottom faces. Thus, we have a square

\vspace{2mm}
\begin{tikzcd}
{} \arrow[r,shorten >= 12pt,-latex,start anchor=center,"\dim x" near start] \arrow[d,shorten >= 9pt,-latex,swap,start anchor=center,"\dim y" near start] & a
       \arrow[rrrrrrrr,"(p \sq q) \sq r \at x"] \arrow[dd,equal] \arrow[ddrrrrrrrr, phantom, "\mathsf{assoc}_{p,q,r} \at y \at x"] &&&&&&&& d \arrow[dd,equal] \\
 {} &   & \\
&   a \arrow[rrrrrrrr, swap,"p \sq (q \sq r) \at x"] &&&&&&&& d
\end{tikzcd}
\vspace{2mm}

\end{proof}



\subsection{Heterogeneous groupoid operations} \label{hetgroupoid}

A careful reader will probably wonder why we insisted in describing both the inversion function from Lemma \ref{lem:inv} and the composition function from Lemma \ref{lem:comp} as preliminary (but not definitive) definitions. This is because they share a fundamental limitation: they can only be applied to degenerate types, types that do not depend on the dimension name which is being abstracted in the identification type in consideration. We expressed this limitation explicitly by confining the applicability of our propositions to degenerate $\dim x$-types whenever we were dealing with an identification type $a =_{\dim x.A} b$, for this condition guarantees that $a =_{\dim x.A} b$ is a homogeneous identification type (which means that the abstracted dimension name $\dim x$ does not occur in $A$).

Let us consider the limitations of our preliminary notion of inversion first. 
Assuming that $a=_{\dim x.A}b$ is a well-formed type, where $a \in A \dimsub{0}{x}$ and $b \in A \dimsub{1}{x}$, in general the type $b=_{\dim x.A}a$ will \textit{not} be well-formed unless it is also the case that $b \in A \dimsub{0}{x}$ and $a \in A \dimsub{1}{x}$. In other words, the (homogeneous) inversion function from Lemma \ref{lem:inv} fails to be well-defined for every possible well-formed identification type $a=_{\dim x.A}b$, because its inversion operation needs to be subjected to the condition that both $a, b \in A \dimsub{0}{x}$ and $a, b \in A \dimsub{1}{x}$. To put it simply, inverting in the $\dim x$ direction is an operation that only makes sense when $A$ is a degenerate $\dim x$-type, that is, when $\dim x$ does not occur in the type $A$.

Fortunately, there is a way to deal with this problem using type universes.\footnote{We thank Carlo Angiuli and Dan Licata for pointing this out to the author.} Suppose we are given an $\dim x$-type $A$. By dimension abstraction, we have an identification of the types $A \dimsub{0}{x}$ and $A \dimsub{1}{x}$ in a type universe $\univ{U}$ with
$$\dangle{x} A \in A \dimsub{0}{x} =_{\dim x.\univ{U}} A \dimsub{1}{x}.$$
By assumption, the universe $\univ{U}$ must at least be an $\dim x$-type too, nevertheless, we require that $\univ{U}$ be degenerate with respect to $\dim x$, i.e. $\univ{U} \dimsub{0}{x} \equiv \univ{U} \dimsub{1}{x}$. 

Now it can be shown that the following inverse exists:

\vspace{2mm}
\begin{tikzcd}
{} \arrow[r,shorten >= 12pt,-latex,start anchor=center,"\dim x" near start] \arrow[d,shorten >= 9pt,-latex,swap,start anchor=center,"\dim y" near start] & A \dimsub{0}{x}
       \arrow[rrrrrrr,equal] \arrow[dd,"A \equiv (\dangle{x} A)\at y"] &&&&&&& A \dimsub{0}{x} \arrow[dd,equal] \\
 {} &   & \\
&   A \dimsub{1}{x} \arrow[rrrrrrr,dotted, swap,"(\dangle{x} A)^{-1}\at x"] &&&&&&& A \dimsub{0}{x}
\end{tikzcd}
\vspace{2mm}

\noindent In particular, we have
$$(\dangle{x} A)^{-1} \in A \dimsub{1}{x} =_{\dim x.\univ{U}} A \dimsub{0}{x}$$
a construction that gives us an $\dim x$-type $(\dangle{x} A)^{-1}\at x$ (which we shall often abuse notation and write $A^{-1}$). Intuitively, this type corresponds precisely to the `inverse' of the $\dim x$-type $A$. This is because the $- \dimsub{0}{x}$ and $- \dimsub{1}{x}$ faces of $A^{-1}$ are respectively
\begin{align*}
 A^{-1} \dimsub{0}{x}  &\equiv ((\dangle{x} A)^{-1}\at x) \dimsub{0}{x} \\
                       &\equiv (\dangle{x} A)^{-1} \dimsub{0}{x} \at x \dimsub{0}{x} \\
                       &\equiv (\dangle{x} A)^{-1}\at 0  \\
                       &\equiv A \dimsub{1}{x}
\end{align*}
and (similarly) 
\begin{align*}
 A^{-1} \dimsub{1}{x}  &\equiv ((\dangle{x} A)^{-1}\at x) \dimsub{1}{x} \\
                       &\equiv \vdots  \\
                       &\equiv A \dimsub{0}{x}
\end{align*}
which means that we have two inferences that hold top/bottom and bottom/top
$$\Efrac{a \in A^{-1} \dimsub{1}{x}}{a \in A \dimsub{0}{x}} \quad\text{and}\quad \Efrac{a \in A^{-1} \dimsub{0}{x}}{a \in A \dimsub{1}{x}}.$$
Under the assumption that $a=_{\dim x.A}b$ is a well-formed type, it is now easy to see that the type $b=_{\dim x.A^{-1}}a$ will always be well-formed as well regardless of whether $A$ is a degenerate $\dim x$-type or not: because $a=_{\dim x.A}b$ is well-formed we have $a \in A \dimsub{0}{x}$ and $b \in A \dimsub{1}{x}$, meaning that $a \in A^{-1} \dimsub{1}{x}$ and $b \in A^{-1} \dimsub{0}{x}$ must be the case.

This motivates the definition of a new (heterogeneous) inversion operation:

\begin{lemma}[Heterogeneous inversion] \label{lem:hinv}
For every $\dim x$-type $A$ and $a \in A \dimsub{0}{x}$ and $b \in A \dimsub{1}{x}$, there is a function
$$(a=_{\dim x.A}b) \to (b=_{\dim x.A^{-1}}a)$$
called the (heterogeneous) inverse function and denoted $p \mapsto p^{-1_*}$.
\end{lemma}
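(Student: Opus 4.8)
The plan is to transcribe the proof of the homogeneous inversion lemma (Lemma~\ref{lem:inv}), but to run the Kan composition \emph{heterogeneously} over a type square that connects $A$ to $A^{-1}$, rather than over $A$ itself (which is no longer assumed degenerate).

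First I would make the definition of $A^{-1}$ fully explicit. By construction $A^{-1} \equiv (\dangle{x}A)^{-1}\at x$, where $(\dangle{x}A)^{-1} \in A\dimsub{1}{x} =_{\dim x.\univ{U}} A\dimsub{0}{x}$ is the image of $\dangle{x}A \in A\dimsub{0}{x} =_{\dim x.\univ{U}} A\dimsub{1}{x}$ under Lemma~\ref{lem:inv}, which applies because $\univ{U}$ is taken to be a degenerate $\dim x$-type. Unfolding that lemma, $A^{-1}$ is the homogeneous Kan composite in $\univ{U}$ of an open square whose top and right faces are the constant type $A\dimsub{0}{x}$ and whose left face is the $\dim y$-line $A\dimsub{y}{x}$ (i.e. $(\dangle{x}A)\at y$). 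Write $\hat A :\equiv \filler_{\dim y}(A^{-1})$ for the associated filler: an $(\dim x,\dim y)$-type square with left face $A\dimsub{y}{x}$, top and right faces the constant type $A\dimsub{0}{x}$, and bottom face $A^{-1}$. Read along $\dim y$, this square is a type line from $A\dimsub{0}{x}$ at $\dim y=0$ to $A^{-1}$ at $\dim y=1$.

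Now, given $p \in a=_{\dim x.A}b$, I would set up over $\hat A$ the exact analogue of the open square used for $p^{-1}$ in Lemma~\ref{lem:inv}: the cap, at $\dim y=0$ and so living in $A\dimsub{0}{x}$, is the degenerate $\dim x$-line $\refl_a\at x$; the left $\dim y$-face, living in $\hat A\dimsub{0}{x} \equiv A\dimsub{y}{x}$, is $p\at y$, which runs from $a$ to $b$; and the right $\dim y$-face, living in $\hat A\dimsub{1}{x} \equiv A\dimsub{0}{x}$, is the degenerate line $\refl_a\at y$. All corners where these faces meet are $a$, so this is a legitimate open box, and heterogeneous Kan composition over $\dim y.\hat A$ — derivable from the Kan conditions by coercing the cap and the two $\dim y$-faces forward along $\hat A$ with $\coe$ and then performing an ordinary $\hcom$ in the type $\hat A\dimsub{1}{y} \equiv A^{-1}$ — produces a $\dim x$-line in $A^{-1}$ whose left endpoint is the value of $p\at y$ at $\dim y=1$, namely $b$, and whose right endpoint is the value of $\refl_a\at y$ at $\dim y=1$, namely $a$. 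By the inferences $\Efrac{b \in A^{-1}\dimsub{0}{x}}{b \in A\dimsub{1}{x}}$ and $\Efrac{a \in A^{-1}\dimsub{1}{x}}{a \in A\dimsub{0}{x}}$ recorded just before the lemma we have $b \in A^{-1}\dimsub{0}{x}$ and $a \in A^{-1}\dimsub{1}{x}$, so abstracting $\dim x$ in this line gives precisely an element of $b=_{\dim x.A^{-1}}a$; this is how I would define $p^{-1_*}$.

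The main obstacle is entirely bookkeeping: one must check that the faces of $\hat A$ really are as claimed, which relies on the reduction behaviour of $\hcom$ in the universe on constant tube faces together with the $\beta$-rule applied to $\dangle{x}A$, and that the proposed cap and tube faces of the open box are correctly typed against those faces and mutually compatible at every corner. Once the geometry of $\hat A$ is fixed, the construction is a literal transcription of the homogeneous argument; as a sanity check, when $A$ happens to be a degenerate $\dim x$-type the square $\hat A$ collapses to the constant type $A$ and the heterogeneous composition reduces to the $\hcom$ of Lemma~\ref{lem:inv}, so $p^{-1_*}$ coincides with $p^{-1}$ on the nose.
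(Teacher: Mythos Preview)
Your proposal is correct and essentially identical to the paper's proof: both take the filler $\filler_{\dim y}(A^{-1})$ of the inversion-of-types square as the type over which to run a heterogeneous Kan composition, using exactly the same open box (cap $\refl_a\at x$, tubes $p\at y$ and $\refl_a\at y$) as in Lemma~\ref{lem:inv}. The paper packages the coerce-then-$\hcom$ step into a single derived operator $\com$, but that is just the abbreviation you spell out.
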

\begin{proof}
By the following Kan composition on the open square from Lemma \ref{lem:inv}

\vspace{2mm}
\begin{tikzcd}
{} \arrow[r,shorten >= 12pt,-latex,start anchor=center,"\dim x" near start] \arrow[d,shorten >= 9pt,-latex,swap,start anchor=center,"\dim y" near start] & a
       \arrow[rrrrrrrr,equal] \arrow[dd,swap,"p\at y"] &&&&&&&& a \arrow[dd,equal] \\
 {} &   & \\
&   b \arrow[rrrrrrrr,dotted, swap,"{\com^{\lto{0}{1}}_{\dim y.\filler_{\dim y}(A^{-1}\at x)}(a)({}^\dim{0}_\dim{x} \htri \dim y.p \at y, \; {}^\dim{1}_\dim{x} \htri \dim y.a)}"] &&&&&&&& a
\end{tikzcd}
\vspace{2mm}

\noindent where the term 
$$\com^{\lto{r}{s}}_{\dim y.A}(M)({}^\dim{0}_\dim{x_1} \htri \dim y.N_{\dim{x_1}0}, \; {}^\dim{1}_\dim{x_1} \htri \dim y.N_{\dim{x_1}1}, ... , \; {}^\dim{0}_\dim{x_n} \htri \dim y.N_{\dim{x_n}0}, \; {}^\dim{1}_\dim{x_n} \htri \dim y.N_{\dim{x_n}1}),$$
called the \textit{heterogeneous} Kan composite of $M$ with $N_{\dim{x_1}0}$, $N_{\dim{x_1}1}$, ... $N_{\dim{x_n}0}$, $N_{\dim{x_n}1}$, is an abbreviation of the term
\begin{multline*}
\hcom^{\lto{r}{s}}_{\dim A \dimsub{s}{y}}(\coe^{\lto{r}{s}}_{\dim y.A}(M))
({}^\dim{0}_\dim{x_1} \htri \dim y.\coe^{\lto{r}{s}}_{\dim y.A}(N_{\dim{x_1}0}), \; {}^\dim{1}_\dim{x_1} \htri \dim y.\coe^{\lto{r}{s}}_{\dim y.A}(N_{\dim{x_1}1}), ... , \\
\; {}^\dim{0}_\dim{x_n} \htri \dim y.\coe^{\lto{r}{s}}_{\dim y.A}(N_{\dim{x_n}0}), \; {}^\dim{1}_\dim{x_n} \htri \dim y.\coe^{\lto{r}{s}}_{\dim y.A}(N_{\dim{x_n}1})),
\end{multline*}
which combines the two Kan conditions to form a heterogeneous one.
\end{proof}


At this point, one may be tempted to think that we can drop our previous definition of (homogeneous) inverse from Lemma \ref{lem:inv}, since we now already possess a more general (heterogeneous) notion of inverse. On second thought, however, it becomes clear that this is not possible on pain of circularity. To put it another way, a preliminary (homogeneous) notion inversion is absolutely necessary in order to define the type $A^{-1}$, so that the definition of heterogeneous inversion is not circular.

Just like homogeneous inversion fails to be well-defined for all possible cases, our (homogeneous) composition function suffers from a similar limitation. This can be easily seen in the two-dimensional case where we have identifications of identifications. Consider the two-dimensional identifications 
$$\alpha \in p =_{\dim y.p' \at y =_{\dim x.A} q' \at y} q \qquad\text{and}\qquad \beta \in q =_{\dim y.q'' \at y =_{\dim x.A} r'' \at y} r$$
which (say) correspond to the following two $(\dim x,\dim y)$-squares in $A$:

\vspace{2mm}
\begin{tikzcd}
{} \arrow[r,shorten >= 12pt,-latex,start anchor=center,"\dim x" near start] \arrow[d,shorten >= 9pt,-latex,swap,start anchor=center,"\dim y" near start] &
    a  \arrow[rrrr,"p\at x"] \arrow[dd,"p' \at  y"] \arrow[ddrrrr, phantom, "\alpha \at y \at x"] &&&& b \arrow[dd,"q' \at  y"] & c \arrow[rrrr,"q\at x"] \arrow[dd,"q'' \at  y"] \arrow[ddrrrr, phantom, "\beta \at y \at x"] &&&& b \arrow[dd,"r'' \at  y"]\\
 {} &   & &&& {} &   & \\
&   c \arrow[rrrr, swap,"q\at x"] &&&& d & e \arrow[rrrr, swap,"r\at x"] &&&& f
\end{tikzcd}
\vspace{2mm}

\noindent it may seem that we can compose $\alpha$ and $\beta$ by `gluing' their common faces together (and composing the other ones) to obtain their composite, as illustrated in the following square:

\vspace{2mm}
\begin{tikzcd}
{} \arrow[r,shorten >= 12pt,-latex,start anchor=center,"\dim x" near start] \arrow[d,shorten >= 9pt,-latex,swap,start anchor=center,"\dim y" near start] & a
       \arrow[rrrrrrrr,"p \at x"] \arrow[dd,swap,"(p' \sq q'')"] \arrow[ddrrrrrrrr, phantom, "\mathsf{glue}_{\alpha,\beta} \at y \at x"] &&&&&&&& b \arrow[dd,"(q' \sq r'') \at y"] \\
 {} &   & \\
&   e \arrow[rrrrrrrr,swap,"r \at x"] &&&&&&&& f
\end{tikzcd}
\vspace{2mm}

\noindent but this composition is actually ill-defined in general because it need not satisfy two essential requirements of Lemma \ref{lem:comp}: first, the target types, $p' \at y =_{\dim x.A} q' \at y$ and $q'' \at y =_{\dim x.A} r'' \at y$, must be degenerate $\dim y$-types; second, those target types must be strictly equal types.

In order to overcome this problem we need a heterogeneous composition operation. Just as with heterogeneous inversion, it be defined with the help of homogeneous composition on types. Once again we assume that $A$ is an $\dim x$-type so that we can obtain an identification of the types $A \dimsub{0}{x}$ and $A \dimsub{1}{x}$ in the universe $\univ{U}$. We also assume that $B$ is a $\dim x$-type such that $A \dimsub{1}{x} \equiv B \dimsub{0}{x}$ and, as before, we require that $\univ{U}$ be a degenerate $\dim x$-type.
As a result, the following homogeneous composition operation is well-defined: 

\vspace{2mm}
\begin{tikzcd}
{} \arrow[r,shorten >= 12pt,-latex,start anchor=center,"\dim x" near start] \arrow[d,shorten >= 9pt,-latex,swap,start anchor=center,"\dim y" near start] & A \dimsub{0}{x}
       \arrow[rrrrrrr,"A \equiv (\dangle{x} A) \at x"] \arrow[dd,swap,equal] &&&&&&& A \dimsub{1}{x} \equiv B \dimsub{0}{x} \arrow[dd,swap,"B \equiv (\dangle{x} B) \at y"] \\
 {} &   & \\
&   A \dimsub{0}{x} \arrow[rrrrrrr,dotted,swap,"(\dangle{x} A \sq \dangle{x} B) \at x"] &&&&&&& B \dimsub{1}{x}
\end{tikzcd}
\vspace{2mm}

\noindent and the composite of $\dangle{x} A$ and $\dangle{x} B$ is the identification
$$ \dangle{x} A \sq \dangle{x} B \; \in \; A \dimsub{0}{x} =_{\dim x.\univ{U}} B \dimsub{1}{x},$$
which can be seen as the $\dim x$-type $(\dangle{x} A \sq \dangle{x} B)\at x$ (henceforth written as $A \sq B$). 
We also have two important inferences that hold top/bottom and bottom/top,
$$\Efrac{a \in (A \sq B) \dimsub{0}{x}}{a \in A \dimsub{0}{x}} \quad\text{and}\quad \Efrac{b \in (A \sq B) \dimsub{1}{x}}{b \in B \dimsub{1}{x}}.$$

With this we have all we need to define our definitive composition function:


\begin{lemma}[Heterogeneous composition] \label{lem:hcomp}
Suppose that $A$ and $B$ are $\dim x$-types such that $A \dimsub{1}{x} \equiv B \dimsub{0}{x}$. Given any $a \in A \dimsub{0}{x}$, $b \in A \dimsub{1}{x}$ and $c \in B \dimsub{1}{x}$, there is a function
$$(a=_{\dim x.A}b) \to (b=_{\dim x.B}c) \to (a =_{\dim x.A\,\sq\,B} c)$$
written $p \mapsto q \mapsto p \,\sq_* q$ and called the (heterogeneous) composition function.
\end{lemma}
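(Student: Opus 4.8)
The plan is to mirror the proof of heterogeneous inversion (Lemma~\ref{lem:hinv}), replacing the inversion-on-types scenario by the homogeneous composition-on-types scenario set up immediately above. Since $A$ and $B$ are $\dim x$-types, $A\dimsub{1}{x} \equiv B\dimsub{0}{x}$, and (as in the discussion preceding the lemma) $\univ{U}$ is taken degenerate with respect to $\dim x$, we may form the identifications $\dangle{x} A, \dangle{x} B \in \univ{U}$ and, by Lemma~\ref{lem:comp} applied in $\univ{U}$, their homogeneous composite $\dangle{x} A \sq \dangle{x} B \in A\dimsub{0}{x} =_{\dim x.\univ{U}} B\dimsub{1}{x}$, whose associated $\dim x$-type is $A\sq B$. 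I would then let $\mathbb{A} :\equiv \filler_{\dim y}((\dangle{x} A \sq \dangle{x} B)\at x)$ be the filler of this Kan composition scenario, an $(\dim x,\dim y)$-type whose faces are $\mathbb{A}\dimsub{0}{y} \equiv A$, $\mathbb{A}\dimsub{1}{y} \equiv A\sq B$, $\mathbb{A}\dimsub{0}{x} \equiv A\dimsub{0}{x}$ (a degenerate $\dim y$-type), and $\mathbb{A}\dimsub{1}{x} \equiv (\dangle{x} B)\at y$ (which I abbreviate $B$). Coercion will be performed along $\mathbb{A}$ in the $\dim y$ direction, transporting terms in $\mathbb{A}\dimsub{0}{y} \equiv A$ to terms in $\mathbb{A}\dimsub{1}{y} \equiv A\sq B$.

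Given $p \in a=_{\dim x.A}b$ and $q \in b=_{\dim x.B}c$, I would take over $\mathbb{A}$ exactly the open box used in the proof of Lemma~\ref{lem:comp}: the $\dim x$-line $p\at x$ at the top (lying in $\mathbb{A}\dimsub{0}{y} \equiv A$), the degenerate $\dim y$-line $\refl_a\at y$ at the left (lying in $\mathbb{A}\dimsub{0}{x} \equiv A\dimsub{0}{x}$), and the $\dim y$-line $q\at y$ at the right (lying in $\mathbb{A}\dimsub{1}{x} \equiv B$ and running from $b \in B\dimsub{0}{x} \equiv A\dimsub{1}{x}$ to $c \in B\dimsub{1}{x}$). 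The compatibility conditions are identical to those of Lemma~\ref{lem:comp}: the left and right endpoints of $p\at x$ are $a$ and $b$, which are precisely the $\dim y = \mathsf{0}$ endpoints of the two side walls. Heterogeneous Kan composition then supplies the lid, an $\dim x$-line in $\mathbb{A}\dimsub{1}{y} \equiv A\sq B$ from $a$ to $c$, and I would define
$$p \,\sq_* q :\equiv \dangle{x} \com^{\lto{0}{1}}_{\dim y.\mathbb{A}}(p\at x)({}^\dim{0}_\dim{x} \htri \dim y.\refl_a\at y, \; {}^\dim{1}_\dim{x} \htri \dim y.q\at y) \;\in\; a =_{\dim x.A\,\sq\,B} c.$$

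The main obstacle is the boundary bookkeeping concealed inside $\com$. Unfolding $\com^{\lto{0}{1}}_{\dim y.\mathbb{A}}$ into a $\coe^{\lto{0}{1}}_{\dim y.\mathbb{A}}$ followed by an $\hcom$ over $\mathbb{A}\dimsub{1}{y}$, one must check that the coerced cap and walls still form a valid open box and that its lid runs from $a$ to $c$. The two facts that carry this are that $\mathbb{A}\dimsub{0}{x}$ is a degenerate $\dim y$-type, so coercion along that wall leaves $a$ fixed, and that the inferences $\Efrac{a \in (A\sq B)\dimsub{0}{x}}{a \in A\dimsub{0}{x}}$ and $\Efrac{c \in (A\sq B)\dimsub{1}{x}}{c \in B\dimsub{1}{x}}$ recorded just before the statement are exactly what places the endpoints of the lid in $A\sq B$. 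A secondary, purely notational nuisance is that $B$ occurs both as a $\dim x$-type (in the hypothesis $b=_{\dim x.B}c$) and as the $\dim y$-face $\mathbb{A}\dimsub{1}{x}$, with $q\at y$ straddling the two, so one must silently invoke $\alpha$-conversion on the abstracted dimension name. Finally, it is worth noting that when $A$ and $B$ are degenerate $\dim x$-types with $A \equiv B$ the scenario above collapses to that of Lemma~\ref{lem:comp}, so that $p\,\sq_* q$ agrees with $p\sq q$; this justifies treating $\sq_*$ as the definitive composition operation.
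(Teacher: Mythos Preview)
Your proposal is correct and takes essentially the same approach as the paper: the paper's proof is the one-line ``by heterogeneous Kan composition on the open box from Lemma~\ref{lem:comp},'' and you have spelled this out in full, taking the type family for $\com$ to be the filler $\mathbb{A}$ of the $A\sq B$ scenario in $\univ{U}$, exactly parallel to the use of $\filler_{\dim y}(A^{-1}\at x)$ in Lemma~\ref{lem:hinv}. Your additional boundary discussion and the remark on recovering $\sq$ in the degenerate case are extra detail, not a different argument.
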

\begin{proof}
By heterogeneous Kan composition on the open box from Lemma \ref{lem:comp}.
\end{proof}


Heterogeneous composition does allow us to compose the two-dimensional identifications $\alpha$ and $\beta$ from our example above, but it is worth noting that the resulting composition is not strictly equal to the operation $\mathsf{glue}_{\alpha,\beta}$ we described (as we shall see in \Cref{idtypegroupoid}, however, this operation is induced by heterogeneous composition.)

\subsection{Heterogeneous groupoid structure} \label{heterogroupstruct}

In \Cref{hgroupstruct} we showed that types have a higher groupoid structure given in terms of homogeneous inversion and composition. The aim of this subsection is to point out that we can characterize this structure via heterogeneous operations as well. To that end, the propositions from \Cref{hgroupstruct} need some adjustments.

To give an illustration let us examine the inversability property (Lemma \ref{lem:inversab}), which in the homogeneous case states that for every degenerate $\dim x$-type $A$ and every $a \in A \dimsub{0}{x}$ and $b \in A \dimsub{1}{x}$ the following holds for any identification $p \in a =_{\dim x.A} b$:
$$(p^{-1})^{-1}=_{\dim y.a =_{\dim x.A} b} p.$$
Generalizing inversability to allow for a heterogeneous operation requires us to first drop the restriction that $A$ be a degenerate $\dim x$-type and then use heterogeneous inversion ${\cdot}^{-1_*}$ to state something like
$$(p^{-1_*})^{-1_*}=_{\dim y.a =_{\dim{x.??}} \, b} p.$$
At this point, however, we run into a problem: on the left-hand side we have a term of type $a =_{\dim x.(A^{-1})^{-1}} b$ but, on the right-hand side, we have a term of type $a=_{\dim x.A} b$. Since, in general, these are not strictly equal $\dim x$-types, we need to find a way to make sure that the above statement is well-typed. 
Fortunately, both $(A^{-1})^{-1}$ and $A$ can be regarded as identifications in a degenerate $\dim x$-type universe $\univ{U}$ as we remarked in the previous section, so we can apply homogeneous inversability (Lemma \ref{lem:inversab}) to obtain a line type from $(A^{-1})^{-1}$ to $A$.

With this in mind, the heterogeneous inversability property can be stated as follows:

\begin{lemma}[Heterogeneous inversability] \label{lem:hinversab}
For every $\dim x$-type $A$ with $a \in A \dimsub{0}{x}$ and $b \in A \dimsub{1}{x}$, we have
$$(p^{-1_*})^{-1_*}=_{\dim y.a =_{\dim x.(\mathsf{inv}_{A}) \at x} b} p$$
for any $p \in a =_{\dim x.A} b$.
\end{lemma}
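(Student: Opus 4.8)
The plan is to mirror the proof of homogeneous inversability (Lemma~\ref{lem:inversab}), replacing each homogeneous groupoid operation by its heterogeneous counterpart and each homogeneous Kan composition by a heterogeneous one. First note that the statement is well-typed. Applying homogeneous inversability (Lemma~\ref{lem:inversab}) to the identification $\dangle{x} A \in A \dimsub{0}{x} =_{\dim x.\univ{U}} A \dimsub{1}{x}$ in the degenerate $\dim x$-universe $\univ{U}$ yields a globular identification whose underlying $(\dim x, \dim y)$-type is $(\mathsf{inv}_{A}) \at x$: its $\dim y$-faces are $((\mathsf{inv}_{A}) \at x) \dimsub{0}{y} \equiv (A^{-1})^{-1}$ and $((\mathsf{inv}_{A}) \at x) \dimsub{1}{y} \equiv A$ as $\dim x$-types, and its $\dim y$-sides at $\dim x \equiv \mathsf{\epsilon}$ are the degenerate lines at $a$ and $b$. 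By Lemma~\ref{lem:hinv} applied twice we have $(p^{-1_*})^{-1_*} \in a =_{\dim x.(A^{-1})^{-1}} b$, which is exactly the $\dim y \equiv \mathsf{0}$ fibre of $a =_{\dim x.(\mathsf{inv}_{A}) \at x} b$, while $p$ lies in its $\dim y \equiv \mathsf{1}$ fibre; hence the heterogeneous identification type in the statement is well-formed.

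For the construction one fills an open $(\dim x, \dim y, \dim z)$-cube whose composite is the required $(\dim x, \dim z)$-square identifying $(p^{-1_*})^{-1_*} \at x$ with $p \at x$ over the line of types $(\mathsf{inv}_{A}) \at x$, following the shape of the open cube in the proof of Lemma~\ref{lem:inversab}. For the top face one assembles a square $R$ playing the role of the one there, by the same two-step procedure (a Kan composition over the inversion-unit square in $\univ{U}$ and the filler of $(A^{-1})^{-1}$); the back and front faces are the heterogeneous analogues of $X$ and $Y$, where $X$ comes from the filler of the heterogeneous composite $p^{-1_*} \sq_* p$ (Lemma~\ref{lem:hcomp}) and $Y$ is obtained from it by two square-swapping steps (suitably adapted, since the side faces now sit over non-degenerate lines of types); the remaining sides are degenerate. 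The lid is then produced by a heterogeneous Kan composition $\com$ over the appropriate line of types, in the same pattern as the concluding Kan composition in the proof of Lemma~\ref{lem:inversab}, and $(p^{-1_*})^{-1_*}$ is identified with $p$ by double dimension abstraction on the composite.

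An alternative route reduces directly to the homogeneous case: coercing $(p^{-1_*})^{-1_*}$ along the line of types $(\mathsf{inv}_{A}) \at x$ from the $\dim y \equiv \mathsf{0}$ fibre to the $\dim y \equiv \mathsf{1}$ fibre lands it in $a =_{\dim x.A} b$, so it suffices to give a homogeneous identification between $\coe^{\lto{0}{1}}_{\dim y.(\mathsf{inv}_{A}) \at x}((p^{-1_*})^{-1_*})$ and $p$ there (using Lemmas~\ref{lem:inv} and~\ref{lem:inversab}) and to assemble the $\dim y$-indexed identification of the statement from it together with the coercion filler.

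The main obstacle, in either route, is the type bookkeeping rather than the Kan composition itself. Because $A$ now varies with $\dim x$, the auxiliary faces live over several distinct lines of types — $A$, $A^{-1}$, $(A^{-1})^{-1}$, and composites such as $A^{-1} \sq A$ — so the strict equalities that made the faces of the open cube in Lemma~\ref{lem:inversab} fit together are no longer automatic; reconciling them requires invoking the homogeneous groupoid laws on types in $\univ{U}$ (inversability, cancellation and the unit laws) and checking that the $\com$ is taken over a line of types for which the coercions applied to the tube faces agree on overlaps. Once the faces are reconciled the composition goes through as before.
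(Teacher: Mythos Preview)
Your proposal is correct and follows essentially the same approach as the paper: the paper's own argument is just a one-line remark that the result is obtained by performing a heterogeneous Kan composition ($\com$) on the open cube from the homogeneous inversability proof (Lemma~\ref{lem:inversab}), exactly as in the passage from Lemmas~\ref{lem:inv} and~\ref{lem:comp} to Lemmas~\ref{lem:hinv} and~\ref{lem:hcomp}. Your detailed reconstruction of the faces $R$, $X$, $Y$ in the heterogeneous setting, together with the discussion of the type bookkeeping and the alternative coercion-based route, goes well beyond what the paper spells out, but it is precisely the content that the paper's sketch is pointing at.
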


The argument is just like the proofs of our heterogeneous generalizations of inversion (Lemma \ref{lem:hinv}) and composition (Lemma \ref{lem:hcomp}): a straightforward heterogeneous Kan composition on the open cube constructed for the proof of its homogeneous counterpart (Lemma \ref{lem:inversab}). In fact, all heterogeneous counterparts of the propositions from \Cref{hgroupstruct} follow the same pattern (they can all be stated by using their homogeneous counterparts and proven by a heterogeneous filling of their open cubes), so we will simply omit those results.

\section{General notable properties of identifications} \label{moreabout}

In this section we explore a few notable properties of cubical identifications including path induction, properties of loops and some peculiarities of the groupoid operations applied to identification types.

\subsection{Path induction} \label{pathind}

We start with path induction (otherwise known as $J$), a fundamental property that serves as the elimination rule of the identification type in both standard constructive type theory and conventional homotopy type theory \cite[\S1.12.1]{hottbook}. Roughly, path induction states that identifications (paths) can be deformed and retracted without changing their essential characteristics. 

This can be expressed cubically as follows.

\begin{theorem}[Path induction]
Given an $\dim x$-type $A$, a term $a \in A \dimsub{0}{x}$ and a type family $P \in \prod_{(x \in A \dimsub{1}{x})} (a =_{\dim x.A} x) \to \univ{U}$ we have a function
$$\mathsf{J} \in \prod_{(x \in A \dimsub{1}{x})} \prod_{(p \in a =_{\dim x.A} x)} \prod_{(u \in P(a,\refl_a))} P(x,p)$$
\end{theorem}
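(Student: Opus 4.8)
The plan is to obtain $\mathsf{J}$ as a coercion. Having fixed the arguments — a point $x \in A\dimsub{1}{x}$, an identification $p \in a =_{\dim{x}.A} x$, and a term $u \in P(a,\refl_a)$ — it suffices to produce a $\dim z$-type $C$ (an $\dim z$-line in the universe $\univ U$) with $C\dimsub{0}{z} \equiv P(a,\refl_a)$ and $C\dimsub{1}{z} \equiv P(x,p)$; one then puts $\mathsf{J}(x,p,u) :\equiv \coe^{\lto{0}{1}}_{\dim{z}.C}(u)$ and discharges the three $\prod$'s in the evident way. So the whole content of the theorem is the construction of this line, that is, of an identification $P(a,\refl_a) =_{\dim{z}.\univ U} P(x,p)$.

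I would build $C$ by feeding $P$ a path from $(a,\refl_a)$ to $(x,p)$ in the based path space over $a$: concretely, one exhibits for a fresh $\dim z$ a point $w\at z \in A\dimsub{1}{x}$ with $w\at 0 \equiv a$ and $w\at 1 \equiv x$, together with an identification $\omega\at z \in a =_{\dim{x}.A} w\at z$ satisfying $\omega\at 0 \equiv \refl_a$ and $\omega\at 1 \equiv p$, and sets $C\at z :\equiv P(w\at z, \omega\at z)$. When $A$ is a degenerate $\dim x$-type this is the familiar connection construction: take $w\at z :\equiv p\at z$ and $\omega\at z :\equiv \dangle{w} S$, where $S$ is the connection square of $p$ — the square that is the degenerate line on the two faces containing $a$ and is equal to $p$ on the two faces containing $x$ — which is itself produced by a homogeneous Kan composition, in the same spirit as the fillers constructed throughout \Cref{hgroupstruct}. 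In the genuinely heterogeneous case the point $p\at z$ lies only in $A\dimsub{z}{x}$ rather than in $A\dimsub{1}{x}$, so one first transports it along $\coe^{\lto{z}{1}}_{\dim{x}.A}$ and then repairs the $\dim z = 0$ faces of $w$ and $\omega$ by an auxiliary Kan composition; equivalently, one replaces $\hcom$ throughout by the heterogeneous composite $\com$ from the proof of Lemma \ref{lem:hinv}, exactly as the heterogeneous groupoid operations of \Cref{hetgroupoid} were derived from their homogeneous counterparts.

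The main obstacle is not conceptual but a matter of bookkeeping: the tubes of the $\hcom$/$\com$ must be chosen so that the $\dim z = 0$ and $\dim z = 1$ restrictions of $w$, $\omega$, and hence of $C$, degenerate on the nose to $a$, $\refl_a$, $P(a,\refl_a)$ and to $x$, $p$, $P(x,p)$ — strict equality, not merely a further identification — which is the same delicate face-matching already demanded by the two-extent compositions of \Cref{hgroupstruct} (Lemmas \ref{lem:rc}, \ref{lem:inversab}, \ref{lem:lc}). Once $\dim{z}.C$ is in hand with the correct boundary, $\coe^{\lto{0}{1}}_{\dim{z}.C}(u)$ is the desired inhabitant of $P(x,p)$; and since, when $p \equiv \refl_a$ (so that $x \equiv a$), the line $C$ degenerates and coercion along a degenerate line is the identity, one also recovers the expected computation rule that $\mathsf{J}(a,\refl_a,u)$ reduces to $u$ — which is precisely the sense in which path induction is a particular case of the Kan conditions.
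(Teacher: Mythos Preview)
Your proposal is correct and follows the paper's approach exactly: both build the connection square (which the paper names $\mathsf{is\_refl}_p$ and obtains via the fillers of Lemmas~\ref{lem:comp} and~\ref{lem:ru}), feed it together with $p\at z$ into $P$ to obtain a type line in $\univ U$, and then coerce $u$ along that line. One small caveat: your closing claim that $\mathsf{J}(a,\refl_a,u)$ reduces \emph{strictly} to $u$ is optimistic --- the $\hcom$ defining the connection square need not collapse to the constant square when $p \equiv \refl_a$, so in this cubical setting the computation rule for $\mathsf{J}$ generally holds only up to identification, not on the nose (the paper, accordingly, does not assert it).
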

\begin{proof}
Suppose we are given $x \in A \dimsub{1}{x}$, $p \in a =_{\dim x.A} x$ and $u \in P(a,\refl_a)$. The idea of the following proof is to coerce $u$ over a line between the types $P(a,\refl_a)$ and $P(x,p)$, so we shall focus on the construction of this type line first.

We construct the following $(\dim x,\dim y)$-square by homogeneous Kan composition

\vspace{2mm}
\begin{tikzcd}
{} \arrow[r,shorten >= 12pt,-latex,start anchor=center,"\dim x" near start] \arrow[d,shorten >= 9pt,-latex,swap,start anchor=center,"\dim y" near start] & a
       \arrow[rrrrrrrr,equal] \arrow[dd,equal] \arrow[ddrrrrrrrr,phantom,"\mathsf{is\_refl}_p \at y \at x"] &&&&&&&& a \arrow[dd,"p\at y"] \\
 {} &   & \\
&   a \arrow[rrrrrrrr, swap,"p\at x"] &&&&&&&& x
\end{tikzcd}
\vspace{2mm}

\noindent using the filler from the composites of Lemmas \ref{lem:comp} and \ref{lem:ru} (the argument is similar to the one given in our proof of Lemma \ref{lem:op} (i)).

Our next step is to observe that $\mathsf{is\_refl}_p$ induces an $\dim x$-line in $\univ{U}$
$$P(p \at x, \mathsf{is\_refl}_p\at x)$$
from $P(a,\refl_a) \in \univ{U} \dimsub{0}{x}$ to $P(x,p) \in \univ{U} \dimsub{0}{x}$ (as required), since we have
\begin{align*}
 P(p \at x, \mathsf{is\_refl}_p\at x) \dimsub{0}{x}  &\equiv P(p \at 0,\mathsf{is\_refl}_p \at 0) \\
                       &\equiv P(a,\refl_a)
\end{align*}
and, in a similar fashion,
\begin{align*}
 P(p \at x, \mathsf{is\_refl}_p\at x) \dimsub{1}{x}  &\equiv P(p \at 1,\mathsf{is\_refl}_p \at 1) \\
                       &\equiv P(x,p).
\end{align*}

To complete the proof we just need to coerce $u \in P(a,\refl_a)$ on this line, so
$$ \mathsf{J} :\equiv \fun x. \fun p. \fun u. \coe^{\lto{0}{1}}_{\dim{x}.P(p \at x, \mathsf{is\_refl}_p\at x)} (u)$$
gives the required function.
\end{proof}

Since the above principle assumes that identifications must always have one of its endpoints predetermined, this property is sometimes regarded as special form of path induction called \emph{based path induction} in the literature \cite{hottbook}[\S1.12.1].

\subsection{Loops}

In conventional homotopy type theory, loops are identifications with the same start and end points (up to strict equality) \cite[\S2]{hottbook}. This characterization is very convenient because there is no natural way of expressing non-globular (cubical) identifications in conventional homotopy type theory and, consequently, every identification must have exactly two endpoints. Since globular identifications are just a particular sort of cubical identifications, we may view loops as globular identifications with strictly equal endpoints.

Thus, one-dimensional loops are the inhabitants of the type $a =_{\dim x.A} a$ (the loop space of $a$), as usual \cite[\S2]{hottbook}, but two-dimensional loops, that is, the terms of the type $p =_{\dim y.{a =_{\dim x.A}b}} p$ (the loop space of $p$), are homogeneous degenerate squares and so on. 
Cubically, the loop space of the loop space of $a$ is represented by the type $\refl_a =_{\dim y.a=_{\dim x.A}a} \refl_a$ just as in conventional homotopy type theory \cite[\S2.1]{hottbook} and composition of loops is commutative as well:

\begin{theorem}[Eckmann-Hilton]
Given any degenerate $\dim x$-type $A$ and term $a \in A \dimsub {0}{x}$, the following homogeneous composition is commutative
$$\alpha \sq \beta =_{\dim z. a =_{\dim y.a=_{\dim x.A} a} a} \beta \sq \alpha$$
for $\alpha, \beta \in \refl_a =_{\dim y.a=_{\dim x.A}a} \refl_a$.
\end{theorem}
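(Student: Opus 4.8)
The plan is to run the classical Eckmann-Hilton argument, but with every appeal to path induction replaced by an explicit Kan composition resting on the groupoid laws already established. First I would unfold the hypotheses: since $A$ is a degenerate $\dim{x}$-type, the type $a =_{\dim{x}.A} a$ carries no free dimension names and is in particular a degenerate $\dim{y}$-type, so $\alpha \sq \beta$ is the composite of Lemma~\ref{lem:comp} taken in the $\dim{y}$-direction with ambient type $a =_{\dim{x}.A} a$. Concretely this means $\alpha \at y \at x$ and $\beta \at y \at x$ are $(\dim{x},\dim{y})$-squares in $A$ whose entire boundary is the point $a$, and $\alpha \sq \beta$ (resp.\ $\beta \sq \alpha$) is obtained by stacking these squares along one axis. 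The goal is an identification in a fresh direction $\dim{z}$ between these two stackings, i.e.\ a globular identification of $2$-loops.

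The key auxiliary notion is a \emph{horizontal} composite $\alpha \star \beta$ of two such boundary-degenerate squares, obtained by gluing $\alpha$ and $\beta$ along the $\dim{x}$-axis (instead of the $\dim{y}$-axis used by $\sq$) via a one-extent Kan composition, together with the mirrored composite $\alpha \star' \beta$ that glues $\beta$ first. In the present situation, whiskering a $2$-loop by $\refl_a$ changes it only up to a globular identification, which is exactly the content of the right and left unit laws (Lemmas~\ref{lem:ru} and~\ref{lem:lu}) together with the composition unit (Lemma~\ref{lem:compu}) read at the next dimension up; using these, and rebracketing by associativity (Lemma~\ref{lem:assoc}) where needed, I would produce globular identifications $\alpha \sq \beta \simeq \alpha \star \beta$ and $\beta \sq \alpha \simeq \alpha \star' \beta$.

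What remains is to globularly identify $\alpha \star \beta$ with $\alpha \star' \beta$, and this interchange step is the crux of the proof. The plan here is to exhibit both horizontal composites as the bottom faces of two $(\dim{x},\dim{y})$-squares --- built from $\alpha$, $\beta$ and degenerate fillers --- that agree strictly on their top, left, and right faces, so that Lemma~\ref{lem:3o4} immediately delivers a globular identification between them. Constructing those two squares and checking that their three shared faces really coincide strictly is the main obstacle; the feature that makes it possible is the symmetry of our cubical structure, which lets one transpose the $\dim{x}$- and $\dim{y}$-axes of $\alpha$ and $\beta$ so that ``$\alpha$ then $\beta$ horizontally'' and ``$\beta$ then $\alpha$ horizontally'' can be presented with the same boundary data. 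Finally, transitivity (Lemma~\ref{lem:comp}, applied one dimension higher) chains $\alpha \sq \beta \simeq \alpha \star \beta \simeq \alpha \star' \beta \simeq \beta \sq \alpha$ into the required $\mathsf{eh}_{\alpha,\beta} \in \alpha \sq \beta =_{\dim{z}.\, a =_{\dim{y}.a=_{\dim{x}.A}a}a} \beta \sq \alpha$.
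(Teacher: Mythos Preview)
Your proposal is correct in outline and follows the classical Eckmann--Hilton architecture, but it takes a different route from the paper on the crucial interchange step. The paper observes first that in the cubical setting whiskering requires no separate definition at all: since any $1$-identification $r$ can be degenerated to a square, $\alpha \sq r$ and $p \sq \beta$ are literally instances of the ordinary composition of Lemma~\ref{lem:comp}. It then establishes the interchange law $(\alpha \sq r) \sq (q \sq \beta) = (p \sq \beta) \sq (\alpha \sq s)$ by \emph{path induction} on $\alpha,\beta,p,r$ (path induction having been derived from the Kan conditions in \Cref{pathind}), and finishes with the unit laws exactly as you do.

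So the substantive difference is that you deliberately avoid path induction and propose to get the interchange directly from Lemma~\ref{lem:3o4} together with the symmetry (dimension transposition) of the cubical structure, whereas the paper is happy to invoke the already-derived $\mathsf{J}$ and keep the argument short. Your approach is more ``natively cubical'' and makes the role of symmetry explicit; the paper's approach is shorter and emphasizes that once $\mathsf{J}$ is available the book-HoTT proof goes through verbatim, with the bonus simplification that whiskering collapses to ordinary $\sq$. Both are valid; the paper's version trades your explicit box-filling for a single appeal to induction.
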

\begin{proof}
By a cubical simplification of the proof of Theorem 2.1.6 in \cite{hottbook}.

Right whiskering is an operation that given a two-dimensional identification $\alpha \in p =_{\dim y. a =_{\dim x.A} b} q$ and a one-dimensional identification $r \in b =_{\dim x.A} c$, returns a term of type $p \sq r =_{\dim y. a =_{\dim x.A} b} q \sq r $. In conventional homotopy type theory this is an operation that requires definition \cite[\S2.1]{hottbook}. Cubically, however, it turns out that right whiskering is just a particular instance of homogeneous composition because degeneration ensures that the composite identification $\alpha \sq r$ will always be a well-formed term, as can be seen below:

\vspace{2mm}
\begin{tikzcd}
{} \arrow[r,shorten >= 12pt,-latex,start anchor=center,"\dim x" near start] \arrow[d,shorten >= 9pt,-latex,swap,start anchor=center,"\dim y" near start] &
    a  \arrow[rrrr,"p\at x"] \arrow[dd,equal] \arrow[ddrrrr, phantom, "\alpha \at y \at x"] &&&& b \arrow[dd,equal] & b \arrow[rrrr,"r\at x"] \arrow[dd,equal] \arrow[ddrrrr, phantom, "r"] &&&& c \arrow[dd,equal]\\
 {} &   & &&& {} &   & \\
&   a \arrow[rrrr, swap,"q\at x"] &&&& b & b \arrow[rrrr,swap,"r\at x"] &&&& c
\end{tikzcd}
\vspace{2mm}

\noindent Naturally, the same holds for left whiskering, which states that for any one-dimensional identification $p \in a =_{\dim x.A} b$ and two-dimensional identification $\beta \in r =_{\dim y. b =_{\dim x.A} c} s$ we have an inhabitant of the type $p \sq r =_{\dim y. a =_{\dim x.A} b} p \sq s $.

It is easy to see by path induction on the abovementioned identifications $\alpha$, $\beta$, $p$ and $r$ that whiskering is commutative, that is, the following is true
$$
(\alpha \sq r) \sq (q \sq \beta) =_{\dim y. a =_{\dim x.A} c}  (p \sq \beta) \sq (\alpha \sq s).
$$
But now we note that the above proposition already shows that $\alpha \sq \beta = \beta \sq \alpha$ always holds when $\alpha, \beta \in \refl_a =_{\dim y.a=_{\dim x.A}a} \refl_a$ because the reflexivity element is both a right and left unit for composition (see Lemmas \ref{lem:ru} and \ref{lem:lu}).
\end{proof}


\subsection{Groupoid operations on identification types} \label{idtypegroupoid}

We conclude this paper with a few remarks and results about how the groupoid structure holds for identification types (seen not as types but identifications).

How can we view types as identifications? Recall from \Cref{hetgroupoid} that when $A$ is an $\dim x$-type in a type universe $\mathcal{U}$ we have an identification
$$A \dimsub{0}{x} =_{\dim x.\univ{U}} A \dimsub{1}{x}$$
given by the term $\dangle{x}A$ (which to avoid pedantism we simply write as $A$). Thus, if we consider an identification $\dim x$-type like $p \at x =_{\dim y.A} q \at x$, given the usual assumptions that $p \in a=_{\dim x.A \dimsub{0}{y}} b$ and $q \in c=_{\dim x.A \dimsub{1}{y}} d$ and so on (see Theorem \ref{idistr} below), then we may see this type as an identification in the universe $\univ{U}$ between the types
$$p \at x =_{\dim y.A} q \at x \dimsub{0}{x} \equiv a=_{\dim y.A \dimsub{0}{x}}c$$ 
and 
$$p \at x =_{\dim y.A} q \at x \dimsub{1}{x} \equiv b=_{\dim y.A \dimsub{1}{x}}d.$$
In other words, we may view the identification $\dim x$-type $p \at x =_{\dim y.A} q \at x$ as an identification of identification types inhabiting the type
$$(a =_{\dim y.A \dimsub{0}{x}} c)  =_{\dim x.\univ{U}}  (b =_{\dim y.A \dimsub{1}{x}}d).$$
The moral of the story is that, since identification types may be regarded as identifications, they are also subject to the groupoid operations like all identifications are by default.

In fact, we can show by path induction that inversion can be distributed over the identification type:


\begin{theorem}[Identification type inversion distribution] \label{idistr}
Suppose that $A \in \univ{U}$ is an $(\dim x, \dim y)$-type and $p \in a=_{\dim x.A \dimsub{0}{y}} b$ and $q \in c=_{\dim x.A \dimsub{1}{y}} d$. We have
$$(p \at x =_{\dim y.A} q \at x)^{-1} =_{\dim y.(b =_{\dim y. A \dimsub{1}{x}} d) =_{\dim x.\univ{U}} (a =_{\dim y. A \dimsub{0}{x}} c)} (p^{-1} \at x =_{\dim y.A^{-1}} q^{-1} \at x)$$
where $a \in A\dimsub{0}{x}\dimsub{0}{y}$, $b \in A\dimsub{1}{x}\dimsub{0}{y}$, $c \in A\dimsub{0}{x}\dimsub{1}{y}$, $d \in A\dimsub{1}{x}\dimsub{1}{y}$.
\end{theorem}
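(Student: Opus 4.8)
The plan is to argue by path induction, following the remark immediately before the statement, reducing the two identifications $p$ and $q$ to reflexivities one at a time. Concretely, I would regard the displayed type as a type family into $\univ{U}$ indexed first by the pair $(b,p)$ with $b\in A\dimsub{1}{x}\dimsub{0}{y}$ and $p\in a=_{\dim x.A\dimsub{0}{y}}b$ (holding $c,d,q$ as parameters) and then by $(d,q)$, and apply based path induction (\Cref{pathind}): to $p$ with the $\dim x$-type $A\dimsub{0}{y}$ and basepoint $a$, and to $q$ with $A\dimsub{1}{y}$ and basepoint $c$. This leaves the single case $p\equiv\refl_a$, $q\equiv\refl_c$ (so $b\equiv a$, $d\equiv c$). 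Two things must be checked before the induction runs. First, that the motive is well-formed: here $p^{-1}$, $q^{-1}$ abbreviate the heterogeneous inverses of \Cref{lem:hinv}, $A^{-1}$ is read as in \Cref{hetgroupoid} with $A^{-1}\dimsub{0}{y}\equiv(A\dimsub{0}{y})^{-1}$ and $A^{-1}\dimsub{1}{y}\equiv(A\dimsub{1}{y})^{-1}$, and a face computation of the kind carried out before \Cref{lem:hinv} shows that the two sides are $\dim x$-lines of types sharing the endpoints $b=_{\dim y.A\dimsub{1}{x}}d$ and $a=_{\dim y.A\dimsub{0}{x}}c$, so the outer identification type typechecks. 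Second, since \Cref{pathind} is phrased for homogeneous identifications, when $A$ is not degenerate in $\dim x$ one first proves the claim for degenerate $(\dim x,\dim y)$-types $A$ and transfers it to the general case by a heterogeneous Kan composition over $\dangle{x}(A\dimsub{0}{y})$ and $\dangle{x}(A\dimsub{1}{y})$, exactly as the heterogeneous results of \Cref{heterogroupstruct} are obtained from their homogeneous versions.

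In the remaining case the $\beta$-rule collapses $\refl_a\at x$ to $a$ and $\refl_c\at x$ to $c$, so the goal becomes an identification between the universe-level inverse $(a=_{\dim y.A}c)^{-1}$ and $\dangle{x}(\refl_a^{-1}\at x=_{\dim y.A^{-1}}\refl_c^{-1}\at x)$; since $\refl_a^{-1}$ and $\refl_c^{-1}$ still run from $a$ to $a$ and from $c$ to $c$, a short check with the faces of $A^{-1}$ shows that both are $\dim x$-lines of types with the common endpoints $a=_{\dim y.A\dimsub{1}{x}}c$ and $a=_{\dim y.A\dimsub{0}{x}}c$. The only discrepancy is that $\refl_a$, $\refl_c$ have been replaced inside the identification-type former by their (non-degenerate) inverses, and the inversion-unit Lemma \ref{lem:invu} -- or, strictly speaking, its heterogeneous counterpart from \Cref{heterogroupstruct} -- furnishes identifications $\mathsf{iu}_a$, $\mathsf{iu}_c$ bridging exactly this discrepancy. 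I would then close the case by a congruence argument of the same routine kind as the whiskering in the Eckmann--Hilton proof -- applying the identification-type former, and inversion on the universe, to $\mathsf{iu}_a$ and $\mathsf{iu}_c$ -- packaged as a single homogeneous Kan composition, and extract the required globular identification by a double dimension abstraction.

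The hard part will not be conceptual but the bookkeeping: pinning down precisely which open cube has $(a=_{\dim y.A}c)^{-1}$ and $\dangle{x}(\refl_a^{-1}\at x=_{\dim y.A^{-1}}\refl_c^{-1}\at x)$ as its two competing faces, with every remaining face matching strictly, and similarly keeping the heterogeneous-transfer step honest (which precise heterogeneous analogue of each lemma is being invoked and over which universe-path). Neither step is deep, but both are where the fiddly diagram-chasing lives, and making the strict equalities of boundaries line up is the usual source of error.
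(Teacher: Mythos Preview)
Your plan is essentially the paper's, with one organizational difference worth noting. The paper performs \emph{triple} path induction, inducting first on $A$ itself (regarded as an identification $\dangle{x}A$ in $\univ{U}$) and only then on $p$ and $q$. This collapses $A$ to the degenerate case up front, in place of your separate ``prove for degenerate $A$, then transfer heterogeneously'' step. The gain is a particularly clean base case: with $A$ degenerate the paper can apply Lemma~\ref{lem:invu} \emph{pointwise} to $a$, $A$, and $c$ simultaneously, forming the square
\[
\mathsf{U}:\equiv\bigl(\mathsf{iu}_a\at y\at x \;=_{\dim y.\,\mathsf{iu}_A\at y\at x}\;\mathsf{iu}_c\at y\at x\bigr),
\]
and then a second square $\mathsf{I}:\equiv(\mathsf{iu}_{a=_{\dim y.A}c})^{-1}\at y\at x$ at the universe level; the composite of $\mathsf{I}$ with $\mathsf{U}$ finishes. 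Your sketch of the base case gestures at both ingredients (the $\mathsf{iu}_a,\mathsf{iu}_c$ and ``inversion on the universe''), but does not isolate the middle $\mathsf{iu}_A$ component, which is precisely what the extra induction on $A$ buys you: without it you would need a heterogeneous analogue of the inversion unit for the type line $A$ itself, and that is exactly the fiddly bookkeeping you anticipate. Inducting on $A$ first dissolves it.
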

\begin{proof} 
The statement of this theorem can be expressed as the expectation of the construction of the following square

\vspace{2mm}
\begin{tikzcd}
{} \arrow[r,shorten >= 12pt,-latex,start anchor=center,"\dim x" near start] \arrow[d,shorten >= 9pt,-latex,swap,start anchor=center,"\dim y" near start] & b =_{\dim y. A \dimsub{1}{x}} d
       \arrow[rrrrrrr,"(p \at x =_{\dim y.A} q \at x)^{-1} \at x"] \arrow[dd,equal] \arrow[ddrrrrrrr, phantom,"\mathsf{id}_{p,q} \at y \at x"] &&&&&&& a =_{\dim y. A \dimsub{0}{x}} c
        \arrow[dd, equal] \\
 {} &   & \\
& b =_{\dim y. A \dimsub{1}{x}} d \arrow[rrrrrrr, swap,"p^{-1} \at x =_{\dim y.(\dangle{x} A^{-1}) \at x} q^{-1} \at x"] &&&&&&& a =_{\dim y. A \dimsub{0}{x}} c
\end{tikzcd}
\vspace{2mm}

\noindent which we shall call $\mathsf{id}_{p,q} \at y \at x$. To that end, we proceed by triple path induction.

First we do path induction on $A$, which allows us to assume that $A$ is $\refl_A$ (in other words, we can suppose that $A$ is a degenerate $(\dim x, \dim y)$-type). By induction on $p$, it suffices to assume that $b$ is $a$ and that $p$ is $\refl_a$. Again, by induction on $q$, it suffices to assume also that $d$ is $c$ and $q$ is $\refl_c$.

After all that, the above square reduces to the following:

\vspace{2mm}
\begin{tikzcd}
{} \arrow[r,shorten >= 12pt,-latex,start anchor=center,"\dim x" near start] \arrow[d,shorten >= 9pt,-latex,swap,start anchor=center,"\dim y" near start] & a =_{\dim y. A \dimsub{0}{x}} c
       \arrow[rrrrrrr,"(\refl_a \at x =_{\dim y.A} \refl_c \at x)^{-1} \at x"] \arrow[dd,equal] \arrow[ddrrrrrrr, phantom,] &&&&&&& a =_{\dim y. A \dimsub{0}{x}} c
        \arrow[dd,equal] \\
 {} &   & \\
& a =_{\dim y. A \dimsub{0}{x}} c \arrow[rrrrrrr, swap,"(\refl_a)^{-1} \at x =_{\dim y.(\refl_A)^{-1} \at x} (\refl_c)^{-1} \at x"] &&&&&&& a =_{\dim y. A \dimsub{0}{x}} c
\end{tikzcd}
\vspace{2mm}

\noindent However, recall that the reflexivity element equals its inverse up to globular identification (Lemma \ref{lem:invu}), so the term (which we abbreviate by \textsf{U})
$$\mathsf{iu}_a \at y \at x =_{\dim y.\mathsf{iu}_A \at y \at x} \mathsf{iu}_c \at y \at x$$
is an $(\dim x, \dim y)$-square 

\vspace{2mm}
\begin{tikzcd}
{} \arrow[r,shorten >= 12pt,-latex,start anchor=center,"\dim x" near start] \arrow[d,shorten >= 9pt,-latex,swap,start anchor=center,"\dim y" near start] &
    a =_{\dim y.A \dimsub{0}{x}} c  \arrow[rrrrrrr,equal,"\refl_a \at x =_{\dim y.A} \refl_c \at x"] \arrow[dd,equal] \arrow[ddrrrrrrr, phantom, "\mathsf{U} :\equiv \mathsf{iu}_a \at y \at x =_{\dim y.\mathsf{iu}_A \at y \at x} \mathsf{iu}_c \at y \at x"] &&&&&&& a =_{\dim y.A \dimsub{0}{x}} c \arrow[dd,equal] \\
 {} &   & \\
&   a =_{\dim y.A \dimsub{0}{x}} c \arrow[rrrrrrr, swap,"(\refl_a)^{-1} \at x =_{\dim y.(\refl_A)^{-1} \at x} (\refl_c)^{-1} \at x"] &&&&&&& a =_{\dim y.A \dimsub{0}{x}} c
\end{tikzcd}
\vspace{2mm}

\noindent since the top (1-2), bottom (3-4), and left and right (5-6) faces of \textsf{U} are

\begin{align}
\mathsf{U} \dimsub{0}{y} &\equiv
  \mathsf{iu}_a \at 0 \at x =_{\dim y.\mathsf{iu}_A \at 0 \at x} \mathsf{iu}_c \at 0 \at x \\
  &\equiv \refl_a \at x =_{\dim y.A} \refl_c \at x,
\end{align}

\begin{align}
\mathsf{U} \dimsub{1}{y} &\equiv \mathsf{iu}_a \at 1 \at x =_{\dim y.\mathsf{iu}_A \at 1 \at x} \mathsf{iu}_c \at 1 \at x \\
   &\equiv (\refl_a)^{-1} \at x =_{\dim y.(\refl_A)^{-1}\at x} (\refl_c)^{-1} \at x,
\end{align}

and

\begin{align}
\mathsf{U} \dimsub{\epsilon}{x} &\equiv
  \refl_a \at x =_{\dim y.A} \refl_c \at x \\
  &\equiv a =_{\dim y.A} c. 
\end{align}

Finally, we observe that the top face of \textsf{U} is a degenerate line because
$$(\refl_a \at x =_{\dim y.A} \refl_c \at x) \equiv \refl_{a =_{\dim y.A} c}$$
so we apply Lemma \ref{lem:invu} again to obtain a new $(\dim x, \dim y)$-square \textsf{I}:

\vspace{2mm}
\begin{tikzcd}
{} \arrow[r,shorten >= 12pt,-latex,start anchor=center,"\dim x" near start] \arrow[d,shorten >= 9pt,-latex,swap,start anchor=center,"\dim y" near start] &
    a =_{\dim y.A \dimsub{0}{x}} c  \arrow[rrrrrrr,"(\refl_a \at x =_{\dim y.A} \refl_c \at x)^{-1} \at x"] \arrow[dd,equal] \arrow[ddrrrrrrr, phantom, "I :\equiv (\mathsf{iu}_{a =_{\dim y.A} c})^{-1} \at y \at x"] &&&&&&& a =_{\dim y.A \dimsub{0}{x}} c \arrow[dd,equal] \\
 {} &   & \\
&   a =_{\dim y. A \dimsub{0}{x}} c \arrow[rrrrrrr,swap,equal,"\refl_a \at x =_{\dim y.A} \refl_c \at x"] &&&&&&& a =_{\dim y.A \dimsub{0}{x}} c
\end{tikzcd}
\vspace{2mm}

\noindent We complete the proof by letting $\mathsf{id}_{p,q}$ be the homogeneous composition of the squares $\mathsf{I}$ and $\mathsf{U}$ (regarded as identifications).

\end{proof}


It is an interesting fact that when (homogeneous) inversion is well-defined for a two-dimensional identification such as $\alpha \in p=_{\dim{y}.r \at y =_{\dim x.A} s \at y}q$, both the original identification $\alpha$ and its inversion $\alpha^{-1}$ may be regarded as squares in $A$. Curiously enough, despite the general applicability of heterogeneous inversion, the same cannot be said for the heterogeneous inversion of $\alpha$
$$\alpha^{-1_*} \in q=_{\dim{y}.(r \at y =_{\dim x.A} s \at y)^{-1}}p,$$
which can only be pictured as a line from $q$ to $p$ in the inverted identification type $(r \at y =_{\dim x.A} s \at y)^{-1}$. The reason is clear: after the inversion of an identification type the resulting type stops being an identification type.

Fortunately, because the above theorem tells us that inversion can always be distributed over the identification type, the heterogeneous inversion of any square in $A$ equals a swapped square in $A^{-1}$ (in the sense of Lemma \ref{lem:swap}) up to identification.


\begin{corollary}[Heterogeneous square swap] \label{lem:hswap}
For every $(\dim x, \dim y)$-type $A \in \univ{U}$ and every $p \in a =_{\dim x.A \dimsub{0}{y}} b$, $q \in c =_{\dim x.A \dimsub{1}{y}} d$, $r \in a =_{\dim y.A \dimsub{0}{x}} c$, $s \in b =_{\dim y.A \dimsub{1}{x}} d$, we have
$$\mathsf{hswap}_\alpha \in (p=_{\dim{y}.(r \at y =_{\dim x.A} s \at y)^{-1}}q) =_{\dim x.\univ{U}} (p=_{\dim{y}.{r}^{-1} \at y =_{\dim x.A^{-1}} {s}^{-1} \at y}q)$$
where $a \in A\dimsub{0}{x}\dimsub{0}{y}$, $b \in A\dimsub{1}{x}\dimsub{0}{y}$, $c \in A\dimsub{0}{x}\dimsub{1}{y}$, $d \in A\dimsub{1}{x}\dimsub{1}{y}$.
\end{corollary}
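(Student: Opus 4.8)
The plan is to derive $\mathsf{hswap}_\alpha$ as a corollary of the identification type inversion distribution theorem (Theorem \ref{idistr}), by post-composing the universe path that theorem produces with the identification type former. Reading Theorem \ref{idistr} with its two dimensions interchanged and with $p,q$ replaced by $r,s$ — which is legitimate, since an $(\dim x,\dim y)$-type is also a $(\dim y,\dim x)$-type and $r,s$ supply the required boundary data — yields an identification in $\univ U$ whose two ends are precisely the $\dim y$-types
$$(r \at y =_{\dim x.A} s \at y)^{-1} \qquad\text{and}\qquad r^{-1} \at y =_{\dim x.A^{-1}} s^{-1} \at y,$$
that is, exactly the two index families occurring in the two identification types whose equality the corollary asserts. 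Applying this identification to its dimension name(s) (and passing to a face if needed) gives, for a fresh dimension $\dim z$, a $\dim z$-line $L$ in $\univ U$ connecting these two $\dim y$-types.

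First I would spell out the boundary of $L$: using the face computations $A^{-1} \dimsub{0}{x} \equiv A \dimsub{1}{x}$ and $A^{-1} \dimsub{1}{x} \equiv A \dimsub{0}{x}$ from \Cref{hetgroupoid}, the endpoint rules for $r$ and $s$, and the analogous facts for homogeneous inversion inside the degenerate universe $\univ U$, one checks that for every value of $\dim z$ the two $\dim y$-faces of $L$ are the types in which $p$ and $q$ (in the orientation of the statement) must live. Granting this, $p =_{\dim y.(-)} q$ is just a type-former, so applying it to the $\dim y$-type $L$ — which carries the extra free dimension $\dim z$ — produces a $\dim z$-type whose two $\dim z$-faces are exactly $(p =_{\dim y.(r \at y =_{\dim x.A} s \at y)^{-1}} q)$ and $(p =_{\dim y. r^{-1} \at y =_{\dim x.A^{-1}} s^{-1} \at y} q)$; abstracting $\dim z$ and renaming it to $\dim x$ (which already occurs only bound in both endpoint types, so the outer $=_{\dim x.\univ U}$ is the homogeneous identification type of $\univ U$) gives the desired $\mathsf{hswap}_\alpha$. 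In particular this records, as announced just before the statement, that the heterogeneous inverse of a square in $A$ and the corresponding ``swapped'' square in $A^{-1}$ inhabit types that are identified in $\univ U$.

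The main obstacle is the boundary bookkeeping just described. One has to present both ends of the universe path from Theorem \ref{idistr} with compatible $\dim y$-boundaries so that $p$ and $q$ are admissible uniform endpoints of the identification type former along the whole of $L$; this forces careful tracking of how the (heterogeneous) inversions of $r$ and $s$ permute the $\dim y$-endpoints $a,b,c,d$, and of the precise orientation of $p$ and $q$ in the statement (in the surrounding discussion the analogous terms $\alpha^{-1_*}$ and $\mathsf{swap}_\alpha$ run from $q$ to $p$). Everything else — the invocation of Theorem \ref{idistr} and the fact that forming an identification type respects paths of types — is routine, so, in keeping with the remark following Lemma \ref{lem:hinversab}, no genuinely new Kan composition is required.
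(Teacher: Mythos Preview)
Your proposal is correct and follows essentially the same route as the paper: invoke Theorem~\ref{idistr} (with the roles of the two dimensions and of $p,q$ versus $r,s$ adjusted) to obtain an identification $\mathsf{id}_{r,s}$ between the $\dim y$-types $(r\at y =_{\dim x.A} s\at y)^{-1}$ and $r^{-1}\at y =_{\dim x.A^{-1}} s^{-1}\at y$, and then form the identification type $p =_{\dim y.\mathsf{id}_{r,s}\at x} q$ along it to get the required line in $\univ U$. The paper's proof is terser and does not spell out the dimension swap or the boundary checks you mention, but the argument is the same.
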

\begin{proof}
By Theorem \ref{idistr} we have an identification between the $\dim y$-types $(r \at y =_{\dim x.A} s \at y)^{-1}$ and ${r}^{-1} \at y =_{\dim x.A^{-1}} {s}^{-1} \at y$, so
$$p=_{\dim{y}.\mathsf{id}_{r,s} \at x} q$$
is an $\dim x$-line in $\univ{U}$ from $p=_{\dim{y}.(r \at y =_{\dim x.A} {s})^{-1} \at y}q$ to $p=_{\dim{y}.{r}^{-1} \at y =_{\dim x.A^{-1}} {s}^{-1} \at y}q$.
\end{proof}


We can show by path induction that composition can be distributed over the identification type as well. We enunciate this as follows:

\begin{theorem}[Identification type composition distribution] \label{compdistr}
Given $(\dim x, \dim y)$-types $A, B \in \univ{U}$ such that $A\dimsub{1}{x} \equiv B \dimsub{0}{x}$ and $p \in a=_{\dim x.A \dimsub{0}{y}}b$ and $q \in c=_{\dim x.A \dimsub{1}{y}}d$, and $r \in b=_{\dim x.B \dimsub{0}{y}}e$ and $s \in d=_{\dim x.B \dimsub{1}{y}}f$, we have
$$(p \at x =_{\dim y.A} q \at x) \sq (r \at x =_{\dim y.B} s \at x) =_{\dim y.(a =_{\dim y. A \dimsub{1}{x}} c) =_{\dim x.\univ{U}} (e =_{\dim y. A \dimsub{0}{x}} f)} ((p \,\sq\, r) \at x =_{\dim y.A \,\sq\, B} (q \,\sq\, s) \at x)$$
where $a \in A\dimsub{0}{x}\dimsub{0}{y}$, $b \in A\dimsub{1}{x}\dimsub{0}{y}$, $c \in A\dimsub{0}{x}\dimsub{1}{y}$, $d \in A\dimsub{1}{x}\dimsub{1}{y} $, $e \in B\dimsub{0}{x}\dimsub{1}{y}$, $f \in B\dimsub{1}{x}\dimsub{1}{y}$.
\end{theorem}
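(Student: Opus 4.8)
The plan is to follow the proof of Theorem~\ref{idistr} step for step, replacing inversion by composition throughout. The content of the statement is the construction of an $(\dim x,\dim y)$-square in $\univ{U}$ --- call it $\mathsf{cd}_{p,q,r,s} \at y \at x$ --- whose top and bottom $\dim x$-lines are the two sides of the asserted identification, namely $(p \at x =_{\dim y.A} q \at x) \sq (r \at x =_{\dim y.B} s \at x)$ and $((p \,\sq\, r) \at x =_{\dim y.A \,\sq\, B} (q \,\sq\, s) \at x)$, and whose $\dim y$-sides are the two strictly equal endpoint identification types. Both composites are legitimate precisely because of the hypothesis $A\dimsub{1}{x} \equiv B\dimsub{0}{x}$ and its consequences under $\dim y$-substitution (these are exactly the side conditions required by Lemma~\ref{lem:hcomp}).

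The square will be produced by a sixfold path induction ($\mathsf{J}$), peeling off one endpoint of one generating datum at a time, in the order $A$, $B$, $p$, $q$, $r$, $s$; each induction type-checks because $\mathsf{J}$ applies to arbitrary $\dim x$-types. After inducting on $A$ and $B$ we may assume both are degenerate in the $\dim x$ direction; since $A\dimsub{1}{x}\equiv B\dimsub{0}{x}$, this forces $A\equiv B$, and $A\sq B$ is then identified with $A$ via Lemma~\ref{lem:compu} applied to $A\dimsub{0}{x}$ regarded as a point of $\univ{U}$. After inducting on $p$ and $q$ we may assume $b$ is $a$ with $p\equiv\refl_a$ and $d$ is $c$ with $q\equiv\refl_c$; and after inducting on $r$ and $s$, $e$ is $a$ with $r\equiv\refl_a$ and $f$ is $c$ with $s\equiv\refl_c$. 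In this fully reduced situation the top $\dim x$-line becomes $\refl_{a=_{\dim y.A}c}\sq\refl_{a=_{\dim y.A}c}$ --- using, exactly as in Theorem~\ref{idistr}, that $\refl_a \at x =_{\dim y.A}\refl_c \at x$ is the degenerate line $\refl_{a=_{\dim y.A}c}$ --- and the bottom line becomes $((\refl_a\sq\refl_a)\at x =_{\dim y.A\sq B}(\refl_c\sq\refl_c)\at x)$.

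It then remains to identify these two degenerate-looking lines, which is carried out just as the squares $\textsf{U}$ and $\textsf{I}$ were combined in the proof of Theorem~\ref{idistr}: Lemma~\ref{lem:compu} (the composition unit) applied to $a$, to $c$, and to the type $a=_{\dim y.A}c$ supplies the requisite ``squares of squares'', and a single homogeneous composition of them delivers $\mathsf{cd}_{p,q,r,s}$; the heterogeneous composition $\sq_*$ of Lemma~\ref{lem:hcomp} is needed only to make the statement type-check before the inductions begin, since after the reductions everything is homogeneous. I expect the main obstacle to be purely bureaucratic rather than conceptual: verifying that each of the six path inductions remains well-typed given the previous ones --- in particular that $A\sq B$, and likewise $A\dimsub{0}{y}\sq B\dimsub{0}{y}$ and $A\dimsub{1}{y}\sq B\dimsub{1}{y}$, collapse appropriately to degenerate types, and that the faces of the squares fed into the final homogeneous composition agree on the nose. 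No new idea beyond Theorem~\ref{idistr} is required.
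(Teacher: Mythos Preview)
Your proposal is correct and matches the paper's own proof essentially line for line: the paper also reduces by path induction to the case where everything is reflexivity (its displayed reduced square has top $(\refl_a \at x =_{\dim y.A} \refl_c \at x) \sq (\refl_a \at x =_{\dim y.A} \refl_c \at x)$ and bottom $((\refl_a \sq \refl_a) \at x =_{\dim y.A \sq A} (\refl_c \sq \refl_c)\at x)$), and then finishes using Lemma~\ref{lem:compu} in place of Lemma~\ref{lem:invu}, exactly as you describe. You are merely more explicit than the paper in enumerating the six inductions and in flagging the bookkeeping around $A\sq B$; the paper simply says ``the proof follows the same idea as the proof of Theorem~\ref{idistr}'' and leaves it at that.
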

\begin{proof} 
The proof follows the same idea as the proof of Theorem \ref{idistr}, so we shall skip the details. 
We just note that, by path induction, it suffices to find the following square

\vspace{2mm}
\begin{tikzcd}
{} \arrow[r,shorten >= 12pt,-latex,start anchor=center,"\dim x" near start] \arrow[d,shorten >= 9pt,-latex,swap,start anchor=center,"\dim y" near start] & a =_{\dim y. A \dimsub{0}{x}} c
       \arrow[rrrrrrr,"(\refl_a \at x =_{\dim y.A} \refl_c \at x) \sq (\refl_a \at x =_{\dim y.A} \refl_c \at x) \at x"] \arrow[dd,equal] \arrow[ddrrrrrrr, phantom,] &&&&&&& a =_{\dim y. A \dimsub{0}{x}} c
        \arrow[dd, equal] \\
 {} &   & \\
& a =_{\dim y. A \dimsub{0}{x}} c \arrow[rrrrrrr, swap,"((\refl_a \sq \refl_a) \at x =_{\dim y.A \sq A} (\refl_c \sq \refl_c)\at x) \at x"] &&&&&&& a =_{\dim y. A \dimsub{0}{x}} c
\end{tikzcd}
\vspace{2mm}

\noindent which can be easily constructed using Lemma \ref{lem:compu} in a similar way to that of Theorem \ref{idistr}.
\end{proof}

This theorem shows that the gluing operation described in \Cref{hetgroupoid} is closely related to homogeneous composition. Put differently, it shows that the homogeneous composition of two squares induces a gluing operation:

\begin{corollary}[Heterogeneous square gluing] \label{lem:hglue}
Suppose that $A, B \in \univ{U}$ are $(\dim x, \dim y)$-types such that $A\dimsub{1}{y} \equiv B \dimsub{0}{y}$. For every $p \in a =_{\dim x.A \dimsub{0}{y}} b$, $q \in c =_{\dim x.A \dimsub{1}{y}} d$, $r \in a =_{\dim y.A \dimsub{0}{x}} c$, $s \in b =_{\dim y.A \dimsub{1}{x}} d$, $t \in c =_{\dim y.B \dimsub{0}{x}} e$, $u \in d =_{\dim y.B \dimsub{1}{x}} f$, the following holds
$$\mathsf{hglue}_\alpha \in (p=_{\dim{y}.(r \at y =_{\dim x.A} s \at y) \sq (t \at y =_{\dim x.B} u \at y)}q) =_{\dim x.\univ{U}} (p=_{\dim{y}.(r \,\sq\, t) \at y =_{\dim x.A \,\sq\, B} (s \,\sq\, u) \at y}q)$$
where $a \in A\dimsub{0}{x}\dimsub{0}{y}$, $b \in A\dimsub{1}{x}\dimsub{0}{y}$, $c \in A\dimsub{0}{x}\dimsub{1}{y}$, $d \in A\dimsub{1}{x}\dimsub{1}{y} $, $e \in B\dimsub{0}{x}\dimsub{1}{y}$, $f \in B\dimsub{1}{x}\dimsub{1}{y}$.
\end{corollary}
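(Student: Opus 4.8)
The plan is to deduce this corollary from Theorem \ref{compdistr} in exactly the way Corollary \ref{lem:hswap} was deduced from Theorem \ref{idistr}. Theorem \ref{compdistr} identifies, as points of $\univ{U}$, the two $\dim{y}$-types that occur as the indexing types of the two identification types in the statement, and packaging that line inside the construction $p =_{\dim{y}.-} q$ yields the required $\dim{x}$-line in $\univ{U}$. In particular no new Kan composition is needed here; all the genuine work has already been carried out in Theorem \ref{compdistr}.

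Concretely, I would first check that the data $A$, $B$, $p$, $q$, $r$, $s$, $t$, $u$, $a,\dots,f$ meet the hypotheses of Theorem \ref{compdistr} after exchanging the roles of the dimension names $\dim{x}$ and $\dim{y}$: Theorem \ref{compdistr} is stated with the two $(\dim{x},\dim{y})$-types glued along their $\dim{x}$-faces, whereas here they are glued along their $\dim{y}$-faces, $A\dimsub{1}{y} \equiv B\dimsub{0}{y}$, and since $A$ and $B$ are generic $(\dim{x},\dim{y})$-types this is merely a renaming under which the $p,q,r,s$ of the theorem play the role of the present $r,s,t,u$ and its point parameters match $a,c,b,d,e,f$. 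The theorem then furnishes an identification $\gamma$ in $\univ{U}$ between the $\dim{y}$-types $(r \at y =_{\dim x.A} s \at y) \sq (t \at y =_{\dim x.B} u \at y)$ and $(r \,\sq\, t) \at y =_{\dim x.A \,\sq\, B} (s \,\sq\, u) \at y$.

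Next I would form $\gamma \at x$, the $\dim{x}$-line in $\univ{U}$ whose endpoints are those two $\dim{y}$-types, and consider the identification $\dim{x}$-type $p =_{\dim{y}.\gamma \at x} q$. Using the $\beta$-rule together with the boundary computations for composite lines and for $A \sq B$ recorded in \Cref{hetgroupoid}, its $\dimsub{0}{x}$ face is $p =_{\dim{y}.(r \at y =_{\dim x.A} s \at y) \sq (t \at y =_{\dim x.B} u \at y)} q$ and its $\dimsub{1}{x}$ face is $p =_{\dim{y}.(r \,\sq\, t) \at y =_{\dim x.A \,\sq\, B} (s \,\sq\, u) \at y} q$; one also checks that $p$ and $q$ genuinely inhabit the $\dimsub{0}{y}$ and $\dimsub{1}{y}$ faces of both $\gamma \at 0$ and $\gamma \at 1$, which again follows from the endpoint conditions on $p,\dots,u$ and from $A\dimsub{1}{y} \equiv B\dimsub{0}{y}$. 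Abstracting, I would set $\mathsf{hglue}_\alpha :\equiv \dangle{x}\,(p =_{\dim{y}.\gamma \at x} q)$, and this is the desired element.

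The only thing that requires care is bookkeeping rather than any new idea, and this is where I expect the (mild) difficulty to lie: keeping the two uses of the dimension names $\dim{x}$ and $\dim{y}$ apart (the ``moving'' dimension of the conclusion versus the structural dimensions built into $A$ and $B$), checking that all the gluability and endpoint side conditions of Theorem \ref{compdistr} survive the $\dim{x} \leftrightarrow \dim{y}$ exchange, and confirming that the two faces of $p =_{\dim{y}.\gamma \at x} q$ reduce to precisely the two identification types appearing in the statement.
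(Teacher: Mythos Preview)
Your proposal is correct and follows essentially the same approach as the paper: the paper's proof simply states that this is proven ``via the identification given by Theorem \ref{compdistr}'' in a manner similar to Corollary \ref{lem:hswap}, which is exactly the strategy you describe of packaging the line furnished by Theorem \ref{compdistr} inside $p =_{\dim{y}.-} q$. Your additional remarks about the $\dim{x} \leftrightarrow \dim{y}$ renaming and the endpoint checks are just the bookkeeping the paper leaves implicit.
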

\begin{proof}
In a manner similar to our proof of Corollary \ref{lem:hswap}, this can be proven via the identification given by Theorem \ref{compdistr}.
\end{proof}

Finally, we note that the distribution of inversion and composition (as in Theorems \ref{idistr} and \ref{compdistr}) over the identification type give us a new characterization of the groupoid laws: when identification types are taken as identifications we can characterize the groupoid structure via distribution as follows.

\begin{corollary}[Identification type groupoid laws] \label{idistr}
Suppose that $A$, $B$ and $C$ are $(\dim x, \dim y)$-types such that $A \dimsub{1}{x} \equiv B \dimsub{0}{x}$ and $B \dimsub{1}{x} \equiv C \dimsub{0}{x}$. Given any $p \in a=_{\dim x.A \dimsub{0}{y}} b$, $q \in c=_{\dim x.A \dimsub{1}{y}} d$, $r \in b=_{\dim x.B \dimsub{0}{y}}e$, $s \in d=_{\dim x.B \dimsub{1}{y}}f$, $t \in e=_{\dim x.C \dimsub{0}{y}}g$ and $u \in f=_{\dim x.C \dimsub{1}{y}}h$, we have the following:
\begin{enumerate}[(i)]
	\item $(p^{-1} \at x =_{\dim y.A^{-1}} q^{-1} \at x)^{-1} = (p \at x =_{\dim y.A} q \at x)$ 
	\item $(p \at x =_{\dim y.A} q \at x) \sq (p^{-1} \at x =_{\dim y.A^{-1}} q^{-1} \at x) = \refl_{(a =_{\dim y. A \dimsub{0}{x}} c)}$ 
		\item $(p^{-1} \at x =_{\dim y.A^{-1}} q^{-1} \at x) \sq (p \at x =_{\dim y.A} q \at x) = \refl_{(b =_{\dim y. A \dimsub{1}{x}} d)}$ 
		\item $(p \at x =_{\dim y.A} q \at x) \sq \refl_{(b =_{\dim y. A \dimsub{1}{x}} d)} = (p \at x =_{\dim y.A} q \at x)$
		\item $\refl_{(a =_{\dim y. A \dimsub{0}{x}} c)} \sq (p \at x =_{\dim y.A} q \at x) = (p \at x =_{\dim y.A} q \at x)$
		\item $\begin{multlined}[t]
((p \at x =_{\dim y.A} q \at x) \sq (r \at x =_{\dim y.B} s \at x)) \sq (t \at x =_{\dim y.C} u \at x) = \\
(p \at x =_{\dim y.A} q \at x) \sq ((r \at x =_{\dim y.B} s \at x) \sq (t \at x =_{\dim y.C} u \at x))
\end{multlined}$ \label{iassoc} 
\end{enumerate}
where $a \in A\dimsub{0}{x}\dimsub{0}{y}$, $b \in A\dimsub{1}{x}\dimsub{0}{y}$, $c \in A\dimsub{0}{x}\dimsub{1}{y}$, $d \in A\dimsub{1}{x}\dimsub{1}{y}$, $e \in B \dimsub{1}{x}\dimsub{0}{y}$, $f \in B \dimsub{1}{x}\dimsub{1}{y}$, $g \in C \dimsub{1}{x}\dimsub{0}{y}$ and $h \in C \dimsub{1}{x} \dimsub{1}{y}$.
\end{corollary}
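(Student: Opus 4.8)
The plan is to obtain each clause (i)--(vi) by reading off the corresponding homogeneous groupoid law of the universe $\univ{U}$ for the identification $(p \at x =_{\dim y.A} q \at x)$ --- which, since $A \in \univ{U}$ is an $(\dim x,\dim y)$-type, is a genuine inhabitant of $(a =_{\dim y. A \dimsub{0}{x}} c) =_{\dim x.\univ{U}} (b =_{\dim y. A \dimsub{1}{x}} d)$ --- together with the identification-type distribution results (Theorems \ref{idistr} and \ref{compdistr}). Throughout, $\univ{U}$ is degenerate in the directions being abstracted, which is exactly the hypothesis under which the homogeneous lemmas of \Cref{hgroupstruct} apply.

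I would dispatch (iv), (v), (vi) first, since they need no distribution. Clause (iv) is precisely the right unit law (Lemma \ref{lem:ru}) for $(p \at x =_{\dim y.A} q \at x)$: its right endpoint is the type $b =_{\dim y. A \dimsub{1}{x}} d$, so the reflexivity unit is $\refl_{(b =_{\dim y. A \dimsub{1}{x}} d)}$, and one only reorients the resulting identification by inversion. Clause (v) is the left unit law (Lemma \ref{lem:lu}) read the same way. Clause (vi) is the associativity law (Lemma \ref{lem:assoc}) applied to the three identifications $(p \at x =_{\dim y.A} q \at x)$, $(r \at x =_{\dim y.B} s \at x)$, $(t \at x =_{\dim y.C} u \at x)$; here the hypotheses $A \dimsub{1}{x} \equiv B \dimsub{0}{x}$ and $B \dimsub{1}{x} \equiv C \dimsub{0}{x}$ are exactly what makes the three identification types composable, since e.g. $(r \at x =_{\dim y.B} s \at x)\dimsub{0}{x} \equiv (b =_{\dim y.B\dimsub{0}{x}} d) \equiv (b =_{\dim y.A\dimsub{1}{x}} d)$.

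For (i)--(iii) I would first invoke Theorem \ref{idistr} to replace, up to identification in $\univ{U}$, the type $(p^{-1} \at x =_{\dim y.A^{-1}} q^{-1} \at x)$ by $(p \at x =_{\dim y.A} q \at x)^{-1}$. Then (ii) reduces to $(p \at x =_{\dim y.A} q \at x) \sq (p \at x =_{\dim y.A} q \at x)^{-1} = \refl_{(a =_{\dim y. A \dimsub{0}{x}} c)}$, which is right cancellation (Lemma \ref{lem:rc}); (iii) reduces to $(p \at x =_{\dim y.A} q \at x)^{-1} \sq (p \at x =_{\dim y.A} q \at x) = \refl_{(b =_{\dim y. A \dimsub{1}{x}} d)}$, which is left cancellation (Lemma \ref{lem:lc}); and (i) reduces to $((p \at x =_{\dim y.A} q \at x)^{-1})^{-1} = (p \at x =_{\dim y.A} q \at x)$, which is inversability (Lemma \ref{lem:inversab}). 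To splice the distribution identification into the outer operation, I would use (left) whiskering --- which, as in the Eckmann--Hilton argument, is just an instance of homogeneous composition --- for (ii) and (iii), and for (i) the congruence of $(\cdot)^{-1}$ under identification.

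The main obstacle is bookkeeping rather than mathematical: matching the endpoints and bound dimension names of the distribution squares from Theorems \ref{idistr} and \ref{compdistr} against those demanded by the homogeneous laws, tracking the heterogeneity of the distribution identification for $(p \at x =_{\dim y.A} q \at x)^{-1}$, and checking that heterogeneous inversion and composition of types commute with $\dim y$-substitution (so that $A^{-1}\dimsub{0}{y} \equiv (A\dimsub{0}{y})^{-1}$ and $A^{-1}\dimsub{1}{y} \equiv (A\dimsub{1}{y})^{-1}$, and analogously for $A \sq B$), which is what makes the rewrites in (i)--(iii) well-typed. A cleaner, if more tedious, alternative --- and the one I would fall back on --- is to repeat the method of Theorems \ref{idistr} and \ref{compdistr} directly: iterated path induction on $A$ (and $B$, $C$) and on $p$, $q$ (and $r$, $s$, $t$, $u$) reduces every clause to a statement about $\refl$ of an identification type, whereupon Lemmas \ref{lem:invu} and \ref{lem:compu} collapse the outer operations to reflexivity and each clause follows from the homogeneous groupoid structure.
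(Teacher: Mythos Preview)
Your approach is correct and matches the paper's: apply the corresponding homogeneous groupoid law in $\univ{U}$ to the identification $(p \at x =_{\dim y.A} q \at x)$ for each clause, using the inversion-distribution theorem to rewrite the term $(p^{-1}\at x =_{\dim y.A^{-1}} q^{-1}\at x)$ in (i)--(iii). The paper's proof is a one-line citation of exactly these lemmas (it also lists composition distribution for (vi)); your whiskering/congruence details and path-induction fallback spell out more than the paper does but follow the same route.
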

\begin{proof}
Routine use of (i) inversability (Lemma \ref{lem:inversab}), (ii) right cancellation (Lemma \ref{lem:rc}), (iii) left cancellation (Lemma \ref{lem:lc}), (iv) right unit (Lemma \ref{lem:ru}), (v) left unit (Lemma \ref{lem:lu}), (\ref{iassoc}) associativity (Lemma \ref{lem:assoc}) and distribution of inversion (Theorems \ref{idistr}) and composition (\ref{compdistr}) over the identification type in (i-iii) and (vi). 
\end{proof}

In the above corollary the type indexes of the outermost identification type were removed for the sake of brevity (at this point we trust that the reader is able to easily obtain this information by checking the types of the terms involved in the both sides of the expression).


\section{Future work}

There is much to be done yet in order to provide a cubical alternative to the informal type theory project of the homotopy type theory book \cite{hottbook}. We view this paper as opening up many possibilities for future work, including informal cubical accounts of the functorality of functions, fibrational aspect of type families, the higher groupoid structure of type formers (including function extensionality and univalence), higher inductive types, homotopy $n$-types and so on.

Another important area for future work is the mechanization of the proofs presented in this paper using the young proof assistant RedPRL \cite{redprl}. The type theory of RedPRL contains additional type formers not used in this paper such as \emph{line types}, which are identification types with arbitrary endpoints. Despite its practical advantages, however, line types have the same expressive power as identification types.
Part of the proofs contained in this paper have already been formalized using line types and are available online.\footnote{\url{https://github.com/RedPRL/sml-redprl/blob/master/example/groupoid.prl}}


\vspace{5mm}
\noindent \textbf{Acknowledgments} \; The author wishes to thank Robert Harper, Carlo Angiuli and Dan Licata for invaluable conversations on the topic covered herein. The author is also indebted to Carlo Angiuli and Thierry Coquand for helpful comments on an earlier draft of this paper. All mistakes are the authors' own.


\bibliography{ref}
\bibliographystyle{plain}

\end{document}